\documentclass{elsarticle}

\let\today\relax
\makeatletter
\def\ps@pprintTitle{%
    \let\@oddhead\@empty
    \let\@evenhead\@empty
    \def\@oddfoot{\footnotesize\itshape
         {} \hfill\today}%
    \let\@evenfoot\@oddfoot
    }
\makeatother

\usepackage{amssymb}
\usepackage{amsmath}

\journal{}
\usepackage{natbib}

\usepackage{adjustbox}

\usepackage[skipabove=8pt,skipbelow=4pt,innertopmargin=6pt,innerbottommargin=6pt]{mdframed}
\usepackage[T1]{fontenc}
\usepackage[utf8]{inputenc}
\usepackage{scrextend}
\usepackage{hyperref} 
\usepackage{ragged2e}

\raggedbottom

\usepackage{mathtools}
\usepackage{amsmath}
\usepackage{tikz, amssymb,bm,color, amsthm}

\usepackage{longtable}
\usepackage{makecell}
\usepackage[pagewise]{lineno}
\usetikzlibrary{arrows}
\usetikzlibrary{positioning, calc,chains,fit,shapes}
\usepackage{enumerate}
\usepackage{comment}
\usepackage{graphics}
\usepackage{longtable}
\usepackage{mdframed}
\usepackage{slashbox}
\usepackage{url}
\usepackage{framed}
\usepackage{array}
\usepackage{tabu}
\usepackage{lscape}
\usepackage{multirow}
\usepackage{ulem}
\usepackage{multicol}
\usepackage{placeins}
\usepackage{enumitem}

\usepackage{geometry}
\setcitestyle{notesep={: }}

\usepackage{natbib}

\mdfsetup{skipabove=2pt,skipbelow=2pt}


\newtheorem{theorem}{Theorem}
\newtheorem{proposition}[theorem]{Proposition}%

\newtheorem{remark}{Remark}%

\newtheorem{definition}{Definition}%

\newtheorem{lemma}[theorem]{Lemma}
\newtheorem{observation}[theorem]{Observation}
\newtheorem{corollary}[theorem]{Corollary}

\newtheorem{construction}{Construction}

\newcommand{\pname}[1]{\textnormal{\textsc{#1}}}

 %
 %
 %
 %

\newcommand{\TSAT}{\pname{3-SAT}}

\setlength{\fboxrule}{0pt}

\setlength\extrarowheight{15pt}

\newcounter{rowcntr}[table]
\renewcommand{\therowcntr}{\thetable.\arabic{rowcntr}}

\newcolumntype{N}{>{\refstepcounter{rowcntr}\therowcntr}c}

\AtBeginEnvironment{longtabu}{\setcounter{rowcntr}{0}}

\newcounter{rowcntra}[table]
\renewcommand{\therowcntra}{\arabic{rowcntra}}

\newcolumntype{M}{>{\refstepcounter{rowcntra}\therowcntra}c}

\AtBeginEnvironment{tabular}{\setcounter{rowcntra}{0}}

\newcommand{\XCD}{\pname{\raisebox{1.5pt}{$\chi$}$_{cd}$}}
\newcommand{\SC}{\pname{$\omega_{s}$}}
\newcommand{\NPC}{NP-complete}
\newcommand{\CDC}{\textsc{CD-coloring}}
\newcommand{\SCP}{\textsc{Separated-Cluster}}
\newcommand{\TD}{\textsc{Total Domination}}
\newcommand{\ISP}{\textsc{Independent Set}}
\newcommand{\CC}{\textsc{Clique Cover}}
\newcommand{\TETHS}{Further, the problem cannot be solved in time \ensuremath{2^{o(\mid V(G)\mid)}}, unless the ETH fails}
\usetikzlibrary{fit}
\thispagestyle{empty}
\usetikzlibrary{
  graphs,
  graphs.standard
}
\usepackage{subcaption}
\captionsetup{compatibility=false}
 \usepackage[hang,flushmargin]{footmisc}

\PassOptionsToPackage{unicode}{hyperref}
\PassOptionsToPackage{hyphens}{url}

\usetikzlibrary {arrows.meta}

\usetikzlibrary{arrows,automata}

\usepackage{lmodern}
\usepackage{thmtools}

\usepackage{xcolor}
\usepackage{textcomp}

\begin{document}

\begin{frontmatter}


\title{Total Domination, Separated-Cluster, CD-Coloring: Algorithms and Hardness}

\renewcommand\rightmark{\textit{Total Domination, Separated-Cluster, CD-Coloring: Algorithms and Hardness}}
\renewcommand\leftmark{\textit{Total Domination, Separated-Cluster, CD-Coloring: Algorithms and Hardness}}

\author[1]{Dhanyamol Antony\corref{cor1}}\ead{dhanyamolantony@iisertvm.ac.in}

\author[2] {L. Sunil Chandran\corref{cor1}}\ead{sunil@iisc.ac.in}

\author[3]{Ankit Gayen\corref{cor1}}\ead{ankit.gayen@ens-lyon.fr}

\author[2]{Shirish Gosavi\corref{cor1}}\ead{shirishgp@iisc.ac.in}

\author[4]{Dalu Jacob\corref{cor1}}\ead{dalujacob@maths.iitd.ac.in}

\cortext[cor1]{Corresponding author}

\affiliation[1]{organization={School of Data Science, Indian Institute of Science Education and Research Thiruvananthapuram},
            postcode={695551}, 
            state={Kerala},
            country={India}}
\affiliation[2]{organization={Computer Science and Automation department, Indian Institute of Science},
            city={Bengaluru},
            postcode={560012}, 
            state={Karnataka},
            country={India}}
\affiliation[3]{organization={D\'{e}partment d'informatique, \'{E}cole normale sup\'{e}rieure de Lyon},
            country={France}}
\affiliation[4]{organization={Department of Mathematics, Indian Institute of Technology Delhi},
            city={New Delhi},
            postcode={110016}, 
            country={India}}

\begin{abstract}
Domination and coloring are two classic problems in graph theory. In this paper, our major focus is on the  \CDC\ problem, which incorporates the flavors of both domination and coloring in it. Let $G$ be an undirected graph without isolated vertices.  A proper vertex coloring~$c:V(G)\rightarrow\{1,2,\ldots, k\}$ of $G$ is said to be a \textit{cd-coloring}, if for each color class $C_j$, where $j\in \{1,2,\ldots,k\}$, with respect to $c$, there exists a vertex $v_j$ in $G$ such that $C_j\subseteq N(v_j)$. The minimum integer $k$ for which there exists a \textit{cd-coloring} of $G$ using $k$ colors is called the  \textit{cd-chromatic number} of $G$, denoted as $\XCD(G)$. A set $S\subseteq V(G)$ is said to be a \textit{total dominating set}, if any vertex in $G$ has a neighbor in $S$. The \textit{total domination number} of $G$, denoted as \raisebox{1.5pt}{$\gamma$}$_t(G)$, is defined to be the minimum integer $k$ such that $G$ has a total dominating set of size $k$. A set $S\subseteq V(G)$ is said to be a \textit{separated-cluster} (also known as \textit{sub-clique}) if no two vertices in $S$ lie at a distance exactly 2 in $G$. The \textit{separated-cluster number} of $G$, denoted as $\SC(G)$, is defined to be the maximum integer $k$ such that $G$ has a separated-cluster of size $k$.

In this paper, we contribute to the literature connecting \CDC\ with \TD\ and \SCP. 
\begin{enumerate}[label=(\roman*)]
    \item It is known that \TD\ is \NPC\ for \textit{cubic graphs and triangle-free subcubic graphs}. We strengthen this result by proving that both the problems {\sc CD-Coloring} and \TD\ are \NPC, and do not admit any subexponential-time algorithms on \textit{triangle-free $d$-regular graphs, for each fixed integer $d\geq 3$}, assuming the Exponential Time Hypothesis.
    
    \item For any graph $G$, it is easy to see that $\XCD(G)\geq \SC(G)$. Analogous to the well-known notion of \textit{`perfectness'}, we introduce the notion of \textit{`cd-perfectness'}. We prove a sufficient condition for a graph $G$ to be \textit{cd-perfect} (i.e. $\XCD(H)= \SC(H)$, for any induced subgraph $H$ of $G$). Our sufficient condition turns out to be necessary for certain graph classes (like \textit{triangle-free} graphs).  
    
    \item The notion of `$cd$-perfectness' provides a framework to study the algorithmic complexity of \CDC\ and \SCP. By using this framework we provide both positive and negative results concerning the algorithmic complexity of \CDC\ and \SCP.
    
    \item We also settle an open question by proving that the \SCP\ problem is solvable in polynomial time for the class of \textit{interval graphs}. 
\end{enumerate}

\end{abstract}

\begin{keyword}
Total Domination \sep CD-coloring \sep Separated-Cluster \sep CD-perfectness
\end{keyword}
\end{frontmatter}



\section{Introduction}
\label{sec:intro}
Graph theorists are always fascinated to study the relationship among correlated graph parameters as well as their structural features and explore their algorithmic consequences on the graphs for which these parameters coincide.
For instance, the \textit{chromatic number} \raisebox{1.5pt}{$\chi$} and the \textit{clique number} $\omega$. It is a well-known fact that for any graph $G$, \raisebox{1.5pt}{$\chi$}$(G)\geq \omega(G)$. 
Perfect graphs are the graphs $G$ having the property that for any induced subgraph $H$ of $G$, \raisebox{1.5pt}{$\chi$}$(H)= \omega(H)$. The notion of perfectness unifies the results concerning \textit{colorings} and \textit{cliques} for many important graph classes. The celebrated \textit{`Strong Perfect Graph Theorem'} ~\cite{ChudnovskyStrong06} gives a different perspective on perfect graphs by characterizing them by their structure instead of parameters.
Along these lines of research, here we explore the interconnections between a few correlated graph parameters. Domination and coloring are two important and well-motivated problems in graph theory. The central problem, `\textit{cd-coloring},' in this paper incorporates the flavors of both domination and coloring. Even though the other two problems studied in this paper, `\textit{total domination}' and \textit{`separated-cluster'}  have their own significance and are of independent interest, they share an interesting relationship with the `\textit{cd-coloring}' problem. In this paper, we explore these relationships in detail and obtain several exciting algorithmic consequences. 

 Let $G$ be an undirected graph without isolated vertices and $n=\mid V(G)\mid$. A \textit{proper vertex coloring} $c:V(G)\rightarrow \{1,2,\ldots,k\}$ of $G$ is the partitioning of the vertex set into $k$ color classes, say $C_1, C_2,\ldots, C_k$ such that for each $i\in \{1,2,\ldots,k\}$, $C_i$ is an independent set in $G$. Then $c$ is said to be a \textit{cd-coloring} of $G$ if, for each $j\in \{1,2,\ldots,k\}$, there exists a vertex $v_j\in V(G)$ such that $C_j\subseteq N(v_j)$. i.e., each class $C_j$ in $G$ has to be \textit{dominated} by a vertex $v_j\in V(G)\setminus C_j$. Hence the name \textit{class-domination} coloring, which we shortly call $cd$-coloring. 
 It is easy to see that if each vertex in $G$ is assigned a distinct color, then it is a $cd$-coloring of $G$ using $n$ colors. The minimum integer $k$ for which there exists a \textit{cd-coloring} of $G$ using $k$ colors is called the  \textit{cd-chromatic number} of $G$, denoted as $\XCD(G)$. Given an input graph $G$, the problem \CDC\ seeks to find the \textit{cd-chromatic number} of $G$.
\CDC\  is known to be \NPC\ for several special classes of graphs, including bipartite graphs~\cite{MerouaneHCK15} and chordal graphs~\cite{ShaluVSCDComplex20}. It is polynomial-time solvable for graph classes like trees~\cite{ShaluKiruCDTreeCobipar21}, co-bipartite graphs~\cite{ShaluKiruCDTreeCobipar21}, split graphs~\cite{MerouaneHCK15}, and claw-free graphs~\cite{ShaluVSCDComplex20}. Shalu et al.~\cite{ShaluVSCDComplex20} obtained a complexity dichotomy for \CDC\ for $H$-free graphs. \CDC\  is also studied in the paradigm of parameterized complexity~\cite{BanKasRamDomClusGr23,KritRaiST21}, and approximation complexity~\cite{ChenTheDC14}. In addition to its theoretical significance, \CDC\ has a wide range of practical applications in social networks~\cite{ChenTheDC14} and genetic networks~\cite{KlavTavaDomHerProd21}. 

A set $S\subseteq V(G)$ is said to be a \textit{total dominating set} if any vertex in $G$ has a neighbor in $S$. Note that the induced subgraph $G[S]$ does not contain an isolated vertex.
The \textit{total domination number} of $G$, denoted as \raisebox{1.5pt}{$\gamma$}$_t(G)$, is defined to be the minimum integer $k$ such that $G$ has a total dominating set of size $k$. Given an input graph $G$, in the problem \TD, we intend to find the total domination number of $G$. Total domination is one of the most popular variants of domination, and there are hundreds of research papers dedicated to this notion in the literature~\cite{MerouaneHCK15,HoppManTotDom19,HennYeoTotDom13,ZhuApproxMinTotDom09}.

A set $S\subseteq V(G)$ is said to be a \textit{separated-cluster} if no two vertices in $S$ lie at a distance of exactly 2 in $G$. The authors of~\cite{ShaluKiruCDTreeCobipar21,ShaluSandhyaLowBound17} call it a \textit{subclique}.   Since we have decided to call it a separated cluster instead of subclique, we would like to explain the reason for this deviation. We say that $S\subseteq V(G)$ induces a \textit{cluster} in $G$, if $G[S]$ is a disjoint union of cliques.
Now, if $S\subseteq V(G)$ has no two vertices lying at a distance exactly 2 in $G$, then we can see that $S$ would induce a cluster in $G$ with the following additional property; no two vertices belonging to two distinct cliques in the cluster $S$ have a common neighbor in $G$. Using this perspective, we can infer that \SCP\ is a variant of the well-known problem, {\sc Cluster Vertex Deletion} where we intend to find the minimum number of vertices whose deletion results in a disjoint union of cliques~\cite{ShamSharCluGrMod04} (or, alternatively, to find the maximum number of vertices that can be partitioned into disjoint union of cliques). The \textit{separated-cluster number} of $G$, denoted as $\SC(G)$, is defined to be the maximum integer $k$ such that $G$ has a separated-cluster of size $k$. Given an input graph $G$, in the problem \SCP,\ our goal is to find the \textit{separated-cluster number} of $G$. \SCP\ is known to be \NPC\ for graph classes like bipartite graphs, chordal graphs, $3K_1$-free graphs~\cite{ShaluSandhyaLowBound17}, and polynomial-time solvable for trees, co-bipartite graphs, cographs, split graphs~\cite{ShaluSandhyaLowBound17} etc.

\vspace{-0.25cm}
\subsection{\textbf{Total domination and $cd$-coloring}}
\vspace{-0.1cm}
Graph coloring variants that also incorporate the concepts of domination are well studied~\cite{ShaluKiruCDTreeCobipar21, ShaluSandhyaLowBound17, ChellaliVolkDomChr04,ChellaliMaffDom12}. A close connection between \textsc{CD-Coloring} and \textsc{Total Domination} was established by Merouane et al.~\cite{MerouaneHCK15}.  In particular, it is known that for any triangle-free graph $G$, \raisebox{1.5pt}{$\gamma$}$_t(G)=\XCD(G)$.  As a consequence, the complexity results of \TD\ on the subclasses of triangle-free graphs are also applicable to \CDC. For instance, the result that \TD\ is \NPC\ on bipartite graphs with bounded degree 3~\cite{ZhuApproxMinTotDom09} implies that \CDC\ is also \NPC\ on this graph class. On the other hand, even though  \TD\ is known to be \NPC\ on cubic graphs~\cite{GarJohIntract79}, the algorithmic complexity of \TD\ on triangle-free cubic graphs, in fact, triangle-free $d$-regular graphs for each $d\geq 1$ is not known. Since regular graphs may not be triangle-free in general, the algorithmic complexity of \CDC\  for cubic graphs can not be inferred from the results on \TD.    

In this paper, we study the \CDC\ on regular graphs and prove the following theorem.

\begin{theorem}
    \label{thm:regular}
    \CDC\ is \NPC\ on triangle-free $d$-regular graphs, for each fixed integer $d\geq 3$. \TETHS.
\end{theorem}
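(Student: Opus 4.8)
The plan is to exploit the identity $\gamma_t(H)=\XCD(H)$, valid for every triangle-free graph $H$ (Merouane et al.): if I can build, from a $\TSAT$ instance $\varphi$, a triangle-free $d$-regular graph $G_\varphi$ together with a threshold $t$ such that $\varphi$ is satisfiable iff $\gamma_t(G_\varphi)\le t$, then, because $G_\varphi$ is triangle-free, the same equivalence holds verbatim with $\gamma_t(G_\varphi)$ replaced by $\XCD(G_\varphi)$, and the hardness transfers to \CDC\ at no extra cost. Membership of \CDC\ in \cclass{NP} is immediate, since a coloring together with, for each class, a vertex dominating it, is a polynomially checkable certificate. So the whole task reduces to engineering a controlled \TD\ reduction whose output happens to be triangle-free and regular, and then reading off the same threshold for $\XCD$.

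First I would settle the base case $d=3$ by a reduction from $\TSAT$ (or, to keep the output size linear, its bounded-occurrence variant). For each variable I would use a triangle-free \emph{choice} gadget possessing two symmetric states of equal cost---one encoding \emph{true}, the other \emph{false}---so that any minimum total dominating set is forced to commit to exactly one of them; for each clause I would attach a triangle-free gadget wired to the literal-vertices of its variables in such a way that the clause gadget can be totally dominated within the global budget precisely when at least one incident literal is set to \emph{true}. The gadgets must be designed so that (i) all internal vertices are automatically totally dominated from inside the gadget, which isolates each gadget's contribution to $\gamma_t$, and (ii) the inter-gadget connections never close a triangle---easy to guarantee by keeping every gadget bipartite and every cross-edge between the two independent sides. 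Summing the per-gadget contributions yields $t$, and both directions of the equivalence follow from the state-forcing property of the choice gadgets.

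Then I would promote the construction from degree $3$ to every fixed $d\ge 3$. The cleanest route is a \emph{degree-correction gadget}: a fixed triangle-free graph carrying one designated port of internal degree $d-1$ whose non-port vertices already have degree $d$, so that hanging it off a vertex $v$ of deficient degree raises $\deg(v)$ by one, keeps the graph triangle-free, dominates its own interior, and adds a known additive constant to $\gamma_t$. Iterating such gadgets makes the whole graph exactly $d$-regular while shifting the threshold by a precomputable amount. A more uniform alternative for stepping from $d$ to $d+1$ is to take two copies of the current graph joined by a perfect matching between corresponding vertices (the prism), which preserves triangle-freeness and yields $(d{+}1)$-regularity; here one must separately prove a clean relation between the $\gamma_t$ of the prism and of the base graph. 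I expect the main obstacle to be exactly this simultaneous bookkeeping: forcing every vertex---original \emph{and} auxiliary---to have degree exactly $d$, never creating a triangle, and still pinning $\gamma_t$ to a single threshold depending only on $\varphi$ and $d$, independent of the internal choices inside the gadgets.

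Finally, for the ETH statement I would keep every gadget of constant size and use a number of gadgets linear in $|\varphi|$, so that $|V(G_\varphi)|=O(|\varphi|)$. A $2^{o(|V(G_\varphi)|)}$ algorithm for \CDC\ on triangle-free $d$-regular graphs would then decide $\TSAT$ in time $2^{o(n+m)}$, contradicting the ETH via the sparsification lemma. This linear-size requirement is precisely what dictates reducing from a bounded-occurrence form of $\TSAT$ and insisting on constant-size gadgets throughout, rather than merely polynomial-size ones.
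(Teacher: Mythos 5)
Your overall architecture matches the paper's: exploit $\gamma_t(H)=\XCD(H)$ for triangle-free $H$, settle a base case, then lift to all $d\geq 3$ by attaching constant-size degree-correction gadgets, keeping everything linear for the ETH claim. The paper's gadget for odd $d$ is exactly your single-port spec (a root of internal degree $d-1$, all other vertices of degree $d$, triangle-free, contributing a fixed number of colors). But your ``cleanest route'' breaks at every even $d$, and for a reason you can check by hand: a graph with one vertex of degree $d-1$ and all remaining $N-1$ vertices of degree $d$ has degree sum $(d-1)+(N-1)d$, which is odd whenever $d$ is even, so for even $d$ no such single-port gadget exists at all. Your fallback, the prism (two copies joined by a perfect matching), is explicitly left without the needed relation between $\gamma_t$ of the prism and of the base graph, and no clean threshold relation of that kind is available, so the even case is a genuine hole. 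The paper resolves it with a \emph{two-port} gadget: starting from a triangle-free $(d-1)$-regular graph (which has evenly many vertices since $d-1$ is odd), it pairs up the vertices arbitrarily and attaches one gadget with two adjacent roots $a,b$, each of internal degree $d-1$, to each pair---this is the parity-respecting repair your plan is missing.

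Two further points where your plan understates the work. First, your base case proposes to build variable/clause gadgets for \TD\ on triangle-free cubic graphs from scratch; that is precisely the previously open content here (the paper notes \TD\ was known hard on cubic graphs, but \emph{not} on triangle-free cubic graphs), and your gadgets are left entirely unspecified. The paper takes a lighter route: it starts from the known NP-completeness and ETH-hardness of \TD\ on bipartite graphs of maximum degree $3$ (Proposition~\ref{pro:bipartite}) and only needs degree-correction gadgets to reach triangle-free cubic graphs, so no SAT gadget engineering is required. Second, your claim that internal self-domination ``isolates each gadget's contribution to $\gamma_t$'' is not automatic: the gadget roots are adjacent to base vertices, so a total dominating set can use roots to dominate the base graph, and ruling out that this ever pays is exactly the delicate bookkeeping. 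The paper sidesteps it by doing all accounting in $\XCD$ (via Observation~\ref{obs:cdcolor_reduce}, showing the gadget colors can never be reused outside, as in Lemmas~\ref{lem:cubic}, \ref{lem:d-regular odd}, and~\ref{lem:d-regular even}), and only transfers back to \TD\ at the very end through the triangle-free identity; if you insist on working in $\gamma_t$ throughout, you owe an additional argument for this additivity.
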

   It is fascinating to observe how \textit{cd-coloring},  a relatively new graph problem, sheds light on a classic problem of total domination. For instance, as an additional advantage, we can use Theorem~\ref{thm:regular} to obtain the following corollary which gives results on \TD\ by recalling the fact that for any triangle-free graph $G$, \raisebox{1.5pt}{$\gamma$}$_t(G)=\XCD(G)$. It is remarkable that the following corollary is stronger than the existing results of \TD\ on regular graphs.
 
 \begin{corollary}
\label{thm:total_domination_regular}
\TD\ on triangle-free $d$-regular graphs is \NPC, for any constant $d\geq 3$. \TETHS.    
\end{corollary}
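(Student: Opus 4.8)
The plan is to obtain this corollary as a direct transfer of hardness from Theorem~\ref{thm:regular}, using the identity \raisebox{1.5pt}{$\gamma$}$_t(G)=\XCD(G)$ of Merouane et al., which holds for every triangle-free graph $G$. Since every triangle-free $d$-regular graph is in particular triangle-free, this identity applies verbatim to the class on which Theorem~\ref{thm:regular} establishes hardness of \CDC; the two parameters therefore coincide on exactly the family of interest, and no fresh gadget construction is required.

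First I would note that \TD\ lies in \cclass{NP}: given $G$ and an integer $k$, one can verify in polynomial time that a proposed set $S$ of size at most $k$ is a total dominating set. For \cclass{NP}-hardness I would use the identity reduction. An instance $(G,k)$ of \CDC\ with $G$ triangle-free and $d$-regular is mapped to the same pair $(G,k)$, now read as an instance of \TD. Because $G$ is triangle-free, \raisebox{1.5pt}{$\gamma$}$_t(G)=\XCD(G)$, so $(G,k)$ is a yes-instance of \CDC\ if and only if it is a yes-instance of \TD. This map is computable in polynomial time, preserves both the graph and the threshold, and keeps us inside the class of triangle-free $d$-regular graphs. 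Combined with Theorem~\ref{thm:regular}, this yields \cclass{NP}-completeness of \TD\ on triangle-free $d$-regular graphs for every fixed $d\geq 3$.

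For the ETH lower bound I would observe that the identity reduction leaves $\mid V(G)\mid$ unchanged. Hence any hypothetical algorithm solving \TD\ in time $2^{o(\mid V(G)\mid)}$ on this class would solve \CDC\ within the same bound on the same class, contradicting the lower bound of Theorem~\ref{thm:regular}. There is no genuine obstacle within the corollary itself; all of the difficulty is front-loaded into Theorem~\ref{thm:regular}, whose reduction must simultaneously enforce triangle-freeness, exact $d$-regularity, and a linear bound on the number of vertices so that the ETH bound survives. The one point deserving a line of justification is that the parameter identity holds on the restricted class in question, which is immediate since triangle-free $d$-regular graphs are triangle-free.
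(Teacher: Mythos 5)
Your proposal is correct and matches the paper's own derivation: the paper obtains this corollary from Theorem~\ref{thm:regular} together with Proposition~\ref{pro:triangle-free} (the identity \raisebox{1.5pt}{$\gamma$}$_t(G)=\XCD(G)$ for triangle-free graphs), exactly as you do. Your additional remarks on \cclass{NP}-membership and on the identity reduction preserving $\mid V(G)\mid$ for the ETH bound are sound and merely make explicit what the paper leaves implicit.
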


\subsection{\textbf{Separated-Cluster and $cd$-coloring}}
\vspace{-0.05cm}
Interestingly, we can see that the  \textit{cd-chromatic number} $\XCD$ and \textit{separated-cluster number} $\SC$  follow a similar relationship as their classic counterparts of \textit{chromatic number} \raisebox{1.5pt}{$\chi$} and \textit{clique number} $\omega$. Since any two vertices in a separated-cluster of a graph $G$ cannot lie at a distance exactly 2 in $G$, we have $\XCD(G)\geq \SC(G)$ (as each vertex in $\SC(G)$ has to be in different color class). It is not difficult to see that, using the same idea of \textit{`Mycielskian'} construction~\cite{MycielColoriage55} (in classic coloring), we can have a family of graphs for which $\omega_s=2$ but $\XCD$ is arbitrarily large.

The following auxiliary graph construction is very useful in the study of separated-clusters and $cd$-coloring.

\begin{definition}[Auxiliary graph $G^*$~\cite{ShaluSandhyaLowBound17}]\label{def:aux}
    Given a graph $G$, the auxiliary graph $G^*$ is the graph having $V(G^*)=V(G)$ and $E(G^*)= \{uv:u,v\in V(G)$ and $d_G(u,v)=~2\}$, where  $d_G(u,v)$ denote the distance between the vertices $u$ and $v$ in $G$.
\end{definition}
From the definitions of the auxiliary graph $G^*$ of a graph $G$ and independence number  $\alpha(G^*)$  (the size of a maximum cardinality independent set in $G^*$), it is straightforward to observe that $\SC(G)=\alpha(G^*)$. Further, by the definition of the clique cover number $ k(G^*)$  (the minimum number of cliques needed to partition the vertex set of $G^*$), it is not difficult to infer that $\XCD(G)\geq k(G^*)$ (see Observation~\ref{obs:cliquecover} for details).
Note that the parameters $\alpha$ and $k$ are well-studied in the literature, particularly in connection with \textit{perfect graphs}.  \textit{A graph $G$ is perfect if, for any induced subgraph $H$ of $G$,  $\alpha(H)=k(H)$}.\\

\noindent\textbf{Our Results:} Motivated by the notion of perfect graphs in classic coloring theory, we introduce $cd$-perfectness.  
We say that a graph $G$ is \textit{$cd$-perfect} if for any induced subgraph $H$ of $G$, $\XCD(H)= \SC(H)$. Though some earlier researchers~\cite{ShaluKiruCDTreeCobipar21, ShaluSandhyaLowBound17} observed that each graph $G$ belonging to certain graph classes like co-bipartite graphs, trees etc. satisfy the condition that $\XCD(G)= \SC(G)$, it is in this paper, the notion of $cd$-perfectness is introduced for the first time. This general notion of \textit{cd-perfectness} allows us to unify several (existing) results concerning \textit{$cd$-coloring} and \textit{separated-clusters} for various graph classes.

As a first step to prove the structure of $cd$-perfect graphs, we prove a sufficient condition for a graph $G$ to satisfy the equality, $\XCD(G)= k(G^*)$  (see Theorem~\ref{thm:suff_cdclique}). 
This leads us to Theorem~\ref{thm:suff_cdsubclique}, where the collection of graphs $\mathcal{H}$ is as shown in Figure~\ref{fig:c6-free}. 

 \begin{figure}[!htbp]
 \centering
    \begin{subfigure}[b]{0.2\textwidth}
          \centering
          {\fbox{\begin{tikzpicture}[myv/.style={circle, draw, inner sep=1.5pt,line width=0.3mm}]
  \node (z) at (0,0) {};

  \node[myv] (a) at (-0.4,0.4) {};
  \node[myv] (b) at (0.4,0.4) {};
  \node[myv] (c) at (-0.4,0) {};
  \node[myv] (d) at (0.4,0) {};
  \node[myv] (e) at (-0.4,-0.4) {};
  \node[myv] (f) at (0.4,-0.4) {};

  \draw [line width=0.3mm]  (a) -- (b);
  \draw [line width=0.3mm] (b) -- (c);
  \draw [line width=0.3mm] (c) -- (d);
  \draw [line width=0.3mm] (d) -- (e);
  \draw [line width=0.3mm] (e) -- (f);
  \draw [line width=0.3mm] (f) -- (a);
\end{tikzpicture}}}  
          \caption{$C_6$}
          \label{fig:c6}
     \end{subfigure}
      \begin{subfigure}[b]{0.2\textwidth}
          \centering
          {\fbox{\begin{tikzpicture}[myv/.style={circle, draw, inner sep=1.5pt,line width=0.3mm}]
  \node (z) at (0,0) {};

  \node[myv] (a) at (-0.4,0.4) {};
  \node[myv] (b) at (0.4,0.4) {};
  \node[myv] (c) at (-0.4,0) {};
  \node[myv] (d) at (0.4,0) {};
  \node[myv] (e) at (-0.4,-0.4) {};
  \node[myv] (f) at (0.4,-0.4) {};

  \draw [line width=0.3mm](a) -- (b);
  \draw [line width=0.3mm](b) -- (c);
  \draw [line width=0.3mm](c) -- (d);
  \draw [line width=0.3mm](d) -- (e);
  \draw [line width=0.3mm](e) -- (f);
  \draw [line width=0.3mm](f) -- (a);

  \draw [line width=0.3mm] (b) -- (d);
\end{tikzpicture}}}  
          \caption{$C_6^1$}
          \label{fig:c61}
     \end{subfigure}
      \begin{subfigure}[b]{0.2\textwidth}
          \centering
          {\fbox{\begin{tikzpicture}[myv/.style={circle, draw, inner sep=1.5pt,line width=0.3mm}]
  \node (z) at (0,0) {};

  \node[myv] (a) at (-0.4,0.4) {};
  \node[myv] (b) at (0.4,0.4) {};
  \node[myv] (c) at (-0.4,0) {};
  \node[myv] (d) at (0.4,0) {};
  \node[myv] (e) at (-0.4,-0.4) {};
  \node[myv] (f) at (0.4,-0.4) {};

  \draw [line width=0.3mm] (a) -- (b);
  \draw [line width=0.3mm] (b) -- (c);
  \draw [line width=0.3mm] (c) -- (d);
  \draw [line width=0.3mm] (d) -- (e);
  \draw [line width=0.3mm] (e) -- (f);
  \draw [line width=0.3mm] (f) -- (a);

  \draw [line width=0.3mm] (b) -- (d);
  \draw [line width=0.3mm] (f) -- (d);
\end{tikzpicture}}}  
          \caption{$C_6^2$}
          \label{fig:c62}
     \end{subfigure}
      \begin{subfigure}[b]{0.2\textwidth}
          \centering
          {\fbox{\begin{tikzpicture}[myv/.style={circle, draw, inner sep=1.5pt,line width=0.3mm}]
  \node (z) at (0,0) {};

  \node[myv] (a) at (-0.4,0.4) {};
  \node[myv] (b) at (0.4,0.4) {};
  \node[myv] (c) at (-0.4,0) {};
  \node[myv] (d) at (0.4,0) {};
  \node[myv] (e) at (-0.4,-0.4) {};
  \node[myv] (f) at (0.4,-0.4) {};

  \draw [line width=0.3mm] (a) -- (b);
  \draw [line width=0.3mm] (b) -- (c);
  \draw [line width=0.3mm] (c) -- (d);
  \draw [line width=0.3mm] (d) -- (e);
  \draw [line width=0.3mm] (e) -- (f);
  \draw [line width=0.3mm] (f) -- (a);

  \draw [line width=0.3mm] (b) -- (d);
  \draw [line width=0.3mm] (f) -- (d);
  \draw [line width=0.3mm] (b) to [bend left=50] (f);
\end{tikzpicture}}}  
          \caption{$C_6^3$}
          \label{fig:c63}
     \end{subfigure}
     \caption{ Set of graphs in $\mathcal{H}$ }
    \label{fig:c6-free}
\end{figure}

\begin{theorem} \label{thm:suff_cdsubclique}
     Let $G$ be an $\mathcal{H}$-free graph and $G^*$ its  auxiliary graph. If $k(G^*)=\alpha(G^*)$ then $\XCD(G)=\SC(G)$. Consequently, if $G$ is $\mathcal{H}$-free and $G^*$ is perfect, then $\XCD(G)=\SC(G)$.
 \end{theorem}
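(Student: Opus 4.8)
The plan is to reduce the statement to a short chain of (in)equalities resting on Theorem~\ref{thm:suff_cdclique}, and to locate the genuine difficulty there rather than in the deduction itself. Recall the two facts recorded just before Definition~\ref{def:aux}: for every graph $G$ one has $\SC(G)=\alpha(G^*)$ straight from the definitions, and $\XCD(G)\ge k(G^*)$ by Observation~\ref{obs:cliquecover}. Moreover, in any graph a clique cover meets each independent set in at most one vertex per clique, so the standing inequality $k(G^*)\ge\alpha(G^*)$ always holds. Under the hypothesis $k(G^*)=\alpha(G^*)$ these combine to give $\XCD(G)\ge k(G^*)=\alpha(G^*)=\SC(G)$, so it remains only to prove the reverse bound $\XCD(G)\le k(G^*)$.

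I would obtain $\XCD(G)\le k(G^*)$ from Theorem~\ref{thm:suff_cdclique}: since $G$ is $\mathcal{H}$-free, the sufficient condition of that theorem is met and yields $\XCD(G)=k(G^*)$; this is the single place where $\mathcal{H}$-freeness is used. The mechanism behind that upper bound is to take a minimum clique cover $Q_1,\dots,Q_{k(G^*)}$ of $G^*$: each $Q_i$ consists of vertices pairwise at distance $2$ in $G$, hence is independent in $G$, so coloring $V(G)$ by these classes is already a proper coloring with $k(G^*)$ colors. To make it a cd-coloring one needs every $Q_i$ to be class-dominated, i.e.\ to have a common neighbour in $G$, and this is exactly what forbidding $\mathcal{H}$ enforces: a set of pairwise-distance-$2$ vertices with no common neighbour produces an induced six-cycle on those vertices together with their pairwise common neighbours, whose only admissible extra chords run among the common neighbours and therefore realise one of $C_6,C_6^1,C_6^2,C_6^3$ from Figure~\ref{fig:c6-free}.

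Combining the two directions yields the clean chain
\[
\SC(G)=\alpha(G^*)=k(G^*)=\XCD(G),
\]
in which the middle equality is the hypothesis and the outer two are the standing identities. For the ``Consequently'' clause I would only add that a perfect graph $H$ satisfies $k(H)=\alpha(H)$ (the complementary form of the defining equality $\chi=\omega$, by the perfect graph theorem); applied to $H=G^*$ this supplies the hypothesis $k(G^*)=\alpha(G^*)$ for free, and the first part then gives $\XCD(G)=\SC(G)$.

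I expect the main obstacle to sit inside Theorem~\ref{thm:suff_cdclique} rather than in the present deduction, namely in upgrading the common-neighbour property from triples to \emph{every} clique of $G^*$: for a clique of size larger than three one must argue, via the forbidden configurations, that the pairwise common neighbours can be promoted to a single common neighbour of the whole clique, an inductive Helly-type step that is the only genuinely delicate part. Beyond that, the proof reduces to checking that $\SC(G)=\alpha(G^*)$ and $\XCD(G)\ge k(G^*)$ hold for all graphs, so that no link in the equality chain covertly reuses $\mathcal{H}$-freeness outside the single appeal to Theorem~\ref{thm:suff_cdclique}.
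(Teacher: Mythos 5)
Your proposal is correct and follows exactly the paper's route: the paper derives Theorem~\ref{thm:suff_cdsubclique} immediately from Proposition~\ref{pro:omega_s} ($\SC(G)=\alpha(G^*)$) and Theorem~\ref{thm:suff_cdclique} ($\XCD(G)=k(G^*)$ for $\mathcal{H}$-free $G$), with the ``consequently'' clause supplied by perfectness of $G^*$ giving $k(G^*)=\alpha(G^*)$, just as you argue. Your added sketch of the mechanism inside Theorem~\ref{thm:suff_cdclique} (the inductive Helly-type step of Lemma~\ref{lem:neighborhood}) correctly locates the real work, but since you invoke that theorem as given — as the paper does — the deduction itself is the same.
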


As an immediate corollary of Theorem~\ref{thm:suff_cdsubclique}, we obtain a sufficient condition for a graph to be $cd$-perfect (see Corollary~\ref{corr:suffcdperfect}). 
 We then use this result to bring several graph classes, like \textit{co-bipartite graphs}, \textit{chordal bipartite graphs}, etc., under the common umbrella of \textit{$cd$-perfect} graphs. Even though these results are structural in nature, they have several exciting algorithmic consequences. It is interesting to note that the same framework can be used as a tool to derive both positive and negative results concerning the algorithmic complexity of \CDC\ and \SCP. In particular, we have the following results.

\begin{enumerate}[label=(\roman*)]
   
    \item A beautiful interplay between the problems, {\sc Total Domination}, \CDC, and \SCP\ can be witnessed in Theorem~\ref{thm:chordalbip}, where we use Corollary~\ref{corr:suffcdperfect}, Theorem~\ref{thm:totdom}, and the fact that \raisebox{1.5pt}{$\gamma$}$_t(G)=\XCD(G)$ for triangle free-graphs $G$, to prove that the three problems mentioned above are equivalent and solvable in $O(n^2)$ time for chordal bipartite graphs. Consequently, this result improves and generalizes the existing $O(n^3)$-time algorithm for \SCP\ on the class of $P_6$-free chordal bipartite graphs~\cite{ShaluKiruP5Free22}. 
\begin{theorem}\label{thm:chordalbip}
    Let $G=(A,B,E)$ be a chordal bipartite graph. Then $G$ is $cd$-perfect. Consequently, {\sc Total Domination}, \CDC, and \SCP\  are all equivalent problems for chordal bipartite graphs and can be solved in $O(n^2)$ time.
\end{theorem}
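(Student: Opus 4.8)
The plan is to establish $cd$-perfectness through Corollary~\ref{corr:suffcdperfect} and then read off the algorithmic consequences. The starting observation is that the class of chordal bipartite graphs is hereditary: every induced subgraph of a bipartite graph is bipartite, and deleting vertices cannot create a new induced cycle of length at least $6$. Hence it suffices to verify, for an arbitrary chordal bipartite $G$, the two hypotheses of the sufficient condition, namely that $G$ is $\mathcal{H}$-free and that its auxiliary graph $G^*$ (Definition~\ref{def:aux}) is perfect; applying this to each induced subgraph $H$ of $G$ (which is again chordal bipartite) then yields $\XCD(H)=\SC(H)$ for all such $H$, i.e. $cd$-perfectness.

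First I would dispose of $\mathcal{H}$-freeness. Since $G$ is bipartite it is triangle-free, and each of the three chorded graphs $C_6^1$, $C_6^2$, $C_6^3$ of Figure~\ref{fig:c6-free} contains a triangle; hence none of them can occur even as a subgraph of $G$. The remaining member $C_6$ is a chordless $6$-cycle, and by definition a chordal bipartite graph has no induced cycle of length at least $6$, so $C_6$ is excluded as well. Thus $G$ is $\mathcal{H}$-free.

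The heart of the proof, and the step I expect to be the main obstacle, is showing that $G^*$ is perfect. I would first observe that in a bipartite graph two vertices lying at distance exactly $2$ necessarily belong to the same side, so $G^*$ has no edge between $A$ and $B$ and therefore decomposes as the disjoint union $G^*[A]\sqcup G^*[B]$. Since a disjoint union of chordal (hence perfect) graphs is chordal, it is enough to prove that $G^*[A]$ (and symmetrically $G^*[B]$) is chordal. Suppose not, and let $a_1a_2\cdots a_m a_1$, with $m\geq 4$, be an induced cycle of $G^*[A]$. For each edge $a_ia_{i+1}$ of this cycle (indices mod $m$) choose a common $G$-neighbor $b_i\in B$, which exists precisely because $d_G(a_i,a_{i+1})=2$. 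The key point is that any two \emph{non}-consecutive cycle vertices $a_i,a_j$ are non-adjacent in $G^*[A]$ and hence have \emph{no} common neighbor in $G$; from this single fact I would deduce both that the $b_i$ are pairwise distinct and that the closed walk $a_1b_1a_2b_2\cdots a_mb_ma_1$ has no chords (a chord $a_jb_i$, or a coincidence $b_i=b_j$, would force two non-consecutive $a$'s to share a neighbor, where one uses $m\geq 4$ to rule out triangles in the cycle). Consequently this walk is an induced cycle of $G$ of length $2m\geq 8$, contradicting chordal bipartiteness. This contradiction shows $G^*[A]$ is chordal, completing the proof that $G^*$ is perfect, and with it the $cd$-perfectness of $G$ via Corollary~\ref{corr:suffcdperfect}.

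Finally I would assemble the algorithmic statement. As $G$ is bipartite it is triangle-free, so $\gamma_t(G)=\XCD(G)$ by the result of Merouane et al.~\cite{MerouaneHCK15}; combined with the just-proved equality $\XCD(G)=\SC(G)$ this gives $\gamma_t(G)=\XCD(G)=\SC(G)$, so the three optimization problems share a common optimal value and are in this sense equivalent on chordal bipartite graphs. It then remains only to compute one of them efficiently: invoking Theorem~\ref{thm:totdom} to evaluate $\gamma_t(G)$ in $O(n^2)$ time simultaneously solves \TD, \CDC, and \SCP\ within the same bound, which in particular recovers and strengthens the $O(n^3)$-time algorithm for \SCP\ on $P_6$-free chordal bipartite graphs.
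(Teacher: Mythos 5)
Your proposal is correct and follows essentially the same route as the paper: it establishes $\mathcal{H}$-freeness, decomposes $G^*$ as $G^2[A]\uplus G^2[B]$ (the paper's Observation~\ref{obs:bipsquare}), proves each side chordal by lifting an induced cycle of $G^*[A]$ to an induced cycle of length at least $8$ in $G$ (the paper's Observation~\ref{obs:chordalbip}), and then applies Corollary~\ref{corr:suffcdperfect}, Proposition~\ref{pro:triangle-free}, and Theorem~\ref{thm:totdom} exactly as the paper does. The only difference is that you spell out details the paper leaves implicit (the triangle argument for $\mathcal{H}$-freeness, and the distinctness/chord-freeness of the lifted cycle), which is fine.
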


     \item  We provide a unified approach (most of the time with an improvement) for finding polynomial-time algorithms for \CDC\  on certain graph classes. For instance, we prove the following theorem. 
    \begin{theorem} \label{thm:properalpha}
The \CDC\ problem can be solved in $O(n^{2.5})$ time for proper interval graphs and $3K_1$-free graphs.
\end{theorem}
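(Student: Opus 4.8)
The plan is to show that for \emph{both} graph classes every color class in any $cd$-coloring has size at most two, which collapses the problem to a maximum matching computation in the auxiliary graph $G^*$. Recall that a color class $C_j$ of a $cd$-coloring is an independent set contained in the open neighborhood $N(v_j)$ of some vertex $v_j$; hence $|C_j|\le \alpha(G[N(v_j)])$, so it suffices to bound the independence number of neighborhoods. For a $3K_1$-free graph this is immediate, since $\alpha(G)\le 2$ forces every independent set, and in particular every color class, to have size at most $2$. For a proper interval graph I would argue structurally: fixing a proper interval representation with distinct endpoints, every neighbor of a vertex $v$ with interval $[l_v,r_v]$ must contain $l_v$ or $r_v$, because an interval overlapping $[l_v,r_v]$ but containing neither endpoint would satisfy $l_v<a\le b<r_v$ and thus be properly contained in $[l_v,r_v]$, contradicting properness. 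Consequently $N(v)$ splits into the two cliques of neighbors through $l_v$ and through $r_v$, giving $\alpha(G[N(v)])\le 2$ and hence $|C_j|\le 2$.

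With the size bound in hand, I would establish the exact identity $\XCD(G)=n-\nu(G^*)$, where $\nu(G^*)$ is the maximum matching number of $G^*$. A size-$2$ class $\{x,y\}$ is a non-edge of $G$ (proper coloring) whose two vertices share a common neighbor (domination), i.e.\ a pair with $d_G(x,y)=2$, which is precisely an edge of $G^*$. Conversely, any pair at distance $2$ forms a legitimate class, and every singleton $\{x\}$ is dominated by an arbitrary neighbor of $x$ (which exists since $G$ has no isolated vertex). Therefore $cd$-colorings are in bijection with partitions of $V(G)$ into edges of $G^*$ and singletons; a partition using a matching $M\subseteq E(G^*)$ uses $n-|M|$ classes, so minimizing the number of colors is equivalent to maximizing $|M|$, yielding $\XCD(G)=n-\nu(G^*)$.

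The algorithm then constructs $G^*$ and computes a maximum matching in it. Building $G^*$ amounts to identifying all vertex pairs at distance exactly $2$, which can be read off from the square of the adjacency matrix $A$ (a non-edge $uv$ lies in $E(G^*)$ iff $(A^2)_{uv}>0$) in $O(n^{2.373})\subseteq O(n^{2.5})$ time via fast matrix multiplication. A maximum matching in the possibly dense graph $G^*$ can be found by the Micali--Vazirani algorithm in $O(\sqrt{n}\,|E(G^*)|)=O(n^{2.5})$ time, after which the value $n-\nu(G^*)$ is returned. The matching step dominates, giving the claimed $O(n^{2.5})$ bound for both classes.

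The main obstacle is the structural lemma for proper interval graphs, $\alpha(G[N(v)])\le 2$: the reduction to matching is clean only once color classes are guaranteed to have size at most two, and it is exactly here that properness of the representation (ruling out an interval nested strictly inside $[l_v,r_v]$) is essential — the statement fails for general interval graphs, where a vertex may have three pairwise non-adjacent neighbors. A secondary point to verify carefully is that the matching formulation exploits only cliques of size at most two in $G^*$, so that $\XCD(G)=n-\nu(G^*)$ rather than the potentially smaller clique-cover value $k(G^*)$; this is consistent with, yet genuinely sharper than, the general bound $\XCD(G)\ge k(G^*)$, and it also explains why $cd$-coloring stays tractable on $3K_1$-free graphs even though computing $\SC(G)=\alpha(G^*)$ is \cclass{NP}-hard there, since maximum matching in $G^*$ is always polynomial while maximum independent set in $G^*$ need not be.
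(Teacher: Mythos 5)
Your proposal is correct, but it takes a genuinely different route from the paper. The paper runs through its $\mathcal{H}$-free machinery: Lemma~\ref{lem:neighborhood} (an induction showing every clique of $G^*$ has a common dominating neighbor in $G$) yields $\XCD(G)=k(G^*)$ via Theorem~\ref{thm:suff_cdclique}, then Observation~\ref{obs:propertriangle} shows $G^*$ is triangle-free (a triangle in $G^*$ plus a common dominator would give a claw), so the clique cover of $G^*$ degenerates to edges and singletons and $k(G^*)=n-|M|$ for a maximum matching $M$. You instead bound color classes directly: $\alpha(G[N(v)])\le 2$ in both classes ($3K_1$-freeness trivially; for proper interval graphs via the endpoint argument that every neighbor's interval must contain $l_v$ or $r_v$, which is a correct proof of claw-freeness), and then prove $\XCD(G)=n-\nu(G^*)$ outright, using only the trivial observation that a pair at distance~$2$ has a common neighbor --- i.e., the $|K|=2$ base case of Lemma~\ref{lem:neighborhood}, with the induction entirely avoided. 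Your argument is more elementary and in effect establishes the stronger, self-contained fact that $\XCD(G)=n-\nu(G^*)$ for every claw-free graph without isolated vertices, covering both classes uniformly; what it gives up is the connection to $k(G^*)$ and $\SC$ that the paper's framework exploits elsewhere (cd-perfectness, the chordal bipartite and $C_6$-free bipartite results). On the algorithmic side you also diverge harmlessly: the paper builds $G^*$ in $O(n^2)$ using specialized results (squares of proper interval graphs; the co-bipartite dichotomy of Lemma~\ref{lem:nocobipatite} for $3K_1$-free graphs), whereas you use one uniform fast-matrix-multiplication step, which is slower but still within the $O(n^{2.5})$ budget dominated by the matching computation; and your citation of Micali--Vazirani is actually the appropriate one, since $G^*$ need not be bipartite and the paper's reference to the Hopcroft--Karp bound implicitly relies on its nonbipartite extension.
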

     
    \item We also derive the following hardness results for $C_6$-free bipartite graphs.
    \begin{theorem}\label{thm:C_6free hard}
    The problems  \CDC\ and \SCP\ are NP-Complete for $C_6$-free bipartite graphs.
\end{theorem}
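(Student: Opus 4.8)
The plan is to prove both hardness results with a single reduction, using the subdivision graph as the gadget so that the distance-$2$ structure captured by $G^*$ and the total-domination structure of $G$ can each be read off directly from the source instance.

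First, I would reduce from \IS\ restricted to triangle-free graphs, which is classically known to be \NPC. Given a triangle-free graph $H$ with no isolated vertices (isolated vertices can be stripped off in a preprocessing step, adjusting $\alpha$ additively) together with an integer $k$, construct $G=S(H)$, the \emph{subdivision graph}: $V(G)=V(H)\cup E(H)$, and $v\in V(H)$ is adjacent to $w_e\in E(H)$ exactly when $v$ is an endpoint of $e$. This $G$ is bipartite with parts $V(H)$ and $E(H)$, has no isolated vertices, and is built in polynomial time. It is also $C_6$-free: being bipartite, any induced $C_6$ must alternate three vertices of $V(H)$ with three of $E(H)$, and such a cycle forces a triangle in $H$; since $H$ is triangle-free, $G$ has no induced $C_6$. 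Both \CDC\ and \SCP\ are clearly in \cclass{NP}, so it remains to establish hardness.

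For \SCP, I would analyse the auxiliary graph $G^*$. Two vertices of $V(H)$ are at distance exactly $2$ in $G$ iff they are adjacent in $H$, so $G^*[V(H)]=H$; two vertices of $E(H)$ are at distance $2$ iff the corresponding edges share an endpoint, so $G^*[E(H)]=L(H)$, the line graph of $H$; and a vertex of $V(H)$ and a vertex of $E(H)$ are always at odd distance, hence nonadjacent in $G^*$. Therefore $G^*=H\sqcup L(H)$, and using $\SC(G)=\alpha(G^*)$ we get $\SC(G)=\alpha(H)+\alpha(L(H))=\alpha(H)+\nu(H)$, where $\nu(H)$ is the matching number. Since $\nu(H)$ is computable in polynomial time, $\alpha(H)\ge k$ iff $\SC(G)\ge k+\nu(H)$, which settles \SCP.

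For \CDC, I would use that $G$ is triangle-free, so $\XCD(G)=\gamma_t(G)$. The key step is to characterise the total dominating sets $S$ of $S(H)$: dominating every vertex $w_e$ forces $S\cap V(H)$ to be a vertex cover of $H$, dominating every vertex $v$ forces $S\cap E(H)$ to be an edge cover of $H$, and conversely any such pair yields a total dominating set; these two requirements are independent. Consequently $\gamma_t(S(H))=\tau(H)+\rho(H)$, and Gallai's identities $\tau(H)=n-\alpha(H)$ and $\rho(H)=n-\nu(H)$ (with $n=|V(H)|$, valid as $H$ has no isolated vertices) give $\XCD(G)=2n-\alpha(H)-\nu(H)$. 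Hence $\alpha(H)\ge k$ iff $\XCD(G)\le 2n-k-\nu(H)$, settling \CDC. The main obstacle is precisely this total-domination analysis of $S(H)$: showing that the vertex-cover constraint on the $V(H)$ side and the edge-cover constraint on the $E(H)$ side decouple exactly, so that a minimum total dominating set splits into a minimum vertex cover plus a minimum edge cover; the remaining ingredients are routine distance bookkeeping for $G^*$ and the standard Gallai identities. I would also stress that one cannot shortcut the \CDC\ bound through Theorem~\ref{thm:suff_cdsubclique}, since $G^*=H\sqcup L(H)$ need not be perfect (for instance when $H$ contains an induced $C_5$), which is exactly why the explicit total-domination computation is needed.
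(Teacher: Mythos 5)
Your proof is correct, and it takes a genuinely different route from the paper. The paper (Construction~\ref{cons:c6-freebipartite}, Lemmas~\ref{lem:isomprphic} and~\ref{lem:diamondC_6}) reduces from \CC\ and \ISP\ on \emph{diamond-free} graphs: it builds the vertex--maximal-clique incidence bipartite graph augmented with a universal vertex $u$, shows the result is $C_6$-free, and then obtains both hardness results uniformly from the framework equalities $\XCD(B_G)=k(B_G^*)=k(G)+1$ (via Theorem~\ref{thm:suff_cdclique}) and $\SC(B_G)=\alpha(B_G^*)=\alpha(G)+1$ (via Proposition~\ref{pro:omega_s}); diamond-freeness is what bounds the number of maximal cliques and rules out induced $C_6$'s. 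You instead reduce from \ISP\ on triangle-free graphs via the subdivision graph $S(H)$: your distance bookkeeping giving $S(H)^*=H\sqcup L(H)$ is right and yields \SCP\ hardness through $\SC(S(H))=\alpha(H)+\nu(H)$, and your total-domination analysis ($S\cap V(H)$ a vertex cover, $S\cap E(H)$ an edge cover, constraints on disjoint parts so $\gamma_t(S(H))=\tau(H)+\rho(H)$) combined with Proposition~\ref{pro:triangle-free} and Gallai's identities correctly gives $\XCD(S(H))=2n-\alpha(H)-\nu(H)$; the additive offsets $\nu(H)$ are polynomial-time computable, so both are valid Karp reductions. What your approach buys: a single elementary gadget, one standard source problem, and closed-form formulas, with no need for the maximal-clique enumeration that diamond-freeness enables. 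What the paper's buys: both problems fall out of the same two one-line equalities of the $G^*$ framework, and it simultaneously transfers \CC\ hardness (rather than deriving everything from $\alpha$). One refinement to your closing remark: while Theorem~\ref{thm:suff_cdsubclique} is indeed unavailable (as $H\sqcup L(H)$ need not be perfect), Theorem~\ref{thm:suff_cdclique} requires no perfectness and \emph{does} apply to $S(H)$ --- it is bipartite and $C_6$-free, hence $\mathcal{H}$-free --- giving $\XCD(S(H))=k(H)+k(L(H))$, which for triangle-free $H$ equals $(n-\nu(H))+\tau(H)$ and recovers your formula; so your explicit total-domination computation, while perfectly sound, is not forced.
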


    \item Further, as an additional result, we prove the following theorem concerning separated-cluster on interval graphs which settles an open problem in~\cite{ShaluSandhyaLowBound17}.
    \begin{theorem}
    \label{thm:sepiterval}
    The \SCP\ problem in interval graphs can be solved in polynomial time.
\end{theorem}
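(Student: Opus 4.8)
The plan is to reduce \SCP\ on interval graphs to a longest-path computation in an acyclic auxiliary digraph, after first establishing a clean structural description of separated-clusters directly in the interval representation. I would fix an interval representation $\{I_v=[l_v,r_v]:v\in V(G)\}$ of $G$ (computable in linear time, and, after an infinitesimal perturbation preserving the intersection pattern, with all endpoints distinct) and record the elementary fact that any separated-cluster $S$ induces a disjoint union of cliques: if $u,v$ and $v,w$ are adjacent pairs inside $S$ then $u,w$ lie within distance $2$, and since distance exactly $2$ is forbidden they must be adjacent, so adjacency is transitive on $S$. Writing $S=C_1\cup\dots\cup C_t$ for these cliques, ordered left to right, and using the Helly property of intervals (each $C_i$ has a common point and its intervals union to a sub-interval $[a_i,R_i]$), I would then prove the key \emph{locality lemma}: $S$ is a separated-cluster if and only if every two \emph{consecutive} cliques $C_i,C_{i+1}$ are separated. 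The forward direction is trivial; for the converse I would show that a common neighbour $w$ of $C_i$ and $C_j$ (with $j>i$) must, by the ordering $R_i<L_{i+1}\le R_{i+1}<\dots$, contain the common point of every intermediate clique, and hence already witnesses a forbidden distance-$2$ pair (or an edge) between $C_i$ and $C_{i+1}$, contradicting consecutive separation.

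The next step is to make the separation condition explicit. For consecutive cliques with $R_i=\max_{v\in C_i} r_v$ and $L_{i+1}=\min_{v\in C_{i+1}} l_v$, there is no edge between them precisely when $R_i<L_{i+1}$, and no common neighbour precisely when no interval of $G$ contains both points $R_i$ and $L_{i+1}$. Setting $\mu(x):=\min\{\,l_w:w\in V(G),\ r_w\ge x\,\}$, the latter condition becomes simply $R_i<\mu(L_{i+1})$. I would also observe that each clique may be taken \emph{extent-saturated}: adding any interval that contains the clique's common point and lies within its current left/right extent keeps it a clique, leaves $L_i$ and $R_i$ unchanged, and hence does not affect separation. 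Therefore it suffices to consider, for each candidate common point $p$ (one of the $n$ left endpoints, since every interval of a clique contains the maximum left endpoint of that clique) and each pair of trimming thresholds $\lambda\le p\le\rho$, the clique $C(p,\lambda,\rho)=\{\,v:l_v\le p\le r_v,\ \lambda\le l_v,\ r_v\le\rho\,\}$; these $O(n^3)$ cliques contain every clique that can occur in an optimal solution.

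With the locality lemma in hand, a maximum separated-cluster is exactly a sequence of extent-saturated cliques $C_1,\dots,C_t$ with $R_i<\mu(L_{i+1})$ maximising $\sum_i|C_i|$. I would formalise this as a maximum-weight path in a directed acyclic graph whose nodes are the candidate cliques above, each weighted by its size, with an arc from $C$ to $C'$ whenever $\mu(L_{C'})>R_C$; equivalently, as the dynamic program
\[
\mathrm{best}(R)\;=\;\max_{C:\,\max_{v\in C} r_v=R}\ \Bigl[\,|C|\;+\;\max\bigl(0,\ \max\{\mathrm{best}(R'):R'<\mu(L_C)\}\bigr)\,\Bigr],
\]
processed in increasing order of $R$ (legal since $R'<\mu(L_C)\le L_C\le R$), with answer $\omega_s(G)=\max_R\mathrm{best}(R)$. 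Since there are only $O(n)$ distinct right extents and the quantities $\mu(\cdot)$, the clique sizes, and the relevant prefix maxima are all precomputable, the entire procedure runs in polynomial time.

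The main obstacle is not the dynamic program, which is routine once the reduction is set up, but the two structural facts that justify it. The delicate one is the locality lemma: proving that consecutive separation forces global separation requires the interval-geometry argument above, showing that any bridging vertex between non-consecutive cliques is forced through the intermediate cliques' common points. The second subtlety is the genuine tension inside a single clique — enlarging it increases its size but may push $L_i$ leftwards or $R_i$ rightwards and thereby break separation with a neighbour — which is precisely why the dynamic-programming state records the right extent $R$ of the last chosen clique and why extent-saturation, rather than simply taking maximal cliques, is the correct normal form.
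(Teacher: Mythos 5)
Your proposal is correct, and its first half coincides with the paper's own strategy: your extent-saturated cliques $C(p,\lambda,\rho)$ are, up to parametrization, exactly the paper's pruned cliques $C_i^{lr}(v^p,u^q)$ forming the $O(n^3)$-sized family $\hat{\mathcal{C}}$, and your saturation argument plays the role of the paper's Lemma~\ref{lem:maxsep} (there proved via left-/right-conflicts and Observation~\ref{obs:noconflict}), establishing that every clique of a maximum separated-cluster may be assumed to come from the candidate family. Where you genuinely diverge is the algorithmic half. The paper builds a weighted conflict graph $G_c$ on $\hat{\mathcal{C}}$, proves $G_c$ is a cocomparability graph via an umbrella-free ordering by right endpoints of Helly regions (Theorem~\ref{thm:cocomp}), and invokes the known linear-time maximum-weight independent set algorithm for cocomparability graphs as a black box; you instead prove a locality lemma --- separation of \emph{consecutive} cliques, captured by the single numeric test $R_i<\mu(L_{i+1})$, already implies global separation, since any bridging interval must cover $[R_i,L_{i+1}]$ --- and then solve a longest-path problem on a DAG keyed by right extents. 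The two are close cousins: your locality lemma is essentially the geometric content of the paper's umbrella-freeness proof, whose three cases likewise reduce a ``long'' conflict $K_iK_l$ to a conflict involving the middle clique $K_j$. What your route buys is self-containedness (no cocomparability characterization, no cited MWIS algorithm) and a better explicit bound: the conflict graph on $n^3$ vertices can have order $n^6$ edges, so the paper's ``linear-time'' MWIS call costs $O(n^6)$ in the worst case, whereas your DP with precomputed $\mu$, clique sizes, and prefix maxima runs in roughly $O(n^4)$. What the paper's route buys is modularity: once cocomparability is established, correctness of the reduction (Theorem~\ref{thm:reduction}) needs no ordering-specific dynamic-programming argument, and the conflict-graph formulation packages the three separation conditions uniformly. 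Your proof sketch is sound throughout; the only step worth writing out carefully is that extent-saturation cannot make a vertex collide with, or become adjacent to, a different clique of the cluster --- which does follow, as you implicitly use, from the disjointness of the windows $[L_i,R_i]$ guaranteed by $R_i<L_{i+1}$.
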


\end{enumerate}
\section{Preliminaries and Notations}
\label{sec:prelim}
All graphs considered in this paper are undirected, simple, and finite.  
Let $G$ be a graph. The vertex set and edge set of $G$ are denoted by $V(G)$ and $E(G)$, respectively.  A \textit{subgraph} $G'$ of $G$ is a graph such that $V(G')\subseteq V(G)$ and $E(G')\subseteq E(G)$. Let $S\subseteq V(G)$, then \textit{induced subgraph}, denoted as $G[S]$, is a graph whose vertex set is $S$ and whose edge set contains all the edges in $E(G)$ that have both their endpoints in $S$. Sometimes, we also call $G[S]$ to be the \textit{graph induced by the vertices} in $S$. By $\bar{G}$, we denote the \textit{complement} of a graph $G$, which is the graph having the same set of vertices as $G$ such that two vertices $u,v$ in $\bar{G}$ are adjacent if and only if they are non-adjacent in $G$. 
For a vertex $v\in G$, we define the \textit{neighborhood} of $v$ in $G$, denoted as $N_G(v)$, to be the set of all the vertices adjacent to $v$ in $G$ (sometimes, we omit the subscript, if the graph $G$ is clear from the context). The degree of a vertex $v$ in $G$, denoted by $d_G(v)$, is the number of edges incident on $v$ in $G$. The maximum degree of a graph $G$ is denoted by $\Delta(G)$.
We denote by $d_G(u,v)$, the \textit{distance} between two vertices $u$ and $v$ in $G$, which is the number of edges in the shortest path between $u$ and $v$ in $G$. The \textit{square} of a graph $G$, denoted by $G^2$, is the graph having $V(G^2)=V(G)$, and $E(G^2)=\{uv:d_G(u,v)\leq 2\}$.
 The \textit{disjoint union} of two graphs $G_1$ and $G_2$, denoted by $G_1\uplus G_2$, is the graph having
    $V(G_1\uplus G_2) = V(G_1)\uplus V(G_2)$ and $E(G_1\uplus G_2) = E(G_1) \uplus E(G_2)$. The disjoint union of $t$ copies of a graph $G$ is denoted by $tG$. Let $H$ be a subgraph of $G$. We denote by $G-H$, the graph obtained from $G$ by removing the vertices in $H$, i.e. $G-H = G[V(G)\setminus V(H)]$. 

A set $M\subseteq E(G)$ is said to be a \textit{matching} in $G$ if no two edges in $M$ share a common vertex. A set $S\subseteq V(G)$ is said to be a \textit{clique} (respectively an \textit{independent set}) if every pair of vertices in $S$ are adjacent (respectively non-adjacent) in $G$. 
The number of vertices in the largest clique (respectively independent set) of a graph $G$ is called the \textit{clique number} (respectively \textit{independence number}) of $G$, denoted as $\omega(G)$ (respectively $\alpha(G)$). 
A \textit{clique cover} of a graph $G$ is the partition of the vertices of $G$ into cliques. A clique $K$ in $G$ is said to be \textit{maximal} if $K\cup \{v\}$ does not form a clique for any vertex $v\in V(G)$. 
The \textit{clique cover number}, denoted by $k(G)$, is the smallest $k$ for which $G$ has a clique cover of size $k$.

A graph $G$ is said to be \textit{$H$-free}, if it does not contain $H$ as an induced subgraph. A \textit{diamond} is a graph on 4 vertices and 5 edges. i.e. it is the graph obtained by deleting an edge from $K_4$ (complete graph on 4 vertices). A graph $G$ is said to be \textit{d-regular} if every vertex of $G$ has degree $d$. A \textit{bipartite} graph, denoted as $G=(A,B,E)$, is a graph whose vertices can be partitioned into two independent sets $A$ and $B$ such that every edge contains its one end-point in $A$ and the other end-point in $B$. A \textit{complete bipartite} graph, denoted as $K_{m,n}$, is a bipartite graph $G=(A,B,E)$ with $\mid A\mid=m$, $\mid B\mid=n$ such that each vertex in $A$ is adjacent to every vertex in $B$. A \textit{star graph} is the complete bipartite graph, $K_{1,t}$ on $t+1$ vertices for some integer $t\geq 1$. A \textit{claw} $K_{1,3}$ is a star on 4 vertices. The complement of a bipartite graph is called \textit{co-bipartite} graph. A \textit{chordal bipartite graph} is a bipartite graph that does not contain any induced cycle of length $k$, where $k\geq 6$. A collection $\{I_v\}_{v\in V(G)}$, of intervals on a real line is said to be an \textit{interval representation} of a graph $G$ if for any pair of vertices $u,v\in V(G)$, we have $uv\in E(G)$ if and only if $I_u\cap I_v\neq \emptyset$. A graph is said to be an \textit{interval graph} if it has a corresponding interval representation. A \textit{proper interval} graph is an interval graph having an interval representation in which no interval properly contains another. 

Even though the parameters $\XCD$ and $\SC$ have a strong similarity to their classic counterparts, $\chi$, and $\omega$, here we note an important property of $\XCD$ and $\SC$, which is fundamentally different from $\chi$ and $\omega$. 

\noindent\textbf{Note:}
For any induced subgraph $H$ of a graph $G$, we have \raisebox{2pt}{$\chi$}$(H)$$\leq$ \raisebox{2pt}{$\chi$}$(G)$ and $\omega(H)\leq \omega(G)$. But it can be seen that this is \textit{not true} in general for the parameters $\XCD$ and $\SC$. i.e. it is possible for  $G$ to have an induced subgraph $H$ such that $\XCD(H)>\XCD(G)$ or $\SC(H)>\SC(G)$. For instance, let $G=K_{1,3}$, the star graph, with central vertex, say, $u$, and leaf vertices, say, $v_1,v_2$, and $v_3$. It is easy to verify that $\XCD(G)=\SC(G)=2$. On the other hand, consider the subgraph induced by leaf vertices, say, $H=G[\{v_1,v_2,v_3\}]$. We then have $\XCD(H)=\SC(H)=3>2=\SC(G)=\XCD(G)$.

In the following observation, we have a natural sufficient condition for an induced subgraph, $H$ of a graph $G$ to have \XCD($G$) $\geq$ \XCD($H$). We use this observation later.

\begin{observation}
    \label{obs:cdcolor_reduce}
   
      Let $H$ be an induced subgraph of the graph $G$. Suppose there exist no independent set  $I\subseteq V(H)$ of vertices such that $I$ has a dominating vertex in $G$ but not in $H$, then  \XCD($G$) $\geq$ \XCD($H$).
 
\end{observation}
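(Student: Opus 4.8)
The plan is to take an optimal $cd$-coloring of $G$ and show that its restriction to $V(H)$ is itself a valid $cd$-coloring of $H$ using no more colors. Concretely, let $c$ be a $cd$-coloring of $G$ achieving $\XCD(G)=k$ colors, with color classes $C_1,\ldots,C_k$. I would define a coloring $c'$ of $H$ by setting $C_j^{H}=C_j\cap V(H)$ for each $j$ and discarding those classes that become empty.

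First I would verify that $c'$ is a proper coloring of $H$. Since $H$ is an induced subgraph of $G$, every edge of $H$ is an edge of $G$; hence each $C_j^{H}$, being a subset of the independent set $C_j$, is independent in $G$ and therefore independent in $H$ as well. Thus $c'$ is a proper coloring of $H$ using at most $k$ colors, with possibly fewer if some classes vanish.

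The crux is verifying the class-domination property for $c'$. Fix a non-empty class $C_j^{H}$. Because $c$ is a $cd$-coloring of $G$, the class $C_j$ has a dominating vertex $v_j\in V(G)$, that is $C_j\subseteq N_G(v_j)$; since $C_j^{H}\subseteq C_j$, the same vertex satisfies $C_j^{H}\subseteq N_G(v_j)$, so $C_j^{H}$ is an independent set of $H$ that has a dominating vertex in $G$. By hypothesis there is no independent set $I\subseteq V(H)$ that has a dominating vertex in $G$ but none in $H$; since $C_j^{H}$ is such an independent set with a dominating vertex in $G$, it must also have a dominating vertex $w_j\in V(H)$, i.e. $C_j^{H}\subseteq N_H(w_j)$. (Note that $w_j\notin C_j^{H}$ automatically, as no vertex lies in its own neighborhood.) Hence every non-empty class of $c'$ is dominated within $H$, so $c'$ is a $cd$-coloring of $H$ using at most $k=\XCD(G)$ colors, giving $\XCD(H)\leq\XCD(G)$.

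I expect no serious obstacle here; the argument is a direct restriction-and-transfer of dominating vertices. The only point demanding care is the precise role of the hypothesis: without it, a restricted class $C_j^{H}$ could inherit a dominating vertex only from $V(G)\setminus V(H)$ and fail to be dominated inside $H$, which is exactly why the star example preceding this observation exhibits an induced subgraph with strictly larger $\XCD$. I would also implicitly use that, as is standard for $cd$-coloring, $H$ has no isolated vertices (otherwise $\XCD(H)$ is undefined), a condition consistent with the hypothesis.
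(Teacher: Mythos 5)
Your proposal is correct and follows essentially the same route as the paper: restrict the color classes of an optimal $cd$-coloring of $G$ to $V(H)$, observe they stay independent, and invoke the hypothesis to transfer each class's dominating vertex from $G$ to $H$, yielding $\XCD(H)\leq \XCD(G)$. Your write-up is simply a more detailed version of the paper's argument (e.g., explicitly checking independence under restriction and discarding empty classes), with no substantive difference.
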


\begin{proof} 
   Let \XCD($G$) =$k$, and $\{I_1,I_2,\ldots, I_k\}$ be the color classes in $G$ with respect to a $k$-$cd$-coloring of $G$.  Then the independent sets $\{I_{h1},I_{h2},\ldots, I_{hk}\}$ is a partition of $V(H)$, where $I_{hi}= I_{i}\cap V(H)$. Note that each of the sets in $\{I_1, I_2,\ldots, I_k\}$ is dominated by a vertex in $G$. From the statement of the observation, it is then clear that each of the sets in $\{I_{h1},I_{h2},\ldots, I_{hk}\}$ is also dominated by a vertex in $H$. This implies that $\XCD(H)\leq k=\XCD(G)$. 
\end{proof}

We also make use of the following known results.

\begin{proposition}[\cite{ZhuApproxMinTotDom09}]
    \label{pro:bipartite}
   Total domination in bipartite graphs with bounded degree 3 is  \NPC. \TETHS.
\end{proposition}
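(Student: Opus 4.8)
The plan is to first note membership in \cclass{NP}: a candidate total dominating set of size at most $k$ is verified in polynomial time by checking that every vertex of $G$ has a neighbour inside the set. For hardness I would reduce from \TSAT, using the fact that, after applying the Sparsification Lemma, \TSAT\ on instances with $n$ variables and $m=O(n)$ clauses admits no $2^{o(n+m)}$-time algorithm unless the ETH fails; I would additionally work with the standard bounded-occurrence variant in which each variable occurs in a constant number of clauses, so that the gadgets attached to the literals keep the maximum degree bounded by $3$.

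The heart of the construction is a collection of gadgets wired together so as to preserve both bipartiteness (no odd cycle) and the degree bound $\Delta(G)\le 3$. For each variable $x_i$ I would build a small variable gadget containing two ``literal vertices'' representing $x_i$ and $\overline{x_i}$, arranged so that every total dominating set is forced to select at least one of the two, and so that the cheapest way to dominate the gadget corresponds precisely to choosing a truth value. The standard forcing device is to attach a degree-one vertex to a vertex $v$ we wish to place into the solution: the leaf can only be dominated by $v$, so $v$ is forced into $S$, while $v$, needing a neighbour in $S$, must be serviced by a companion vertex. For each clause $C_j$ I would add a clause vertex adjacent (through degree-respecting intermediate vertices) to the three literal vertices of $C_j$, so that this vertex is dominated if and only if at least one of its literals is selected. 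Finally I would set the budget $k$ to the exact number of vertices consumed by the forced selections across all variable gadgets, leaving no slack to cover an unsatisfied clause.

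For correctness, the forward direction takes a satisfying assignment, selects the corresponding literal vertices together with their forced companions, and checks that every literal vertex, clause vertex, and auxiliary vertex is dominated within budget. The reverse direction argues that a total dominating set of size $k$ cannot afford to deviate from the ``one literal per variable'' pattern (any deviation either leaves a leaf undominated or overspends the budget), hence reads off a consistent truth assignment, and that each clause vertex being dominated certifies that its clause is satisfied. For the ETH statement the key point is that the whole construction is linear: $|V(G)| = O(n+m) = O(n)$. Consequently a hypothetical $2^{o(|V(G)|)}$-time algorithm for \TD\ would yield a $2^{o(n+m)}$-time algorithm for \TSAT, contradicting the ETH.

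The main obstacle will be the simultaneous satisfaction of all three structural demands -- bipartiteness, maximum degree $3$, and a tight (exact) correspondence between solution size and the number of satisfied literals. Total domination is delicate here because every vertex placed in $S$ must itself have a neighbour in $S$, which ordinarily forces additional companion vertices; keeping these companions from pushing any degree above $3$ or from creating odd cycles, while still making the budget count exact, is the part that requires the most careful gadget engineering. In particular the leaf-based forcing trick must be deployed sparingly, since each leaf raises the degree of its attachment vertex, and the clause vertices must reach their literals through intermediate vertices rather than directly, which in turn must be arranged so as not to introduce a shorter, cheaper way of dominating a gadget that would break the reverse direction of the correctness argument.
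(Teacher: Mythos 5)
First, a point of comparison: the paper itself does not prove this proposition at all --- it is imported verbatim from~\cite{ZhuApproxMinTotDom09} and used as a black box, so there is no internal proof to match your attempt against; the relevant benchmark is whether your sketch would stand as a self-contained proof. It would not, because the content it defers is precisely where the entire difficulty lives. Your overall skeleton is the standard and correct strategy (membership in \cclass{NP}; reduction from sparsified, bounded-occurrence \TSAT; linear size of the construction yielding the $2^{o(\mid V(G)\mid)}$ lower bound under the ETH), but every load-bearing claim about the gadgets is asserted rather than constructed: no variable gadget is exhibited, so the claim that ``every total dominating set is forced to select at least one of the two literal vertices'' is unproven --- the leaf trick forces a \emph{specific} vertex into $S$, and does not by itself create an exclusive-choice structure between two literal vertices; and bipartiteness is promised but never verified against any actual wiring.

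The most concrete gap is in the clause connection, where two of your stated requirements are mutually inconsistent as formulated. You route each clause vertex to its literals ``through degree-respecting intermediate vertices'' while claiming both that the clause vertex is dominated if and only if some literal of its clause is selected, and that the budget $k$ is exact with ``no slack.'' But in total domination every intermediate vertex on a literal--clause path must itself have a neighbour in $S$; under a tight budget this housekeeping forces some vertex on the path into $S$ independently of the truth assignment, and if that vertex is the one adjacent to the clause vertex, the clause vertex is dominated whether or not the clause is satisfied, destroying the reverse direction of your correctness argument. Escaping this requires either direct literal--clause edges (which then collide with the degree-$3$ bound once a variable occurs in several clauses, and with parity/bipartiteness) or dedicated clause-side gadgetry together with an exact count of forced vertices showing the tight budget is attainable precisely when the formula is satisfiable --- none of which appears in the proposal. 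You candidly flag this as ``the part that requires the most careful gadget engineering,'' which is an accurate self-diagnosis: until those gadgets are written down and the two directions of the equivalence are checked against them, the argument is a plan for a proof rather than a proof.
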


\begin{proposition}[\cite{MerouaneHCK15}]
    \label{pro:triangle-free}
   Let $G$ be a triangle-free graph. Then \XCD ($G$) = $\gamma_t (G)$.
\end{proposition}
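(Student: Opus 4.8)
The plan is to establish the equality by proving the two inequalities $\gamma_t(G)\le\XCD(G)$ and $\XCD(G)\le\gamma_t(G)$ separately. The structural fact that makes everything work is that in a triangle-free graph every (open) neighborhood is an independent set: if $u,w\in N(v)$ were adjacent, then $\{v,u,w\}$ would induce a triangle. This single observation is what lets us pass freely between ``color classes dominated by a vertex'' and ``neighborhoods of vertices in a total dominating set.'' I would state it explicitly at the outset, since it is the crux of the whole argument.

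For the direction $\gamma_t(G)\le\XCD(G)$, which in fact needs no triangle-freeness, I would start from an optimal $cd$-coloring with classes $C_1,\dots,C_k$, where $k=\XCD(G)$, and let $v_j$ be a vertex dominating $C_j$, so that $C_j\subseteq N(v_j)$. Put $S=\{v_1,\dots,v_k\}$. Given any vertex $u\in V(G)$, it lies in some class $C_j$ (the classes partition $V(G)$), and hence $u\in C_j\subseteq N(v_j)$, so $u$ has the neighbor $v_j\in S$. Since this holds for every vertex, $S$ is a total dominating set, giving $\gamma_t(G)\le|S|\le k=\XCD(G)$.

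For the reverse direction $\XCD(G)\le\gamma_t(G)$ I would take an optimal total dominating set $S=\{s_1,\dots,s_k\}$ with $k=\gamma_t(G)$. Because $S$ totally dominates $G$, every vertex has a neighbor in $S$, so $V(G)=\bigcup_{i=1}^k N(s_i)$. I would assign each vertex $v$ to some fixed $s_i$ with $v\in N(s_i)$, letting the resulting class $C_i$ collect the vertices assigned to $s_i$. Each $C_i$ satisfies $C_i\subseteq N(s_i)$, so $s_i$ dominates $C_i$; and, crucially, $N(s_i)$ is independent by triangle-freeness, so $C_i$ is an independent set and the assignment is a proper coloring. Discarding any empty classes, this yields a $cd$-coloring using at most $k$ colors, whence $\XCD(G)\le\gamma_t(G)$.

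The only genuine obstacle is the properness of the color classes in this last direction: turning the neighborhoods $N(s_i)$ into legitimate color classes requires each $N(s_i)$ to be independent, which is precisely where triangle-freeness is indispensable (without it the construction collapses, as $N(s_i)$ may contain edges). The remaining points are routine: total domination guarantees that the classes cover all of $V(G)$, and loops being absent guarantees $s_i\notin C_i$, consistent with the requirement that a class be dominated by a vertex outside it. Combining the two inequalities gives $\XCD(G)=\gamma_t(G)$.
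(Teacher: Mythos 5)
Your proof is correct: both directions are sound, including the observations that the first inequality $\gamma_t(G)\leq \XCD(G)$ needs no triangle-freeness (and that the class partition covers the dominating vertices themselves, so $S$ is indeed a \emph{total} dominating set), and that triangle-freeness is exactly what makes each $N(s_i)$ independent in the second direction. The paper states this proposition as a cited result from Merouane et al.~\cite{MerouaneHCK15} without reproducing a proof, and your two-inequality argument is precisely the standard one from that reference, so there is nothing in-paper to contrast it with.
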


The major problems that we consider in this paper are: \\

\begin{mdframed}
  \textbf{\sc{\TD\ } }\\ \textbf{Input:} A graph $G$ and an integer $k$\\ \textbf{Question:} Does there exist a \textit{total dominating }set $S\subseteq V(G)$ of size at most $k$ in $G$?
\end{mdframed}

\begin{mdframed}
  \textbf{\sc{\CDC\ } }\\ \textbf{Input:} A graph $G$ and an integer $k$\\ \textbf{Question:} Does there exist a $cd$-coloring of $G$ with at most $k$ colors?
\end{mdframed}

\begin{mdframed}
  \textbf{\sc{\SCP\ } }\\
  \textbf{Input:} A graph $G$ and an integer $k$\\ \textbf{Question:} Does there exist a \textit{seprarated-cluster} of size at least $k$ in $G$?  
\end{mdframed}

Further, we also consider the following problems for reduction purposes.
For an input graph $G$ and an integer $k$, the problem \CC\ (respectively \ISP) asks does there exists a clique cover (respectively an independent set) of size $k$ in $G$.
\section{Total domination and $cd$-coloring in triangle-free $d$-regular Graphs}
\label{sec:hardness}
In this section, we prove Theorem~\ref{thm:regular}, i.e. we derive the hardness result for \CDC\ in triangle-free $d$-regular graphs for each fixed integer $d\geq 3$. As a corollary of this result, we obtain the hardness result for \TD\ in triangle-free $d$-regular graphs for each fixed integer $d\geq 3$. 
Hence, first, we propose a simple linear reduction for \CDC\ on triangle-free cubic graphs. Then, we generalize this reduction for triangle-free $d$-regular graphs for each fixed integer $d\geq 3$, by using two linear reductions: one for odd values of $d$ and the other for even values.

The Exponential-Time Hypothesis (ETH) along with the Sparsification   Lemma implies that \TSAT\ cannot be solved in subexponential-time, i.e.
in time $2^{o(n+m)}$, where $n$ is 
the number of variables and $m$ is the number of clauses in the input formula.
 To prove that a problem does not admit a subexponential-time algorithm, 
 it is sufficient to obtain a linear reduction from a problem 
which does not admit a subexponential time algorithm where a linear reduction is a polynomial-time reduction
in which the size of the resultant instance is linear in the size of the input instance.
We refer to Chapter 14 of the book~\cite{CyganOthParamAlgo} for more details about these concepts.
All reductions mentioned in this section are trivially linear.

To prove Theorem~\ref{thm:regular}, we use three constructions. 
Construction~\ref{cons:cubic} is used for a reduction from \textsc{Total Domination}  on bipartite graphs with bounded degree 3 to \CDC\ on triangle-free cubic graphs. Construction~\ref{cons:d-regular odd} is used for a reduction from \CDC\ on triangle-free $(d-2)$-regular graphs to \CDC\ on triangle-free $d$-regular graphs, for each odd integer $d\geq 5$, whereas the Construction~\ref{cons:d-regular even} is used for a reduction from \CDC\ on triangle-free $(d-1)$-regular graphs to \CDC\ on triangle-free $d$-regular graphs, for each even integer $d\geq 4$. 
\begin{figure}[!htbp]
  \centering
    \centering
    \resizebox{0.45\textwidth}{!}{\begin{tikzpicture}[myv/.style={circle, draw, inner sep=3.5pt},myv1/.style={circle, draw, inner sep=0pt,color=white},myv2/.style={rectangle, draw,dotted,inner sep=0pt,line width = 0.5mm}]

  \node[myv] (a) at (0,3) {$a$};
  \node[myv][fill=black] (b) at (-2,2) {\textcolor{white}{$b$}};
  \node[myv][fill=black] (c) at (2,2) {\textcolor{white}{$c$}};

  \node[myv][fill=black] (d) at (-3,1) {\textcolor{white}{$d$}};
  \node[myv] (e) at (-1,1) {$e$};
  \node[circle, draw, inner sep=2.5pt][fill=black] (f) at (1,1) {\textcolor{white}{$f$}};
  \node[myv] (g) at (3,1) {$g$};

  \node[myv] (h) at (-3,-0.25) {$h$};
  \node[myv] (i) at (-1,-0.25) {$i$};
  \node[myv] (j) at (1,-0.25) {$j$};
  \node[myv] (k) at (3,-0.25) {$k$};
   \node[myv1] (l)  at (3,-1) {};
   \node[myv1] (m)  at (3,-1.5) {};
  
    \draw (a) -- (b);
    \draw (a) -- (c);
    \draw (b) -- (d);
    \draw (b) -- (e);
    \draw (c) -- (f);
    \draw (c) -- (g);
    \draw (d) -- (h);
    \draw (d) -- (i);
    \draw (e) -- (h);
    \draw (e) -- (i);
    \draw (f) -- (j);
    \draw (f) -- (k);
    \draw (g) -- (j);
    \draw (g) -- (k);
    \draw (i) to [bend right=20] (j);
    \draw (h) to [bend right=20] (k);

    \node[myv2][fit=(b) (d) (g) (h) (k) (m),  inner xsep=5.5ex, inner ysep=1.00ex, label=right:$H$] {}; 
    \node[myv2][dashed][fit= (d) (g) (h) (k) (l),  inner xsep=1.5ex, inner ysep=0.75ex, label=right:$H'$] {};
  
\end{tikzpicture}}
    \caption{The gadget $W$ used in Construction~\ref{cons:cubic}. The shaded vertices, $d,b,f$ and $c$, respectively dominates the color classes, $\{b,h,i\},\{e,a,d\},\{c,j,k\}$, and $\{g,f\}$, in a valid $cd$-coloring of $W$.}
    \label{fig:cubic_gadget}
  \end{figure}
  
\begin{construction}
\label{cons:cubic}
    Let $G$ be a bipartite graph with bounded degree 3. We construct a graph $G_c$ from $G$ 
    in the following way: 
    \begin{itemize}
        \item First we construct a gadget $W$ (refer Figure~\ref{fig:cubic_gadget}) as follows: introduce a star graph $K_{1,2}$, which we denote by $S$, having $a$ as the root vertex, and $b,c$ as the leaves. Note that we also call `$a$' as the root vertex of $W$.
        Then we introduce two copies of the complete bipartite graph $K_{2,2}$, denoted by $C_1=(A_1,B_1)$ and $C_2=(A_2,B_2)$, respectively. Now the leaf $b$ (respectively $c$) of $S$ is adjacent to each  vertex of the partition $A_1$ (respectively $A_2$) of $C_1$ (respectively $C_2$).
    Furthermore,  each vertex of partition $B_1$ of $C_1$ is adjacent to exactly one vertex of the partition $B_{2}$ of $C_{2}$  as shown in Figure~\ref{fig:cubic_gadget}.  Let $H=W-\{a\}$ and $H'=H-\{b,c\}$.

    \item Thereafter we construct a graph $G_c$ from $G$ using the gadget $W$ as follows: for every vertex $u$ of $G$ with degree 2, introduce a gadget $W$ such that the root vertex $a$ of $W$ is adjacent to $u$ in $G_c$. Similarly, for each vertex $v$ in $G$ of degree 1, we introduce two copies of  $W$ such that the root vertices of both the gadgets are adjacent to $v$ in $G_c$.   \end{itemize}
\vspace{-0.15cm}    The graph $G_c- G$ contains $11x+22y$ vertices, where $x$ and $y$ denote the number of vertices in $G$ having degree 2 and degree 1, respectively. 
\end{construction}

 \begin{figure}[!htbp]
 \centering
    
          {\resizebox{0.2\textwidth}{!}{\begin{tikzpicture} [myv/.style={circle, draw, inner sep=5pt,line width=0.5mm},myv1/.style={circle, draw, inner sep=0pt,color=white},myv2/.style={rectangle, draw, dotted,  inner sep=2pt,line width = 0.5mm}]

  \node[myv] (a) at (0,0) {};
  \node[myv] (b) at (0,2) {};
  \node[myv] (c) at (2,0) {};

  \node[myv] (d) at (2,2) {};
  \node[myv] (e) at (4,0) {};
  \node[myv] (f) at (4,2) {};
  
    \draw[line width=0.5mm] (a) -- (b);
    \draw[line width=0.5mm] (b) -- (c);
    \draw[line width=0.5mm] (c) -- (d);
    \draw[line width=0.5mm] (d) -- (e);
    \draw[line width=0.5mm] (e) -- (f);
    \draw[line width=0.5mm] (e) -- (b);
    
\end{tikzpicture}}}  
          \caption{An example of a bipartite graph $G$ with bounded degree 3}
          \label{fig:bipartite}
 \end{figure}  

An example of Construction~\ref{cons:cubic} corresponding to the bipartite graph $G$ with bounded degree 3 given in Figure~\ref{fig:bipartite} is shown in Figure~\ref{fig:Cubic construction}. 
The following observation is true for the gadget $W$ in Construction~\ref{cons:cubic}.

 \begin{figure}[!htbp]
      
          \centering
          \textbf{{\resizebox{0.5\textwidth}{!}{\begin{tikzpicture} [myv/.style={circle, draw, inner sep=8pt,line width=0.8mm},myv1/.style={circle, draw, inner sep=0pt,color=white},myv2/.style={rectangle, draw, dotted,  inner sep=2pt,line width = 0.8mm}]

  \node[myv] (a) at (-4,1) {};
  \node[myv] (b) at (-4,3) {};
  \node[myv] (c) at (2,1) {};
  \node[myv] (d) at (2,3) {};
  \node[myv] (e) at (8,1) {};
  \node[myv] (f) at (8,3) {};
  
    \draw[line width=0.8mm]  (a) -- (b);
    \draw[line width=0.8mm] (b) -- (c);
    \draw[line width=0.8mm] (c) -- (d);
    \draw[line width=0.8mm] (d) -- (e);
    \draw[line width=0.8mm] (e) -- (f);
    \draw[line width=0.8mm] (e) -- (b);

    \node (g) at (-4,-4) { {\input{cubic_gadget1}}};
    \node[myv2] [fit=  (g), inner xsep=0.5ex, inner ysep=0.5ex] {};
    \node  (h) at (2,-4) {\input{cubic_gadget1}};
    \node[myv2]   [fit=  (h), inner xsep=0.5ex, inner ysep=0.5ex] {};
    \node  (i) at (8,-4) {\input{cubic_gadget1}};
    \node[myv2]   [fit=  (i), inner xsep=0.5ex, inner ysep=0.5ex] {};
    \node  (j) at (-4,8) {\input{cubic_gadget2}};
    \node[myv2]   [fit=  (j), inner xsep=0.5ex, inner ysep=0.5ex] {};
    \node   (k) at (2,8) {\input{cubic_gadget2}};
    \node[myv2]   [fit=  (k), inner xsep=0.5ex, inner ysep=0.5ex] {};
    \node   (l) at (8,8) {\input{cubic_gadget2}};
    \node[myv2]   [fit=  (l), inner xsep=0.5ex, inner ysep=0.5ex] {};

   \draw[line width=0.8mm] (-4, -1.25) -- (-4,0.7);
   \draw[line width=0.8mm] (2,-1.15) -- (-3.65,1.0);
   \draw[line width=0.8mm] (2.25,0.75) -- (8, -1.15);
   \draw[line width=0.8mm] (-4,5.15) -- (2,3.35);
   \draw[line width=0.8mm] (2, 5.15) -- (8,3.35);
   \draw[line width=0.8mm] (8,3.35) -- (8,5.2);

\end{tikzpicture}}} } 
          \caption{An example of the resultant graph $G_c$ obtained from Construction~\ref{cons:cubic} corresponding to the bipartite graph $G$ with bounded degree 3 in Figure~\ref{fig:bipartite}. The dotted rectangles represent the gadget $W$ used for the construction. }
          \label{fig:cubic}

    \label{fig:Cubic construction}
\end{figure}
\begin{observation}
    \label{obs:Hcoloring_cubic} 
     Let $W$ be the gadget,  and  
    $H$ as well as $H'$ be the subgraphs of $W$ as defined in Construction~\ref{cons:cubic}. In any $cd$-coloring of $H$, the vertices in $H'$ require at least 4 colors. 
     Moreover, there exists a $cd$-coloring of $W$ using exactly 4 colors.
\end{observation}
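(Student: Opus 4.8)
The plan is to treat the two assertions separately, with the lower bound being the substantive one. For the ``moreover'' part I would simply exhibit the $4$-coloring indicated in the caption of Figure~\ref{fig:cubic_gadget}, namely the classes $\{b,h,i\}$, $\{e,a,d\}$, $\{c,j,k\}$, $\{g,f\}$, and verify directly that each class is independent in $W$ and is contained in the neighborhood of its claimed dominator ($d$, $b$, $f$, $c$ respectively). Since these four classes partition $V(W)$, this is a valid $cd$-coloring of $W$ with exactly four colors.

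For the lower bound, the key observation is that any $cd$-coloring of $H$ restricts to a partition of $V(H')$ in which every part $P$ is independent in $H$ and satisfies $P\subseteq N_H(w)$ for a single vertex $w\in V(H)$ (a dominator of the $H$-color class containing $P$). Thus the number of colors used on $H'$ is at least the minimum number of such \emph{admissible} parts needed to cover the eight vertices of $H'$, and it suffices to show this minimum exceeds $3$. I would first point out that the easy bound $\XCD\ge\SC$ is too weak here: the distance-$2$ graph of $H$ restricted to $V(H')$ is a disjoint union of two copies of $K_4$, on $\{d,e,j,k\}$ and on $\{f,g,h,i\}$, so $H'$ has no separated-cluster larger than $2$. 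A finer counting argument is therefore needed.

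That argument proceeds by listing, for each possible dominator $w\in V(H)$, the trace $N_H(w)\cap V(H')$. Every admissible part has size at most $3$, and the only traces of size $3$ are $\{d,e,k\}$, $\{d,e,j\}$, $\{f,g,i\}$, $\{f,g,h\}$, realized by the dominators $h,i,j,k$ respectively. Covering eight vertices with three parts therefore forces the size profile $(3,3,2)$, so two of the parts must be disjoint size-$3$ traces from this list. Checking the few disjoint pairs, the two leftover vertices are in each case either adjacent in $H$ (the pairs $\{i,j\}$ and $\{h,k\}$) or have no common neighbor in $H$ (the pairs $\{h,j\}$ and $\{i,k\}$); in either situation they cannot form an admissible part. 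Hence no $cd$-coloring of $H$ can use only three colors on $H'$, which gives the claimed lower bound of $4$.

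The only real difficulty is keeping this casework organized, so I would first record the adjacency list of the ten vertices $b,c,d,\dots,k$ and the four size-$3$ traces explicitly; with these in hand, each leftover-pair check is immediate. I do not expect any delicate estimates here — once the admissible parts are enumerated, the whole argument is a finite verification.
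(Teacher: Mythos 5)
Your proof is correct, and it takes a genuinely different route from the paper's. The paper proves the lower bound by locating an induced $C_6$ in $H'$ (on $d,i,j,f,k,h$), invoking the known fact that $\XCD(C_6)=4$, and then applying Observation~\ref{obs:cdcolor_reduce} twice: first to argue that the remaining vertices of $H'$ cannot help (no $3K_1$ of the $C_6$ acquires a dominator among them), and then to pass from $H'$ to $H$ (any independent set in $H'$ dominated by $b$ or $c$ already has a dominator inside $H'$); the vertex $a$ is dismissed because it has no neighbor in $H'$. You instead give a self-contained finite verification: each color class of $H$ traces on $V(H')$ to an independent subset of some $N_H(w)\cap V(H')$, the only size-$3$ traces are $\{d,e,k\},\{d,e,j\},\{f,g,i\},\{f,g,h\}$ (with dominators $h,i,j,k$), so three classes would force a $(3,3,2)$ profile with two disjoint size-$3$ traces, and for every disjoint pair the leftover two vertices ($\{i,j\}$, $\{h,k\}$, $\{h,j\}$, $\{i,k\}$) form either an edge of $H$ or a pair with empty common neighborhood in $H$ --- I verified your trace list and all four leftover cases, and they are as you state. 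What the paper's route buys is brevity and reusability: the same $C_6$-plus-Observation~\ref{obs:cdcolor_reduce} pattern is repeated essentially verbatim for the larger gadgets in Observations~\ref{obs:Hcoloring_regular_odd} and~\ref{obs:Hcoloring_regular_even}, where the analogues of $H'$ have $4(d-1)$ vertices and your enumeration would grow with $d$. What your route buys is that it is fully elementary and airtight for this gadget: it needs neither the value $\XCD(C_6)=4$ nor the domination-preservation hypotheses of Observation~\ref{obs:cdcolor_reduce} (which the paper checks only informally), and it pinpoints exactly why three classes fail. The ``moreover'' half is identical in both: the explicit coloring $\{b,h,i\},\{e,a,d\},\{c,j,k\},\{g,f\}$ with dominators $d,b,f,c$ from the caption of Figure~\ref{fig:cubic_gadget}.
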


\begin{proof}
    The graph $H'$ contains an induced cycle on 6 vertices, $C_6$ (for instance, the graph induced by the vertices $d,i,j,f,k,h$ as shown in Figure~\ref{fig:cubic_gadget}). It is easy to see that \XCD($C_6$)=4. Now consider the subgraph $H'$, which is obtained by 
     introducing new vertices adjacent to some of the vertices of the $C_6$.  Since none of the $3K_1$s present in the $C_6$ is dominated by any of the remaining vertices in $H'$, by Observation~\ref{obs:cdcolor_reduce} it is clear that the number of colors needed for any $cd$-coloring of $H'$ is at least 4.  Further, note that all the independent sets in $H'$, which are dominated by the vertices $b$ or $c$, are already dominated by some vertex in $H'$. Therefore, again by Observation~\ref{obs:cdcolor_reduce}, it is clear that the number of colors needed for $V(H')$ in any $cd$-coloring of $H$ is at least 4. 
     Since the root vertex $a$ of the gadget is not adjacent to any of the vertices in $H'$, the presence of $a$ will not reduce the number of colors needed for $cd$-coloring of the subgraph $H'$ of the corresponding gadget $W$. Hence, the number of colors needed for $cd$-coloring of $W$ is at least $4$. 
    Now it is easy to see that a 4-$cd$-coloring of any $C_6$ in $H'$ can be extended to a 4-$cd$-coloring of the corresponding gadget $W$ (by using the leaves of star graph $S$, and one of the vertices from each $C_i$, for $i=\{1,2\}$ as the dominating vertices of the color classes).
\end{proof}

The following lemma is based on the Construction~\ref{cons:cubic}, and as a consequence, we have Theorem~\ref{thm:cubic}.

\begin{lemma}
    \label{lem:cubic}
    Let $G$ be a bipartite graph with bounded degree 3 and $G_c$ be the graph obtained from Construction~\ref{cons:cubic}. Then \raisebox{2pt}{$\gamma$}$_t(G)=k$ if and only if $\XCD(G_c)=k+4x+8y$, where  $x$ and $y$  denote the number of vertices in $G$ having degree 2 and degree 1, respectively. 
\end{lemma}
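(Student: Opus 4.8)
The plan is to prove the equivalent statement $\XCD(G_c)=\gamma_t(G)+4x+8y$ by establishing the two inequalities separately, viewing the $x+2y$ attached copies of the gadget $W$ as contributing exactly four ``private'' colours each, and the original graph $G$ as contributing $\gamma_t(G)$ colours via a total dominating set. For the upper bound I would start from a minimum total dominating set $S$ of $G$ with $|S|=k$. Since $G$ is bipartite, each $N_G(s)$ is independent, so I can colour $V(G)$ with at most $k$ classes: assign each vertex $w$ to the class of some neighbour $s\in S$ (one exists because $S$ is total dominating), so class $s$ is an independent subset of $N_G(s)$ dominated by $s$. Each gadget is then coloured with four fresh colours using the $4$-$cd$-colouring of $W$ guaranteed by Observation~\ref{obs:Hcoloring_cubic}. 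The only thing to verify is that attaching the root $a$ to its vertex $w$ is harmless: the new edge $aw$ keeps $a$'s class independent and still dominated by its internal dominator, and keeps $w$'s class (a subset of $N_G(s)$) independent and dominated by $s$. Using disjoint colour sets across the gadgets and $V(G)$ gives $\XCD(G_c)\le k+4(x+2y)=k+4x+8y$.

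For the lower bound I would take any $cd$-colouring of $G_c$ with $k'$ colours and first show that at least $4x+8y$ of them are \emph{private} to the gadgets. By Observation~\ref{obs:Hcoloring_cubic} each copy of $H'$ forces at least four colours, and I claim any colour class meeting $H'$ is confined to its own gadget: every vertex of $H'$, as well as the leaves $b,c$, has all its neighbours inside $W$, while the root $a$ is the only gadget vertex adjacent to $V(G)$ and is non-adjacent to $H'$, so it cannot dominate a class meeting $H'$. Hence the dominator of such a class lies in $V(W)\setminus\{a\}$, whose closed neighbourhood stays inside $W$; the class therefore avoids $V(G)$ entirely and is disjoint from the corresponding classes of other gadgets. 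This yields $4(x+2y)$ colours not appearing on $V(G)$, so the number $t$ of colours that do appear on $V(G)$ satisfies $t\le k'-4x-8y$.

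The second, and most delicate, step is to extract a total dominating set $T\subseteq V(G)$ with $|T|\le t$. For each colour class $C$ meeting $V(G)$ let $D=C\cap V(G)\neq\emptyset$ and let $z$ dominate $C$. Either $z\in V(G)$ (Case~A), or $z$ is a gadget vertex; but the only gadget vertices adjacent to $V(G)$ are roots, and a root attached to $w$ has $w$ as its sole original neighbour, forcing $D=\{w\}$ (Case~B). I put every Case~A dominator into $T$, which dominates all Case~A vertices at a cost of at most one vertex per Case~A colour; and for each of the (at most one per colour) Case~B vertices $w$, I add an arbitrary $G$-neighbour of $w$ to $T$ (one exists since $G$ has no isolated vertex), dominating $w$ at a cost of one vertex per Case~B colour. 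Then $T$ is a total dominating set with $|T|\le t\le k'-4x-8y$, so $\gamma_t(G)\le\XCD(G_c)-4x-8y$. Combining with the upper bound gives $\XCD(G_c)=\gamma_t(G)+4x+8y$, which is exactly the claimed equivalence.

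I expect the main obstacle to be precisely this extraction step: a priori an original vertex may have its colour class dominated \emph{from inside} a gadget (by its root) rather than by a genuine $G$-neighbour, so one cannot simply read the total dominating set off the list of dominators. The Case~A/Case~B dichotomy is what overcomes this, exploiting the fact that a root sees exactly one vertex of $G$, so each such external domination wastes a full colour on a single vertex that can then be re-dominated within $G$ at unit cost, keeping the count tight. A secondary technical point to be careful about is that Observation~\ref{obs:Hcoloring_cubic} is stated for colourings of $H$, so I must apply it through the confinement argument (and Observation~\ref{obs:cdcolor_reduce}) to guarantee the four forced colours survive inside the global colouring of $G_c$ and are not shared across gadgets or with $V(G)$.
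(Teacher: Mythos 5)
Your proposal is correct, and its core gadget analysis — four colours forced on each copy of $H'$ via Observation~\ref{obs:Hcoloring_cubic}, together with the confinement argument showing that any colour class meeting $H'$ has its dominator in $V(W)\setminus\{a\}$ and hence lies entirely inside its own gadget — coincides with the paper's. Where you genuinely differ is in the translation between total domination and $cd$-colouring: the paper invokes Proposition~\ref{pro:triangle-free} (\XCD$(G)=\gamma_t(G)$ for triangle-free $G$) as a black box in both directions, reducing the lemma to the identity $\XCD(G_c)=\XCD(G)+4(x+2y)$, whereas you inline that equivalence, building the $k$-class colouring of $V(G)$ directly from a minimum total dominating set and extracting a total dominating set of size at most $t$ from the colours appearing on $V(G)$. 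Your Case~A/Case~B dichotomy is the real added value: the paper disposes of root-dominated classes with the single remark that a class dominated by a root $a$ can contain only one vertex of $G$, leaving implicit why the resulting restriction to $G$ is still a valid $cd$-colouring (that singleton must be re-dominated inside $G$); your extraction makes exactly this repair explicit, re-dominating each such vertex $w$ by an arbitrary $G$-neighbour at unit cost per colour, so the count stays tight. What the paper's route buys is brevity and reuse of a stated proposition; what yours buys is a self-contained argument that never needs Proposition~\ref{pro:triangle-free} and is more careful at the one delicate step. Your closing caveat — that Observation~\ref{obs:Hcoloring_cubic} is stated for colourings of $H$, so the four forced colours must be certified inside the global colouring of $G_c$ — is also well taken (the paper glosses over the same point); your confinement claim shows every class meeting $H'$ is dominated from $V(H)$, after which the observation's internal argument via Observation~\ref{obs:cdcolor_reduce} applies verbatim.
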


\begin{proof}
    Let \raisebox{2pt}{$\gamma$}$_t(G)=k$. Then by Proposition~\ref{pro:triangle-free}, since $G$ is triangle-free, we have $\XCD(G)=k$. We claim that $\XCD(G_c)=k+4x+8y$.  Recall that the graph $G_c$ contains $x+2y$ gadgets, as there are $x$ number of degree two vertices, and $y$ number of degree one vertices in $G$. By Observation~\ref{obs:Hcoloring_cubic}, it is clear that these gadgets require exactly $4(x+2y)$ new colors, which are not used in $G$ (as none of these colors can be reused for coloring the vertices in $V(G_c)\setminus V(W)$).  Since \XCD($G$)=$k$, we then have \XCD($G_c$)=$k+4x+8y$.

    Now, for the reverse direction, let $\XCD(G_c)=m$. Then, we claim that \raisebox{2pt}{$\gamma$}$_t(G)=k$, where $k=m-(4x+8y)$. 
    Note that for each gadget $W$, the subgraph $W- \{a\}$ is disjoint from $G$. 
    Hence, the color class dominated by $a$ can include only one vertex $u$ in $G$.  By Observation~\ref{obs:Hcoloring_cubic}, in any $cd$-coloring of $G_c$, the vertex set $H'$ of any gadget $W$ needs at least 4 colors, and none of these colors can be reused for coloring the vertices in $V(G_c)\setminus V(W)$. Again, by Observation~\ref{obs:Hcoloring_cubic}, the same 4 colors used to color the vertices in $H'$ can be reused to color the entire vertices of $W$. Therefore, we now have a $cd$-coloring of $G_c$ using $m$ colors such that no colors used for the vertices in gadgets are used to color any vertex in $G$. Thus, $\XCD(G)=m-4(x+2y)=k$. Then by Proposition~\ref{pro:triangle-free}, \raisebox{2pt}{$\gamma$}$_t(G)=k$.
\end{proof}

\begin{theorem}
    \label{thm:cubic}
    \CDC\ is \NPC\ on triangle free cubic graphs. \TETHS.
\end{theorem}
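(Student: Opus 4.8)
The plan is to use Lemma~\ref{lem:cubic} as the engine and verify the three properties needed to conclude NP-completeness together with the ETH-based subexponential lower bound. First I would observe that Construction~\ref{cons:cubic} takes a bipartite graph $G$ of bounded degree $3$ and produces a graph $G_c$. I must check that $G_c$ is indeed a triangle-free cubic graph. The star $S$, the two copies of $K_{2,2}$, and the connecting edges in the gadget $W$ introduce no triangles (each $K_{2,2}$ is bipartite and the inter-copy matching edges join $B_1$ to $B_2$ without closing any triangle), and the original $G$ is bipartite hence triangle-free; since the root $a$ attaches only to a single vertex $u$ of $G$ and $a$ has no other neighbor in $G$, no new triangle is created at the interface. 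For regularity I would tally degrees: every internal vertex of $W$ should have degree exactly $3$, and attaching one gadget per degree-$2$ vertex (respectively two gadgets per degree-$1$ vertex) of $G$ raises each such vertex to degree $3$. The degree-$3$ vertices of $G$ already have the right degree and receive no gadget. This bookkeeping is routine but must be stated so that the output graph is genuinely $3$-regular.

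Next I would invoke membership in NP, which is immediate: a $cd$-coloring together with the dominating vertex for each color class is a polynomial-size certificate checkable in polynomial time. For hardness, I would reduce from \TD\ on bipartite graphs of bounded degree $3$, which is \NPC\ by Proposition~\ref{pro:bipartite}. Given such an instance $(G,k)$, Construction~\ref{cons:cubic} produces $G_c$ in polynomial time, and Lemma~\ref{lem:cubic} gives the exact equivalence $\gamma_t(G)=k \iff \XCD(G_c)=k+4x+8y$, where $x,y$ count the degree-$2$ and degree-$1$ vertices of $G$. Since $x$ and $y$ are read off from $G$ in linear time, the target value $k+4x+8y$ is computable in polynomial time, so this is a valid many-one reduction establishing \NPC.

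For the ETH lower bound I would check that the reduction is \emph{linear}, in the sense defined in the paragraph preceding Construction~\ref{cons:cubic}: the excerpt already records that $G_c-G$ has $11x+22y$ vertices, so $|V(G_c)| = |V(G)| + 11x + 22y = O(|V(G)|)$, which is linear in the input size. By Proposition~\ref{pro:bipartite}, \TD\ on bipartite graphs of bounded degree $3$ has no $2^{o(|V(G)|)}$ algorithm under ETH; composing with a linear-size reduction transfers this to \CDC, so \CDC\ on triangle-free cubic graphs admits no $2^{o(|V(G_c)|)}$ algorithm unless ETH fails.

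The main obstacle, and the only place requiring genuine care rather than bookkeeping, is confirming that Construction~\ref{cons:cubic} truly yields a \emph{cubic} graph at every vertex simultaneously. The gadget is engineered so that its boundary vertices have a free slot of exactly the right size to absorb the attachment to $G$, and the case split between degree-$2$ vertices (one gadget) and degree-$1$ vertices (two gadgets) is precisely what equalizes all interface degrees to $3$; I would verify each vertex class explicitly against Figure~\ref{fig:cubic_gadget}. Everything else follows by citing Lemma~\ref{lem:cubic} and Proposition~\ref{pro:bipartite} verbatim.
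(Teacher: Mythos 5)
Your proposal is correct and follows essentially the same route as the paper: the paper's proof simply notes that $G_c$ has $|V(G)|+11(x+2y)$ vertices (so the reduction is linear) and then cites Lemma~\ref{lem:cubic} and Proposition~\ref{pro:bipartite}. Your additional explicit checks --- that $G_c$ is triangle-free and $3$-regular, and NP membership --- are details the paper leaves implicit in Construction~\ref{cons:cubic}, and your degree bookkeeping for the gadget vertices and the degree-$1$/degree-$2$ attachment scheme is accurate.
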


\begin{proof}
    Notice that $G_c$ has $|V(G)|+ 11(x+2y)$ vertices, where $x$ and $y$ are the number of vertices in $G$ having degree 2 and degree 1, respectively. We know that there is a linear reduction from \textsc{Total Domination} on bipartite graphs with bounded degree 3 to \CDC\ on triangle-free cubic graphs due to Lemma~\ref{lem:cubic}. Thus, by using Proposition~\ref{pro:bipartite}, we are done.
\end{proof}

Now, we generalize  Construction~\ref{cons:cubic} to prove the hardness of \CDC\ on triangle-free $d$-regular graphs, for any constant $d\geq 4$. The following construction is used to prove the hardness of \CDC\ on triangle-free $d$-regular graphs, for any odd integer $d\geq 5$. 
 \begin{construction}
    \label{cons:d-regular odd}
     Let $G$ be a triangle-free $(d-2)$-regular graph, for any odd integer $d\geq 5$. 
     We construct a graph $G_c$ from $G$ 
    in the following way: 
    \begin{itemize}
        \item First we construct a gadget $W$ as follows:  introduce a star graph $K_{1,d-1}$, which we denote by $S$, having the vertex $a$ as the root vertex. Note that we also call `$a$' as the root vertex of $W$.
        Further introduce $2(d-1)^2$ vertices which induces $d-1$ disjoint copies of complete bipartite graphs $K_{d-1,d-1}$, namely $C_{1},  C_{2}\ldots,C_{d-1}$, respectively.      
        The adjacency between these complete bipartite graphs $C_i=(A_i,B_i)$, for $1\leq i\leq (d-1)$, and the star graph $S$ is in such a way that the vertices of $A_i$ of the complete bipartite graph $C_i$ is adjacent to the leaf vertex, say, $v_i$  of $S$.
        Furthermore, for each odd integer $i\leq (d-2)$, each vertex in the partite set $B_i$ of $C_i$ is adjacent to exactly one vertex of the partite set $B_{i+1}$ of $C_{i+1}$ (for instance refer  Figure~\ref{fig:d-regular_gadget odd}, when $d=5$). Now, by $H_i'$, we denote the subgraph of $W$ induced by the vertices in the two complete bipartite graphs $C_i$ and $C_{i+1}$, for an odd integer $i\leq d-2$.  Then, by $H_i$, we denote the subgraph of $W$ induced by the vertices in $H_i'$ and the two leaves $v_i$ and $v_{i+1}$ of the star graph $S$ which are adjacent to the partitions $A_i$ and  $A_{i+1}$ of $C_i$ and $C_{i+1}$, respectively. Note that $H_i'=H_i-\{v_i,v_{i+1}\}$. The gadget $W$ contains $(2d^2-3d+2)$ vertices.  
     
     \item Now the graph $G_c$ is constructed from $G$ using $W$ in such a way that for each vertex $u$ in $G$, introduce two copies of $W$ such that root vertices of both the copies of $W$ is attached to $u$. 
     \end{itemize}
     Note that  $G_c-G$ contains $2n(2d^2-3d+2)$ vertices, where $n$ is the number of vertices in $G$.
\end{construction}

An example of the gadget $W$ in Construction~\ref{cons:d-regular odd} is shown in Figure~\ref{fig:d-regular_gadget odd}. We have the following observation for the gadget $W$ of Construction~\ref{cons:d-regular odd}. 

\begin{figure}[ht]
  \centering
    \centering
    \resizebox{1\textwidth}{!}{\begin{tikzpicture}[myv/.style={circle, draw, inner sep=0pt},myv1/.style={circle, draw, inner sep=0pt,color=white},myv2/.style={rectangle, dotted, draw,inner sep=0pt,line width = 0.3mm},myv3/.style={circle, draw, inner sep=6pt},myv4/.style={rectangle, dashed, draw,inner sep=0pt,line width = 0.3mm}]

  \node[myv3] (a) at (8.5,6) {\huge{$a$}};
  \node[myv3] [fill=black] (b) at (-4,2) {\huge{\textcolor{white}{$b$}}};
  \node[myv3] [fill=black] (c) at (4,2) {\huge{\textcolor{white}{$c$}}};
  \node[myv3]  [fill=black](b1) at (12,2) {};
  \node[myv3]  [fill=black](c1) at (20,2) {};

  \node[myv]  [fill=black] (d1) at (-7,0) {\huge{\textcolor{white}{$d1$}}};
  \node[myv] (d2) at (-5,0) {\huge{$d2$}};
  \node[myv] (d3) at (-3,0) {\huge{$d3$}};
  \node[myv] (d4) at (-1,0) {\huge{$d4$}};

  \node[myv]  [fill=black](f1) at (1,0) {\huge{\textcolor{white}{$f1$}}};
  \node[myv] (f2) at (3,0) {\huge{$f2$}};
  \node[myv] (f3) at (5,0) {\huge{$f3$}};
  \node[myv] (f4) at (7,0) {\huge{$f4$}};
 
  \node[myv] (e1) at (-7,-2) {\huge{{$e1$}}};
  \node[myv] (e2) at (-5,-2) {\huge{$e2$}};
  \node[myv] (e3) at (-3,-2) {\huge{$e3$}};
  \node[myv] (e4) at (-1,-2) {\huge{$e4$}};
  \node[myv] (g1) at (1,-2) {\huge{{$g1$}}};
  \node[myv] (g2) at (3,-2) {\huge{$g2$}};
  \node[myv] (g3) at (5,-2) {\huge{$g3$}};
  \node[myv] (g4) at (7,-2) {\huge{$g4$}};

  \node[myv3]  [fill=black](h1) at (9,0) {};
  \node[myv3] (h2) at (11,0) {};
  \node[myv3] (h3) at (13,0) {};
  \node[myv3] (h4) at (15,0) {};

  \node[myv3]  [fill=black](j1) at (17,0) {};
  \node[myv3] (j2) at (19,0) {};
  \node[myv3] (j3) at (21,0) {};
  \node[myv3] (j4) at (23,0) {};
 
  \node[myv3] (i1) at (9,-2) {};
  \node[myv3] (i2) at (11,-2) {};
  \node[myv3] (i3) at (13,-2) {};
  \node[myv3] (i4) at (15,-2) {};
  \node[myv3] (k1) at (17,-2) {};
  \node[myv3] (k2) at (19,-2) {};
  \node[myv3] (k3) at (21,-2) {};
  \node[myv3] (k4) at (23,-2) {};

   \node[myv1] (l)  at (3,-3.5) {};
   \node[myv1] (m)  at (3,-4) {};
  
    \draw (a) -- (b);
    \draw (a) -- (c);
    \draw (a) -- (b1);
    \draw (a) -- (c1);
    \draw (b) -- (d1);
    \draw (b) -- (d2);
    \draw (b) -- (d3);
    \draw (b) -- (d4);
    \draw (c) -- (f1);
    \draw (c) -- (f2);
    \draw (c) -- (f3);
    \draw (c) -- (f4);
    
    \draw (d1) -- (e1);
    \draw (d1) -- (e2);
    \draw (d1) -- (e3);
    \draw (d1) -- (e4);
    \draw (d2) -- (e1);
    \draw (d2) -- (e2);
    \draw (d2) -- (e3);
    \draw (d2) -- (e4);

    \draw (d3) -- (e1);
    \draw (d3) -- (e2);
    \draw (d3) -- (e3);
    \draw (d3) -- (e4);
    \draw (d4) -- (e1);
    \draw (d4) -- (e2);
    \draw (d4) -- (e3);
    \draw (d4) -- (e4);

    \draw (f1) -- (g1);
    \draw (f1) -- (g2);
    \draw (f1) -- (g3);
    \draw (f1) -- (g4);
    \draw (f2) -- (g1);
    \draw (f2) -- (g2);
    \draw (f2) -- (g3);
    \draw (f2) -- (g4);

    \draw (f3) -- (g1);
    \draw (f3) -- (g2);
    \draw (f3) -- (g3);
    \draw (f3) -- (g4);
    \draw (f4) -- (g1);
    \draw (f4) -- (g2);
    \draw (f4) -- (g3);
    \draw (f4) -- (g4);

    \draw (h1) -- (i1);
    \draw (h1) -- (i2);
    \draw (h1) -- (i3);
    \draw (h1) -- (i4);
    \draw (h2) -- (i1);
    \draw (h2) -- (i2);
    \draw (h2) -- (i3);
    \draw (h2) -- (i4);

    \draw (h3) -- (i1);
    \draw (h3) -- (i2);
    \draw (h3) -- (i3);
    \draw (h3) -- (i4);
    \draw (h4) -- (i1);
    \draw (h4) -- (i2);
    \draw (h4) -- (i3);
    \draw (h4) -- (i4);

    \draw (j1) -- (k1);
    \draw (j1) -- (k2);
    \draw (j1) -- (k3);
    \draw (j1) -- (k4);
    \draw (j2) -- (k1);
    \draw (j2) -- (k2);
    \draw (j2) -- (k3);
    \draw (j2) -- (k4);

    \draw (j3) -- (k1);
    \draw (j3) -- (k2);
    \draw (j3) -- (k3);
    \draw (j3) -- (k4);
    \draw (j4) -- (k1);
    \draw (j4) -- (k2);
    \draw (j4) -- (k3);
    \draw (j4) -- (k4);

    \draw (b1) -- (h1);
    \draw (b1) -- (h2);
    \draw (b1) -- (h3);
    \draw (b1) -- (h4);
    \draw (c1) -- (j1);
    \draw (c1) -- (j2);
    \draw (c1) -- (j3);
    \draw (c1) -- (j4);

    \draw (e1) to [bend right=20] (g4);
    \draw (e2) to [bend right=20] (g3);
    \draw (e3) to [bend right=20] (g2);
    \draw (e4) to [bend right=20] (g1);
    \draw (i1) to [bend right=20] (k4);
    \draw (i2) to [bend right=20] (k3);
    \draw (i3) to [bend right=20] (k2);
    \draw (i4) to [bend right=20] (k1);

   \node[myv4][fit=  (d1) (e1) (g4)  (l),  inner xsep=1.5ex, inner ysep=2.5ex, label=above:{\huge{$H_1'$}}] {};
   \node[myv2][fit=(b) (e1) (g4) (m),  inner xsep=5.5ex, inner ysep= 5.75ex, label=above:\huge{$H_1$}] {}; 
 
\end{tikzpicture}}
    \caption{An example of the  gadget $W$ used in Construction~\ref{cons:d-regular odd}, for $d=5$  such that the shaded vertices  correspond to the dominating vertices of the color classes in a valid $cd$-coloring of $W$.} 
    \label{fig:d-regular_gadget odd}
  \end{figure}

\begin{observation}
    \label{obs:Hcoloring_regular_odd} 
     Let $W$ be a gadget, and  
    $H_i$ as well as $H_i'$ be the subgraphs of $W$ as defined in Construction~\ref{cons:d-regular odd}. 
     In any $cd$-coloring of $H_i$, the vertices in $H_i'$ require at least 4 colors. 
     Moreover, there exists a $cd$-coloring of $W$ using exactly $2(d-1)$ colors.
\end{observation}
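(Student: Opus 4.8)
The plan is to follow the template of Observation~\ref{obs:Hcoloring_cubic} (the cubic case), splitting the argument into a lower-bound part (any $cd$-coloring of $H_i$ uses at least $4$ colors on $H_i'$) and an existence part (an explicit $cd$-coloring of $W$ with exactly $2(d-1)$ colors). Throughout I write $C_i=(A_i,B_i)$ and $C_{i+1}=(A_{i+1},B_{i+1})$ for the two copies of $K_{d-1,d-1}$ inducing $H_i'$, joined by a perfect matching between $B_i$ and $B_{i+1}$, and $v_i,v_{i+1}$ for the two star-leaves adjacent to $A_i$ and $A_{i+1}$, so that $H_i=H_i'\cup\{v_i,v_{i+1}\}$.

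For the lower bound I would first exhibit an induced $C_6$ inside $H_i'$. Picking $a_1\in A_i$, distinct $b_1,b_2\in B_i$, their matched partners $b_1',b_2'\in B_{i+1}$, and $a_1'\in A_{i+1}$, the cycle $a_1-b_1-b_1'-a_1'-b_2'-b_2-a_1$ uses only completeness edges inside each $C_j$ and the two matching edges; it is induced because every candidate chord would have to join $A_i$ to $B_{i+1}$, $B_i$ to $A_{i+1}$, $A_i$ to $A_{i+1}$, or a non-matched pair of $B_i,B_{i+1}$, none of which is an edge. Since $\XCD(C_6)=4$, it then remains to apply Observation~\ref{obs:cdcolor_reduce} twice. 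First, I would check that every independent set of this $C_6$ that is \emph{not} dominated inside the $C_6$ --- namely the two triples $\{a_1,b_1',b_2'\}$, $\{b_1,a_1',b_2\}$ and the three ``long'' pairs $\{a_1,a_1'\}$, $\{b_1,b_2'\}$, $\{b_1',b_2\}$ --- has no dominating vertex elsewhere in $H_i'$; this is a short neighbourhood computation, since a common neighbour of two vertices on opposite sides is forced into an empty intersection such as $A_i\cap A_{i+1}$ or $B_i\cap B_{i+1}$, or (for the triples) into a single matched vertex that fails to be adjacent to the third vertex. This gives $\XCD(H_i')\geq 4$. Second, the only vertices of $H_i\setminus H_i'$ are $v_i,v_{i+1}$, whose neighbourhoods inside $H_i$ are exactly $A_i$ and $A_{i+1}$; any independent set they dominate is a subset of $A_i$ (resp. $A_{i+1}$), which is already dominated inside $H_i'$ by any vertex of $B_i$ (resp. $B_{i+1}$). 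A second application of Observation~\ref{obs:cdcolor_reduce} then shows that restricting any $cd$-coloring of $H_i$ to $H_i'$ is again a $cd$-coloring, so $H_i'$ still needs at least $4$ colors.

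For the existence of a $2(d-1)$-coloring of $W$, I would colour each of the $(d-1)/2$ blocks $H_i'$ (odd $i\le d-2$) with its own four fresh colours, one for each of $A_i,B_i,A_{i+1},B_{i+1}$. This is proper, as all cross edges (including the $B_i$--$B_{i+1}$ matching) join differently coloured parts, and each class is dominated by a single vertex of the opposite side of its $K_{d-1,d-1}$; in total this uses $4\cdot(d-1)/2=2(d-1)$ colours. I would then reuse these colours on the star: colour $v_i$ with the colour of $B_i$ (it is non-adjacent to $B_i$, and the enlarged class $B_i\cup\{v_i\}$ remains dominated by any vertex of $A_i$, which is adjacent to all of $B_i$ and to $v_i$), and likewise $v_{i+1}$ with the colour of $B_{i+1}$. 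Finally I would colour the root $a$ with the colour of $A_1$; since $a$ is adjacent only to leaves, which now all carry ``$B$-colours'', this is proper, and the class $A_1\cup\{a\}$ is dominated by $v_1$, which is adjacent to every vertex of $A_1$ and to $a$. This produces a valid $cd$-coloring of all of $W$ with exactly $2(d-1)$ colours.

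The main obstacle I expect is the bookkeeping in the existence part rather than in the lower bound: one must choose the leaf and root colours so that augmenting a colour class with a leaf (or the root) preserves single-vertex domination, and the root's class is the one case where the dominator is a leaf instead of a vertex of some $C_j$. Verifying that this assignment is simultaneously proper and domination-preserving across all $(d-1)/2$ blocks, together with the routine chord- and common-neighbour checks needed to invoke Observation~\ref{obs:cdcolor_reduce}, is where the argument has to be written out carefully.
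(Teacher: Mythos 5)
Your proof is correct and follows essentially the same route as the paper's: exhibit an induced $C_6$ in $H_i'$ (the paper uses the very same one-vertex-of-$A_i$, two-of-$B_i$, matched-partners, one-vertex-of-$A_{i+1}$ configuration), apply Observation~\ref{obs:cdcolor_reduce} twice to lift $\XCD(C_6)=4$ through $H_i'$ to $H_i$, and then extend a per-block 4-coloring to a $2(d-1)$-$cd$-coloring of $W$ using the star leaves and $A_i$-vertices as dominators. If anything you are more thorough than the paper, which checks only the undominated $3K_1$s (omitting the antipodal pairs you verify) and leaves the explicit $2(d-1)$-coloring as a one-line extension.
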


\begin{proof}
    The graph induced by $H_i'$ contains a $C_6$ (for instance: induced by the vertices $d1,e1,g4,f4,g3,$ $e2$ as shown in Figure~\ref{fig:d-regular_gadget odd}). 
    It is clear that \XCD($C_6$)=4. 
    Now consider the subgraph  $H_i'$. It is obtained by 
    the introduction of new vertices adjacent to some vertices of the $C_6$.   Since none of the $3K_1$s present in the $C_6$ is dominated by any of the remaining vertices in $H_i'$, by Observation~\ref{obs:cdcolor_reduce}, it is clear that the number of colors needed for any $cd$-coloring of $H_i'$ is at least 4. Further, note that all the independent sets in $H_i'$, which are dominated by the leaf vertices $v_i$ or $v_{i+1}$ of the star graph $S$ (for instance the vertices $b$ and $c$ shown in Figure~\ref{fig:d-regular_gadget odd}), are already dominated by some vertex in $H_i'$. Therefore, again, by Observation~\ref{obs:cdcolor_reduce}, it is clear that the number of colors needed for $V(H_i')$ in any $cd$-coloring of $H_i$ is at least 4. 
    Since the root vertex $a$ of the gadget is not adjacent to any of the vertices in $H_i'$, the presence of $a$ will not reduce the number of colors needed for $cd$-coloring of the subgraph $H_i'$ of the gadget $W$. Hence, the number of colors needed for $cd$-coloring of $H_i$ is at least $4$. Therefore, the number of colors needed for $cd$-coloring of $W$ is at least $2(d-1)$, as there are $(d-1)/2$ copies of $H_i$ is present in $W$. 
    Now it is easy to see that a 4-$cd$-coloring of a $C_6$ in each $H_i'$ can be extended to a $2(d-1)$-$cd$-coloring of the gadget $W$ (by using the leaves of star graph $S$, and one of the vertices from the partition $A_i$ of each $C_i$, for $1\leq i\leq d-1$, as the dominating vertices of the color classes). 
\end{proof}

The following lemma is based on the Construction~\ref{cons:d-regular odd}.

\begin{lemma}
    \label{lem:d-regular odd}
    For each odd integer $d\geq 5$, let $G$ be a triangle-free ($d-2$)-regular graph, having $n$ vertices. Then $\XCD(G)=k$ if and only if $\XCD(G_c)=k+4n(d-1)$. 
\end{lemma}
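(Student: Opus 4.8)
The plan is to prove the single identity $\XCD(G_c) = \XCD(G) + 4n(d-1)$, from which the stated ``if and only if'' is immediate. Since $G_c$ is obtained by attaching, to each of the $n$ vertices of $G$, two disjoint copies of the gadget $W$, there are exactly $2n$ gadgets; I will show that each contributes exactly $2(d-1)$ ``private'' colors to the $cd$-chromatic number of $G_c$, producing the additive term $2n\cdot 2(d-1) = 4n(d-1)$. This mirrors the structure of Lemma~\ref{lem:cubic}, with Observation~\ref{obs:Hcoloring_regular_odd} playing the role that Observation~\ref{obs:Hcoloring_cubic} played there.

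First I would establish the upper bound $\XCD(G_c)\le \XCD(G)+4n(d-1)$. Starting from an optimal $cd$-coloring of $G$, I extend it by coloring each of the $2n$ gadgets with a fresh palette of $2(d-1)$ colors, using the $cd$-coloring of $W$ guaranteed by Observation~\ref{obs:Hcoloring_regular_odd}. The only edges joining a gadget to the rest of $G_c$ run from the root $a$ to its attachment vertex $u\in V(G)$; since $a$ carries a gadget color and $u$ a $G$-color, the coloring stays proper, the $G$-classes remain dominated by their original dominators (no $G$-class gains a member, as $a$ is gadget-colored), and every gadget-class is dominated inside its own gadget. This yields a valid $cd$-coloring of $G_c$ with $\XCD(G)+4n(d-1)$ colors.

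For the matching lower bound I would exploit the locality of the gadgets. The decisive structural fact is that the only vertex of $W$ adjacent to a vertex outside $W$ is the root $a$, and $a$ is non-adjacent to every vertex of each $H_i'$. Hence, in any $cd$-coloring of $G_c$, a color class meeting $H_i'$ must be dominated by a non-root vertex of $W$, whose entire neighborhood lies inside $W$; such a class is therefore confined to $V(W)$. Moreover the neighborhoods associated with distinct $H_i'$ are pairwise disjoint, so a class meeting one $H_i'$ cannot meet another. By Observation~\ref{obs:Hcoloring_regular_odd} each of the $(d-1)/2$ parts $H_i'$ forces at least $4$ colors, and by confinement these palettes are disjoint, giving at least $2(d-1)$ colors that are ``private'' to the gadget (used on no vertex outside $W$). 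As distinct gadgets are vertex-disjoint, their private color sets are disjoint, so at least $4n(d-1)$ colors are private to gadgets and at most $\XCD(G_c)-4n(d-1)$ colors can appear on $V(G)$.

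It remains to verify that restricting an optimal $cd$-coloring of $G_c$ to $V(G)$ is itself a $cd$-coloring of $G$, which I expect to be the only delicate point. The restriction is proper and uses at most $\XCD(G_c)-4n(d-1)$ colors; the subtlety is domination of the $G$-classes. A $G$-class whose $G_c$-dominator lies in $V(G)$ is still dominated. A class dominated only by a gadget vertex contains at most one vertex of $G$, since a non-root gadget vertex has no $G$-neighbor while a root $a$ has the single $G$-neighbor $u$; such a singleton $\{u\}$ is dominated in $G$ by any neighbor of $u$, which exists because $G$ is $(d-2)$-regular with $d\ge 5$. Hence the restriction is a $cd$-coloring, giving $\XCD(G)\le \XCD(G_c)-4n(d-1)$, and combined with the upper bound this yields the identity. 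The main obstacle is making the ``private colors'' argument fully rigorous, i.e. proving that colors used deep inside a gadget can be shared neither with $G$ nor across gadgets; everything rests on the neighborhood-locality of $W$ recorded above.
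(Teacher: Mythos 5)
Your proof is correct and takes essentially the same route as the paper's: the upper bound by extending an optimal $cd$-coloring of $G$ with fresh palettes of $2(d-1)$ colors per gadget via Observation~\ref{obs:Hcoloring_regular_odd}, and the lower bound by using the neighborhood-locality of $W$ to show that the at least $4$ colors forced by each $H_i'$ are confined to their gadget and hence private, so that the restriction to $V(G)$ is a $cd$-coloring. If anything, you are more careful than the paper at the two points it glosses over, namely why classes meeting distinct $H_i'$s must be distinct, and why a class dominated only by a root $a$ restricts to a singleton $\{u\}$ that is still dominated within $G$.
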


\begin{proof}
    Let $\XCD(G)=k$. We claim that $\XCD(G_c)=k+4n(d-1)$.  Recall that the graph $G_c$ contains $2n$ gadgets as there are $n$ vertices in $G$ each having degree $d-2$. By Observation~\ref{obs:Hcoloring_regular_odd}, it is clear that any $cd$-coloring of these gadgets needs a total of at least $4n(d-1)$ colors. Since \XCD($G$)=$k$, we then have \XCD($G_c$)=$k+4n(d-1)$. 

    Now, for the reverse direction, let $\XCD(G_c)=m$. Then we claim that $\XCD(G)=k$, where $k=m-4n(d-1)$. 
    Note that for each gadget $W$, the subgraph $W- \{a\}$ is disjoint from $G$, as only $a$ in $W$ is adjacent to exactly one vertex  $u$ in $G$. Hence, the color class dominated by $a$ can include only one vertex $u$ in $G$. 
     By Observation~\ref{obs:Hcoloring_regular_odd}, in any $cd$-coloring of $G_c$, the vertex set of each copy of $H'$ of any gadget $W$ needs at least 4 colors, and none of these colors can be reused for coloring the vertices in $V(G_c)\setminus V(W)$. Again, by Observation~\ref{obs:Hcoloring_regular_odd}, the same 4 colors used to color the vertices in $H_i'$ can be reused to color the entire vertices of $H_i$. Hence, the 2($d-1$) colors used for $(d-1)/2$ copies of $H_i$ in $W$ can be reused to color the entire vertices of $W$. Therefore, we now have a $cd$-coloring of $G_c$ using $m$ colors such that no colors used for the vertices in gadgets are used to color any vertex in $G$. Thus, $\XCD(G)=m-4n(d-1)=k$, as there are $2n$ copies of $W$s in $G_c$.  
\end{proof}

The following construction is used to prove the hardness of \CDC\ on triangle-free $d$-regular graphs, for any even integer $d\geq 4$.  
\begin{construction}
    \label{cons:d-regular even}
     Let $G$ be a triangle-free $(d-1)$-regular graph, for any even integer $d\geq 4$.  We construct a graph $G_c$ from $G$ 
    in the following way: 
    \begin{itemize}
        \item First we construct a gadget $W$ as follows: introduce two sets of $2d^2-5d+3$ vertices which induce two subgraphs $W_1$ and $W_2$ in $W$. 
        The adjacency among the vertices in $W_1$ (respectively $W_2$) is in such  a way that the $d-1$ vertices of $W_1$ (respectively $W_2$) induces a star graph $K_{1,d-2}$,  which we denote by $S_1$ (respectively $S_2$), having the vertex $a$ in $W_1$ (respectively the vertex $b$ in $W_2$) as the root vertex. Note that the vertices $a$ and $b$ are adjacent. 
         
         \item Further in each set $W_i$, for $i\in \{1,2\}$, the remaining $(2d^2-6d+4)$ vertices  induces $d-2$ disjoint copies of complete bipartite graphs $K_{d-1,d-1}$, namely $C_{i1},  C_{i2}\ldots,$ $C_{i(d-2)}$ respectively.      
     The adjacency between these complete bipartite graphs $C_{ij}$, for $i\in\{1,2\}$ and $1\leq j\leq d-2$ and star graph $S_i$,  is in such a way that the vertices of one of the partition $A_{ij}$  of a complete bipartite graph $C_{ij}$ is adjacent to the leaf $v_j$  of  $S_i$.       
     Furthermore, for each odd integer $j\leq d-3$, each vertex of the partite set $B_{ij}$ of $C_{ij}$ is adjacent to exactly one vertex of  $B_{i(j+1)}$ of $C_{i(j+1)}$  (for instance, refer Figure~\ref{fig:d-regular_gadget even}, when $d=4$). 
     Now, by $H_{ij}'$ we denote the subgraph of $W_i$ induced by the vertices in the two complete bipartite graphs $C_{ij}$ and $C_{i(j+1)}$, for an odd integer $j\leq d-3$.  Then, by $H_{ij}$ we denote the subgraph of $W_i$ induced by the vertices in $H_{ij}'$ and  the two leaves $v_j$ and $v_{j+1}$ of the star graph $S_i$ which are adjacent to the partitions $A_{ij}$ and  $A_{i(j+1)}$ of $C_{ij}$ and $C_{i(j+1)}$, respectively. Note that $H_{ij}'=H_{ij}-\{v_j,v_{j+1}\}$.
     
     \item Now the graph $G_c$ is constructed from $G$ using $W$ in the following way: consider an arbitrary pair-wise ordering $(v_1,v_2), (v_3,v_4)\ldots,(v_{n-1}, v_n)$ of vertices, in $G$.  Note that such a pairing is possible, since $n$ is even (as $d-1$ is odd). 
     For a pair of vertices $(v_j,v_{j+1})$, for odd integer $j\leq n-1$ in this ordering, introduce a gadget $W$ such that the vertex $a$ (respectively $b$) of $W$ is adjacent to $v_j$ (respectively $v_{j+1}$) of $G$.
     \end{itemize}
     The graph $G_c-G$ contains $n(2d^2-5d+3)$ vertices, where $n$ is the number of vertices in $G$.
     
\end{construction}

An example of the gadget $W$ in Construction~\ref{cons:d-regular even} is shown in Figure~\ref{fig:d-regular_gadget even}. The following observation is true for the gadget $W$ of Construction~\ref{cons:d-regular even}. 

\begin{figure}[ht]
  \centering
    \centering
    \resizebox{0.8\textwidth}{!}{\begin{tikzpicture}[myv/.style={circle, draw, inner sep=0pt},myv1/.style={circle, draw, inner sep=0pt,color=white},myv2/.style={rectangle, draw,inner sep=0pt,line width = 0.5mm},myv3/.style={circle, draw, inner sep=6pt},myv4/.style={circle, inner sep=0pt}]

  \node[myv3] (a) at (-2,6) {\huge{$a$}};
  \node[myv3] (a1) at (11,6) {\huge{$b$}};
  \node[myv3]  [fill=black](b) at (-5,2) {\huge{\textcolor{white}{$c$}}};
  \node[myv3]  [fill=black](c) at (1,2) {\huge{\textcolor{white}{$d$}}};
  \node[myv3]  [fill=black](b1) at (8,2) {};
  \node[myv3]  [fill=black](c1) at (14,2) {};

  \node[myv]  [fill=black](d1) at (-7,0) {\huge{\textcolor{white}{$d1$}}};
  \node[myv] (d2) at (-5,0) {\huge{$d2$}};
  \node[myv] (d3) at (-3,0) {\huge{$d3$}};

  \node[myv]  [fill=black](f1) at (-1,0) {\huge{\textcolor{white}{$f1$}}};
  \node[myv] (f2) at (1,0) {\huge{$f2$}};
  \node[myv] (f3) at (3,0) {\huge{$f3$}};
   
  \node[myv] (e1) at (-7,-2) {\huge{$e1$}};
  \node[myv] (e2) at (-5,-2) {\huge{$e2$}};
  \node[myv] (e3) at (-3,-2) {\huge{$e3$}};
 
  \node[myv] (g1) at (-1,-2) {\huge{$g1$}};
  \node[myv] (g2) at (1,-2) {\huge{$g2$}};
  \node[myv] (g3) at (3,-2) {\huge{$g3$}};

  \node[myv3]  [fill=black](h1) at (6,0) {};
  \node[myv3] (h2) at (8,0) {};
  \node[myv3] (h3) at (10,0) {};
  
  \node[myv3]  [fill=black](j1) at (12,0) {};
  \node[myv3] (j2) at (14,0) {};
  \node[myv3] (j3) at (16,0) {};
 
  \node[myv3] (i1) at (6,-2) {};
  \node[myv3] (i2) at (8,-2) {};
  \node[myv3] (i3) at (10,-2) {};
 
  \node[myv3] (k1) at (12,-2) {};
  \node[myv3] (k2) at (14,-2) {};
  \node[myv3] (k3) at (16,-2) {};
 
   \node[myv1] (l)  at (4,-3.5) {};
   \node[myv1] (m)  at (4,-4) {};
   
   \node[myv1] (n)  at (14,-4.5) {};
   
    \draw (a) -- (a1);
    \draw (a) -- (b);
    \draw (a) -- (c);
    \draw (a1) -- (b1);
    \draw (a1) -- (c1);
    \draw (b) -- (d1);
    \draw (b) -- (d2);
    \draw (b) -- (d3);

    \draw (c) -- (f1);
    \draw (c) -- (f2);
    \draw (c) -- (f3);
    
    \draw (d1) -- (e1);
    \draw (d1) -- (e2);
    \draw (d1) -- (e3);
    
    \draw (d2) -- (e1);
    \draw (d2) -- (e2);
    \draw (d2) -- (e3);
  
    \draw (d3) -- (e1);
    \draw (d3) -- (e2);
    \draw (d3) -- (e3);
  
    \draw (f1) -- (g1);
    \draw (f1) -- (g2);
    \draw (f1) -- (g3);
   
    \draw (f2) -- (g1);
    \draw (f2) -- (g2);
    \draw (f2) -- (g3);
   
    \draw (f3) -- (g1);
    \draw (f3) -- (g2);
    \draw (f3) -- (g3);
   
    \draw (h1) -- (i1);
    \draw (h1) -- (i2);
    \draw (h1) -- (i3);
   
    \draw (h2) -- (i1);
    \draw (h2) -- (i2);
    \draw (h2) -- (i3);
   
    \draw (h3) -- (i1);
    \draw (h3) -- (i2);
    \draw (h3) -- (i3);
   
    \draw (j1) -- (k1);
    \draw (j1) -- (k2);
    \draw (j1) -- (k3);
   
    \draw (j2) -- (k1);
    \draw (j2) -- (k2);
    \draw (j2) -- (k3);
  
    \draw (j3) -- (k1);
    \draw (j3) -- (k2);
    \draw (j3) -- (k3);
    
    \draw (b1) -- (h1);
    \draw (b1) -- (h2);
    \draw (b1) -- (h3);
  
    \draw (c1) -- (j1);
    \draw (c1) -- (j2);
    \draw (c1) -- (j3);
     
    \draw (e1) to [bend right=20] (g3);
    \draw (e2) to [bend right=20] (g2);
    \draw (e3) to [bend right=20] (g1);
   
    \draw (i1) to [bend right=20] (k3);
    \draw (i2) to [bend right=20] (k2);
    \draw (i3) to [bend right=20] (k1);

    \node[myv2][dotted](H)[fit=(b) (e1) (g3) (m),  inner xsep=3.5ex, inner ysep=3.00ex] {}; 
    
    \node[myv4] at (-7,3.75) {\huge{$H_{11}$}};
    
    \node[myv2][dashed][fit=  (d1) (e1) (g3)  (l),  inner xsep=0.6ex, inner ysep=2.5ex] {};
    
     \node[myv4] at (-7,1.75) {\huge{$H_{11}'$}};

     \node[myv2][dashed][fit=  (a) (b) (H)  ,  inner xsep=1.5ex, inner ysep=2.5ex] {};

      \node[myv4] at (-7,7.75) {\huge{$W_1$}};

     \node[myv2][dashed][fit=  (a1) (k3) (h1) (n),  inner xsep=1.5ex, inner ysep=2.5ex] {};

      \node[myv4] at (7,7.75) {\huge{$W_2$}};

\end{tikzpicture}}
    \caption{An example of the  gadget $W$ used in Construction~\ref{cons:d-regular even}, for $d=4$, such that shaded vertices correspond to the dominating vertices of the color classes in a valid $cd$-coloring of $W$.} 
    \label{fig:d-regular_gadget even}
  \end{figure}

\begin{observation}
    \label{obs:Hcoloring_regular_even}
    Let $W$ be a gadget, and  
    $H_{ij}$ as well as $H_{ij}'$, for $i\in\{1,2\}$ and  an odd integer $j\leq d-3$,  be the subgraphs of $W$ as defined in Construction~\ref{cons:d-regular even}. 
     In any $cd$-coloring of $H_{ij}$, the vertices in $H_{ij}'$ require at least 4 colors. 
     Moreover there exists a $cd$-coloring of $W$ using exactly  $4(d-2)$ colors.
\end{observation}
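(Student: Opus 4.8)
The plan is to reuse the proof of Observation~\ref{obs:Hcoloring_regular_odd} almost verbatim for the first claim, since $H_{ij}$ and $H_{ij}'$ have exactly the internal shape of $H_i$ and $H_i'$ there: two copies $C_{ij},C_{i(j+1)}$ of $K_{d-1,d-1}$ joined by a perfect matching between their $B$-parts, together with the two star-leaves $v_j,v_{j+1}$ attached to the $A$-parts. Concretely, $H_{ij}'$ contains an induced $C_6$ (for instance the one formed, as in the odd case, by one vertex of each $A$-part and two vertices of each $B$-part traversed alternately through the matching edges), and $\XCD(C_6)=4$. Since $H_{ij}'$ is obtained from this $C_6$ only by adding vertices adjacent to it, and since none of the $3K_1$'s of the $C_6$ acquires a dominator among the added vertices, Observation~\ref{obs:cdcolor_reduce} gives $\XCD(H_{ij}')\geq \XCD(C_6)=4$. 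Finally the leaves $v_j,v_{j+1}$ only dominate independent subsets of $A_{ij}\cup A_{i(j+1)}$ that are already dominated inside $H_{ij}'$, so passing from $H_{ij}'$ to $H_{ij}$ cannot lower the count below $4$, again by Observation~\ref{obs:cdcolor_reduce}.

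For the second claim I would colour $W$ block by block. Because $d$ is even, each of $W_1,W_2$ splits into $(d-2)/2$ blocks $H_{ij}$, so $W$ contains $d-2$ of them in total; I assign each block its own palette of $4$ colours, for $4(d-2)$ colours overall. Inside one block I colour the four parts $A_{ij},B_{ij},A_{i(j+1)},B_{i(j+1)}$ monochromatically with the four palette colours, dominating each $A$-part by any vertex of the opposite $B$-part of the same $K_{d-1,d-1}$ and each $B$-part by any vertex of its own $A$-part. Properness is immediate, since the only edges internal to a block join an $A$-part to its $B$-part or match two $B$-parts; this colours every bipartite-graph vertex and every dominating vertex of $W$.

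The only genuinely new point, and where I expect the real work, is fitting the two roots $a,b$ and the star-leaves into these same $4(d-2)$ colours. The idea is to exploit the edge $ab$: put the whole leaf-set of $S_1$ together with $b$ into one colour class dominated by $a$, and the leaf-set of $S_2$ together with $a$ into one class dominated by $b$. Both sets are independent (the leaves of one star are non-adjacent to the root of the other) and lie in the neighbourhood of the dominating root, so these are valid cd-colour classes. For their colours I would reuse a $B$-part colour of some block of $W_1$ and of some block of $W_2$, respectively; since distinct blocks use disjoint palettes and a $B$-colour never coincides with an $A$-colour inside its block, such a reused colour clashes with none of the $A_{1j}$'s (resp.\ $A_{2j}$'s) to which the leaves attach, and the two root-classes automatically receive distinct colours, as the edge $ab$ demands. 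The main obstacle is exactly this conflict-checking: one must verify that a single reused colour can be given to all leaves of a star without colliding with the differently-palettized $A$-parts adjacent to them; everything else is a direct transcription of the odd-case construction.

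For the accompanying reduction one also wants that $4(d-2)$ is forced, and this follows from the first claim: the $d-2$ blocks have no edges between them in the union of the bipartite graphs, and $a,b$ are adjacent only to leaves and to each other, so no single vertex dominates an independent set meeting two different blocks. Hence the $4$ colours guaranteed for each block cannot be shared, giving $\XCD(W)\geq 4(d-2)$ and, with the construction above, equality.
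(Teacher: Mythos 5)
Your lower-bound argument for $H_{ij}'$ and $H_{ij}$ is exactly the paper's: an induced $C_6$ inside $H_{ij}'$, two applications of Observation~\ref{obs:cdcolor_reduce}, and the remark that the roots see no vertex of the primed subgraphs; the block-disjointness argument forcing $\XCD(W)\geq 4(d-2)$ is also sound and matches what the paper uses implicitly. The gap is in your upper-bound construction, precisely at the step you flagged as ``conflict-checking'', and it is not merely a matter of checking properness. If you give the class $\{v_1,\ldots,v_{d-2}\}\cup\{b\}$ (the leaves of $S_1$ plus the opposite root) the same color as some $B$-part $B_{1j}$, then in the resulting coloring these sets form a \emph{single} color class $B_{1j}\cup\{v_1,\ldots,v_{d-2}\}\cup\{b\}$, and the definition of a $cd$-coloring requires one vertex whose neighborhood contains the whole class. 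No such vertex exists: the only vertices adjacent to $b$ are $a$ and the leaves of $S_2$, and none of these is adjacent to any vertex of $B_{1j}$ (the neighbors of $B_{1j}$ are $A_{1j}$ and the matched vertices of $B_{1(j+1)}$). So the merged class is undominated, and your coloring is proper but not a $cd$-coloring; keeping the two root-classes as fresh colors instead would give $4(d-2)+2$ colors, overshooting the bound.

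The paper avoids this by never creating a separate ``leaves class'' at all: each leaf $v_j$ is absorbed into the $B$-class of its own block, i.e.\ the class $B_{ij}\cup\{v_j\}$, which is dominated by any single vertex of $A_{ij}$ (such a vertex is adjacent to all of $B_{ij}$ by completeness of $C_{ij}$ and to $v_j$ since $v_j$ is adjacent to all of $A_{ij}$); the root $a$ (respectively $b$) is then absorbed into one $A$-class, e.g.\ $A_{11}\cup\{a\}$, which is dominated by the leaf $v_1$. This yields exactly four classes per block --- $A_{ij}$ (one of them augmented by the root), $B_{ij}\cup\{v_j\}$, $A_{i(j+1)}$, $B_{i(j+1)}\cup\{v_{j+1}\}$ --- covering all of $W_i$ including stars and root with $2(d-2)$ colors, hence $4(d-2)$ for $W$, with every class dominated by a single vertex. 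Your monochromatic-parts idea is the right skeleton, but the leaves and roots must be folded into existing classes that retain a common dominator, not given a reused color of an unrelated class.
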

\begin{proof}
    The graph induced by $H_{ij}'$ contains a $C_6$ (for instance: induced by the vertices $d1,e1,g3,f3,g2,$ $e2$ as shown in Figure~\ref{fig:d-regular_gadget even}). 
    It is clear that \XCD($C_6$)=4. 
    Now consider the subgraph $H_{ij}'$. It is obtained by 
    the introduction of new vertices adjacent to some vertices of the $C_6$.  Since none of the $3K_1$s present in the $C_6$ is dominated by any of the remaining vertices in $H_{ij}'$, by Observation~\ref{obs:cdcolor_reduce}, it is clear that the number of colors needed for any $cd$-coloring of $H_{ij}'$ is at least 4. 
    Further, note that all the independent sets in $H_{ij}'$, which are dominated by $v_j$ or $v_{j+1}$ of $S_i$, for $i\in\{1,2\}$, are already dominated by some vertex in $H_{ij}'$. Therefore, again by Observation~\ref{obs:cdcolor_reduce}, it is clear that the number of colors needed for $V(H_{ij}')$ in any $cd$-coloring of $H_{ij}$ is at least 4. 
    Since the vertex $a$ (respectively $b$) of $W_1$ (respectively $W_2$) is not adjacent to any of the vertices in $H_{1j}'$ (respectively $H_{2j}'$), the presence of $a$ (respectively $b$) will not reduce the number of colors needed for $cd$-coloring of the subgraph  $H_{1j}'$ (respectively $H_{2j}'$) of the gadget $W$. Hence, the number of colors needed for a $cd$-coloring of  $H_{1j}$ (respectively $H_{2j}$) is at least $4$. Therefore, the number of colors needed for a $cd$-coloring of $W_i$, for $i\in \{1,2\}$, is at least $2(d-2)$, as there are $(d-2)/2$ copies of $H_{ij}$ is present in $W_i$. 
    Hence, it is clear that the number of colors needed for a $cd$-coloring of $W$ is at least $4(d-2)$.
    Now it is easy to see that a 4-$cd$-coloring of a $C_6$ in each $H_{ij}'$ can be extended to a 2$(d-2)-cd$-coloring of the subgraph $W_i$  for $i\in \{1,2\}$, (by using the leaves of star graph $S_i$, and one of the vertices from the partition $A_{ij}$ of each $C_{ij}$, for $1\leq j\leq (d-2)/2$, as the dominating vertices of the color classes). Now it is obvious that, these 2$(d-2)$-$cd$-coloring of the subgraph $W_1$ and $W_2$ can be extended to a 4$(d-2)$-$cd$-coloring of $W$. 
\end{proof}

The following lemma is based on the Construction~\ref{cons:d-regular even}.

\begin{lemma}
    \label{lem:d-regular even}
    For an even integer $d\geq 4$, let $G$ be a triangle-free $(d-1)$-regular graph having $n$ vertices. Then $\XCD(G)=k$ if and only if $\XCD(G_c)=k+2n(d-2)$. 
\end{lemma}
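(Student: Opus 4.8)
The plan is to mirror the proof of Lemma~\ref{lem:d-regular odd} almost verbatim, replacing Observation~\ref{obs:Hcoloring_regular_odd} by Observation~\ref{obs:Hcoloring_regular_even} and recounting the gadgets. Concretely, I would prove the single identity $\XCD(G_c)=\XCD(G)+2n(d-2)$, from which the stated biconditional is immediate. The only bookkeeping input is that Construction~\ref{cons:d-regular even} pairs the $n$ (necessarily even, since $d-1$ is odd) vertices of $G$ and attaches exactly one gadget $W$ per pair, so $G_c$ contains $n/2$ copies of $W$; by Observation~\ref{obs:Hcoloring_regular_even} each copy both needs at least and can be coloured with exactly $4(d-2)$ colours, giving the surplus $(n/2)\cdot 4(d-2)=2n(d-2)$.

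For the inequality $\XCD(G_c)\le \XCD(G)+2n(d-2)$ I would fix an optimal $cd$-coloring of $G$ and extend it by colouring each gadget $W$ with its own private palette of $4(d-2)$ fresh colours, available by the ``moreover'' part of Observation~\ref{obs:Hcoloring_regular_even}. Validity is routine to check: properness holds because the only edges between $W$ and $G$ are $av_j$ and $bv_{j+1}$ and the gadget palette is disjoint from that of $G$; the $G$-classes keep their dominators, which lie in $G$ and survive in $G_c$; and the gadget classes are dominated inside $W$ by the coloring furnished in Observation~\ref{obs:Hcoloring_regular_even}.

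For the reverse inequality $\XCD(G_c)\ge \XCD(G)+2n(d-2)$ I would start from an arbitrary $cd$-coloring of $G_c$ with $m$ colours and extract a $cd$-coloring of $G$ with at most $m-2n(d-2)$ colours. The structural point, exactly as in the odd case, is that for each gadget $W$ the subgraph $W-\{a,b\}$ is disjoint from $G$, since the only cross edges are $av_j$ and $bv_{j+1}$; hence every colour class meeting the interior of $W$ (the vertices of the copies of $H_{ij}'$, whose neighbourhoods lie entirely inside $W$) is dominated by a vertex of $W$ and is therefore confined to $W$. Using Observation~\ref{obs:Hcoloring_regular_even}, the four colours on each $H_{ij}'$ propagate to colour all of $H_{ij}$, then all of $W_i$ with $2(d-2)$ colours, and finally all of $W$ with $4(d-2)$ colours that need not appear on any vertex of $G$. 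Discarding these purely internal gadget colours leaves a $cd$-coloring of $G$ on the remaining $m-2n(d-2)$ colours.

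The step I expect to be the main obstacle is this confinement argument, which the odd-case proof treats tersely; here it is slightly more delicate because each gadget meets $G$ at the two adjacent vertices $a$ and $b$ rather than at a single root. The careful point is to check that neither attachment vertex forces a gadget colour to leak onto $G$ nor a $G$-colour onto the gadget interior. Once one observes that every $H_{ij}'$-vertex has all of its neighbours inside $W$ (so that the $4(d-2)$ internal colours can always be chosen disjoint from those used on $G$, and the classes dominated by $a$ or $b$ can each absorb at most one vertex of $G$), the two inequalities combine to give $\XCD(G_c)=\XCD(G)+2n(d-2)$ and hence the lemma.
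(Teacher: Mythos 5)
Your proposal is correct and follows essentially the same route as the paper's proof: both directions rest on Observation~\ref{obs:Hcoloring_regular_even}, the count of $n/2$ gadgets each requiring and admitting exactly $4(d-2)$ private colors, and the confinement argument that colors meeting the interior $H_{ij}'$ vertices cannot leak outside $W$ while the classes dominated by $a$ or $b$ contain at most one vertex of $G$ each. Your treatment of the two attachment vertices $a,b$ is in fact slightly more explicit than the paper's, but it is the same argument, not a different one.
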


\begin{proof}
    Let $\XCD(G)=k$. We claim that $\XCD(G_c)=k+2n(d-2)$. Recall that the graph $G$ contains an even number of vertices as it is a $(d-1)$-regular graph, where $d-1$ is odd. Hence $G_c$ contains ${n}/{2}$ gadgets as there are $n$ vertices in $G$ each having degree $d-1$. By Observation~\ref{obs:Hcoloring_regular_even}, it is clear that these gadgets need at least $2n(d-2)$ colors. Since \XCD($G$)=$k$, we then have \XCD($G_c$)=$k+2n(d-2)$.

    Now, for the reverse direction, let $\XCD(G_c)=m$. Then we claim that $\XCD(G)=k$, where $k=m-2n(d-2)$.  Note that only the vertices $a$ and $b$ of each gadget $W$ are adjacent to some vertices  $u$ and $v$, respectively, in $G$. Hence, the color class dominated by $a$ (respectively $b$) can include only one vertex $u$ (respectively $v$) in $G$.   
    By Observation~\ref{obs:Hcoloring_regular_even}, in any $cd$-coloring of $G_c$, the vertex set of each copy of $H_{ij}'$, for $i\in \{1,2\}$ and an odd integer $j\leq {(d-2)}/{2}$, of any gadget $W$ needs at least 4 colors, and none of these colors can be reused for coloring the vertices in $V(G_c)\setminus V(W)$. Again, by Observation~\ref{obs:Hcoloring_regular_even}, the same 4 colors used to color the vertices in $H_{ij}'$ can be reused to color the entire vertices of $H$. Hence, the 4($d-2$) colors used for $(d-2)$ copies of $H_{ij}$ in $W$ can be reused to color the entire vertices of $W$. Therefore, we now have a $cd$-coloring of $G_c$ using $m$ colors such that no colors used for the vertices in gadgets are used to color any vertex in $G$. Thus, $\XCD(G)=m-2n(d-2)=k$, as there are ${n}/{2}$ copies of $W$s in $G_c$.  
\end{proof}

\noindent\textit{Proof of Theorem~\ref{thm:regular}:} Note that $\mid V(G_{c1})\mid =2n(2d^2-3d+2)$, where $G_{c1}$ is obtained as per construction~\ref{cons:d-regular odd}. Similarly, $\mid V(G_{c2})\mid =n(2d^2-5d+3)$, where $G_{c2}$ is obtained as per construction~\ref{cons:d-regular even}. Hence, both constructions are linear with respect to their size of
inputs, and hence, their associated reductions are also linear. 
Thus, we know that there is a linear reduction from triangle-free $(d-2)$-regular graphs to triangle-free $d$-regular graphs for each fixed odd integer $d\geq 5$ due to Lemma \ref{lem:d-regular odd}. 
We also know that there is a linear reduction from triangle-free $(d-1)$-regular graphs to triangle-free $d$-regular graphs for each fixed even integer $d\geq 4$ due to Lemma \ref{lem:d-regular even}. Therefore, we are done using Proposition~\ref{pro:bipartite} and Theorem~\ref{thm:cubic}. \qed
\section{Separated-Cluster and $cd$-coloring: $cd$-perfectness}
\label{sec:structural}

\label{subsec:cdprfect}
 In this section, we study the relationship of the two problems \SCP\ and \CDC\ and introduce the notion of \textit{cd-perfect graphs}, which is defined below.
\begin{definition}[\textit{cd}-perfect graphs]
    An undirected graph $G$ is said to be $cd$-perfect if, for any induced subgraph $H$ of $G$, we have $\XCD(H)=\SC(H)$.
\end{definition}

  Recall from Definition~\ref{def:aux} that given a graph $G$, the auxiliary graph $G^*$ is the graph having $V(G^*)=V(G)$ and $E(G^*)=E(G^2)\setminus E(G)$. i.e. $E(G^*)= \{uv:u,v\in V(G)$ and $d_G(u,v)=2\}$. First, we study the parameters $\XCD(G)$ and $\SC(G)$ of a graph $G$, by relating them to some well-known graph parameters of the auxiliary graph $G^*$. This simple reduction has various interesting consequences. Since the results in this section heavily rely on $G^*$, throughout this section, we use the following definition of \textit{seperated-cluster}: a set $S\subseteq V(G)$ is a seperated-cluster, if for any pair of vertices $u,v\in V(G)$, we have $d_G(u,v)\neq 2$.

 The following proposition trivially follows from the definition of $G^*$.
\begin{proposition} \label{pro:omega_s}
    For any graph $G$, we have $\SC (G)=\alpha(G^*)$.
\end{proposition}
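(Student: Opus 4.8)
The plan is to show that the two set families---separated-clusters of $G$ and independent sets of $G^*$---are literally identical, so that their maximum cardinalities must agree. Since this is a direct translation between the defining condition of a separated-cluster and the adjacency relation of the auxiliary graph, the argument is essentially a matter of unwinding Definition~\ref{def:aux}.

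First I would record the key reformulation of the edges of $G^*$. By Definition~\ref{def:aux}, the edge set of $G^*$ consists precisely of those pairs $\{u,v\}$ with $d_G(u,v)=2$. Consequently, two vertices $u,v$ are \emph{non-adjacent} in $G^*$ if and only if $d_G(u,v)\neq 2$ in $G$ (this includes the cases $u=v$, adjacency in $G$, and distance at least $3$, all of which simply mean the distance is not exactly $2$).

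Next I would establish the set equality. Fix an arbitrary $S\subseteq V(G)$. By definition, $S$ is a separated-cluster of $G$ if and only if for every pair $u,v\in S$ we have $d_G(u,v)\neq 2$. By the reformulation above, this is equivalent to saying that no two vertices of $S$ are adjacent in $G^*$, i.e. that $S$ is an independent set of $G^*$. Hence $S$ is a separated-cluster of $G$ if and only if $S$ is independent in $G^*$, and this equivalence holds for every subset $S$.

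Finally, since the collection of separated-clusters of $G$ coincides exactly with the collection of independent sets of $G^*$, a set of maximum size in one collection is also of maximum size in the other; taking maxima yields $\SC(G)=\alpha(G^*)$. I do not anticipate any real obstacle, as the claim is purely definitional. The only point requiring a little care is to verify the equivalence in \emph{both} directions---so that every independent set of $G^*$ is recognized as a separated-cluster of $G$, and not merely the converse---which is immediate once the characterization of $G^*$-adjacency in terms of $d_G(\cdot,\cdot)=2$ is in hand.
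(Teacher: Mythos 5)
Your proof is correct and follows exactly the reasoning the paper has in mind: the paper offers no separate argument, simply noting that the proposition ``trivially follows from the definition of $G^*$,'' which is precisely the definitional unwinding you spell out. Your explicit verification that the two set families coincide (in both directions) is a faithful, slightly more detailed rendering of the same observation.
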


We then note some observations that further lead us to some necessary and sufficient conditions for $cd$-perfectness. Even though the derivations of these conditions are apparently simple, they have some interesting consequences for obtaining the results in the latter part of the section.
\begin{observation} \label{obs:cliquecover}
    For any graph $G$, we have $\XCD (G)\geq k(G^*)$.
\end{observation}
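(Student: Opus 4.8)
The plan is to show that any $cd$-coloring of $G$ induces, in a completely transparent way, a partition of $V(G^*)$ into cliques of the same cardinality; the inequality $\XCD(G)\geq k(G^*)$ then follows at once from the minimality built into the definition of the clique cover number $k(G^*)$. Concretely, I would fix an optimal $cd$-coloring $c$ of $G$, say with $k=\XCD(G)$ colors, and let $C_1,C_2,\ldots,C_k$ denote its color classes. By definition these sets partition $V(G)=V(G^*)$, and for each $j$ there is a dominating vertex $v_j\in V(G)$ with $C_j\subseteq N(v_j)$.

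The key (and only substantive) step is to argue that each class $C_j$ is a clique in $G^*$. Take two distinct vertices $u,w\in C_j$. Since $c$ is a proper coloring, $u$ and $w$ are non-adjacent in $G$, so $d_G(u,w)\geq 2$. On the other hand both lie in $N(v_j)$, so $v_j$ is a common neighbor and hence $d_G(u,w)\leq 2$. Combining the two bounds gives $d_G(u,w)=2$, which is precisely the condition for $uw\in E(G^*)$ by Definition~\ref{def:aux}. Thus every pair of distinct vertices of $C_j$ is adjacent in $G^*$, i.e. $C_j$ is a clique of $G^*$ (and singleton or empty classes cause no difficulty, a single vertex being trivially a clique).

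Finally, since $C_1,\ldots,C_k$ partition $V(G^*)$ into cliques of $G^*$, they constitute a clique cover of $G^*$ of size (at most) $k$. By minimality of the clique cover number we conclude $k(G^*)\leq k=\XCD(G)$, as required. I do not expect any genuine obstacle here; the one point demanding care is the two-sided distance estimate yielding $d_G(u,w)=2$ — the lower bound coming from properness of the coloring and the upper bound from the existence of a common dominating vertex.
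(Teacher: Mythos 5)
Your proof is correct and follows essentially the same route as the paper: take the color classes of an optimal $cd$-coloring, show each is a clique in $G^*$ (via properness giving $d_G(u,w)\geq 2$ and the common dominating vertex giving $d_G(u,w)\leq 2$), and read off a clique cover of $G^*$ of size $\XCD(G)$. Your write-up merely makes explicit the two-sided distance estimate that the paper states in one line.
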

\begin{proof}
    Let $\XCD(G)=l$, where $I_1,I_2,\ldots, I_l$ denote the corresponding color classes of $G$. Let $j\in \{1,2,\ldots,l\}$. Then by the definition of $cd$- coloring, we have that $I_j$ is an independent set in $G$ and there exists a vertex $v_j\in V(G)$ such that $I_j\subseteq N_G[v_j]$. This implies that for any pair of vertices $u,v\in V(I_j)$, we have $d_G(u,v)=2$. This further implies that $I_j$ is a clique in $G^*$, and therefore $\{I_1,I_2,\ldots, I_l\}$ is a clique cover of $G^*$. Thus we have $\XCD (G)\geq k(G^*)$.
\end{proof}

Note that the reverse inequality of Observation~\ref{obs:cliquecover} is not necessarily true. For instance, consider the graph $C_6$ (an induced cycle on 6 vertices). It is not difficult to verify that $\XCD(C_6)=4$. But as $C_6^*$ is a disjoint union of two triangles, we have $k(C_6^*)=2<\XCD(C_6)$. In Theorem~\ref{thm:suff_cdclique}, we prove a sufficient condition for a graph to satisfy the equality in Observation~\ref{obs:cliquecover}. First, we define certain graphs. For this, consider the graph $C_6$. Since $C_6$ is bipartite, observe that $V(C_6)$ can be partitioned into two independent sets, say $A$ and $B$, where $\mid A\mid~ =~\mid B\mid~ =3$. We denote by $C_6^1$, $C_6^2$, and $C_6^3$, the graphs obtained by respectively adding \textit{1 edge, 2 edges, and 3 edges} to exactly one of the partite sets $A$ or $B$ of $C_6$ (see Figure~\ref{fig:c6-free}). Let $\mathcal{H}=\{C_6, C_6^1$, $C_6^2,C_6^3\}$ (refer Figure~\ref{fig:c6-free}). Now, we note a useful structural observation for graphs in $\mathcal{H}$. 
\begin{observation}\label{obs:structureH}
    For each $H\in \mathcal{H}$, we have an independent set $S\subseteq V(H)$ with $\mid S\mid~ =3$ such that the vertices in $S$ form an asteroidal-triple in $H$.
\end{observation}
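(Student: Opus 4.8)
The plan is to exhibit a single independent set that certifies the asteroidal triple for all four graphs simultaneously, exploiting the fact that the members of $\mathcal{H}$ differ only by edges added inside one fixed side of the bipartition of $C_6$. Label the six vertices so that $C_6$ is the cycle $a-b-c-d-e-f-a$; the two colour classes of this bipartite graph are then $A=\{a,c,e\}$ and $B=\{b,d,f\}$. By the description of $\mathcal{H}$, the extra edges of $C_6^1,C_6^2,C_6^3$ all lie within one partite set, say $B$. I would take $S=A=\{a,c,e\}$ in every member of $\mathcal{H}$.

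First I would check that $S$ is independent in each $H\in\mathcal{H}$: it is independent in $C_6$, and since no edge is ever added inside $A$, it remains independent after the modifications. Next comes the crucial observation: adding edges only inside $B$ leaves the neighbourhood of every vertex of $A$ untouched, so the closed neighbourhoods of the vertices of $S$ are identical in all four graphs, namely $N[a]=\{a,b,f\}$, $N[c]=\{b,c,d\}$, and $N[e]=\{d,e,f\}$.

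Finally I would verify the asteroidal-triple condition by displaying, for each pair of vertices of $S$, a connecting path that avoids the closed neighbourhood of the third vertex: the path $a-b-c$ joins $a$ and $c$ while avoiding $N[e]=\{d,e,f\}$; the path $c-d-e$ joins $c$ and $e$ while avoiding $N[a]=\{a,b,f\}$; and the path $a-f-e$ joins $a$ and $e$ while avoiding $N[c]=\{b,c,d\}$. Since all edges of $C_6$ survive in each $H$ (we only add edges, never delete), these three paths exist in every member of $\mathcal{H}$, and since the relevant closed neighbourhoods are unchanged, each path still avoids the forbidden set. Hence $S=\{a,c,e\}$ is an asteroidal triple in each $H$.

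The argument involves no serious obstacle; the only thing to get right is the bookkeeping of which partite set receives the extra edges, together with the verification that each avoiding path passes through exactly the ``middle'' vertex ($b$, $d$, or $f$) that lies outside the closed neighbourhood of the excluded endpoint. The point worth emphasising is the \emph{uniformity}: because the modifications are confined to $B$, the same set $A$ and the same three paths certify the asteroidal triple for all four graphs at once, rather than requiring a separate case analysis per graph.
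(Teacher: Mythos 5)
Your proof is correct: taking $S=\{a,c,e\}$ (the partite class untouched by the added edges) and the three two-edge cycle paths $a\text{-}b\text{-}c$, $c\text{-}d\text{-}e$, $a\text{-}f\text{-}e$ does certify an asteroidal triple in all four graphs of $\mathcal{H}$ at once, since the added edges lie inside $\{b,d,f\}$ and hence change neither the independence of $S$ nor the closed neighbourhoods $N[a]$, $N[c]$, $N[e]$. The paper states this observation without proof (treating it as evident from its Figure~1), and your uniform verification is exactly the intended argument, so there is nothing to add.
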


Now, in the following lemma, we prove a property satisfied by $G^*$, when $G$ is restricted to be an $\mathcal{H}$-free graph. 
\begin{lemma}\label{lem:neighborhood}
Let $G$ be an $\mathcal{H}$-free graph and $G^*$ its corresponding auxiliary graph. Let $K$ be any clique in $G^*$. Then there exists a vertex $v_k\in V(G)$ such that $K\subseteq N_G(v_k)$.
\end{lemma}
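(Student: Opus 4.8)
The plan is to prove the statement by induction on $|K|$, exploiting the fact that a clique $K$ in $G^*$ is, by Definition~\ref{def:aux}, exactly a set of vertices that are pairwise at distance precisely $2$ in $G$; the goal is to produce a single vertex $v_k$ adjacent in $G$ to all of them. The base cases are immediate: if $|K|\le 1$ there is nothing to do (using that $G$ has no isolated vertices), and if $K=\{x,y\}$ with $d_G(x,y)=2$, then any shortest $x$--$y$ path of length $2$ supplies a common neighbour. So the entire content lies in the inductive step, where I would argue by contradiction: assuming $|K|=k\ge 3$, that the result holds for all smaller cliques of $G^*$, and that \emph{no} vertex of $G$ dominates all of $K$, I would build a forbidden induced subgraph.

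For the step I would fix three vertices $x_1,x_2,x_3\in K$ and, for each $i\in\{1,2,3\}$, apply the inductive hypothesis to the clique $K\setminus\{x_i\}$ of size $k-1<k$ to obtain a common neighbour $w_i$ of $K\setminus\{x_i\}$. Thus $w_i\sim x_j$ in $G$ for every $j\ne i$, and in particular $w_i$ is adjacent to the other two of $x_1,x_2,x_3$. The crucial point is that $w_i\not\sim x_i$: otherwise $w_i$ would be adjacent to \emph{all} of $K$, contradicting the assumption that $K$ has no common neighbour. I would then record the routine bookkeeping that $x_1,x_2,x_3,w_1,w_2,w_3$ are six pairwise distinct vertices (each $w_i$ is distinguished from the others and from the $x_j$'s by its adjacency pattern, using that the $x_j$'s are pairwise non-adjacent), that $\{x_1,x_2,x_3\}$ is independent, and that among $\{x_1,x_2,x_3\}$ and $\{w_1,w_2,w_3\}$ one has $w_i\sim x_j \iff j\ne i$.

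Reading these vertices in the cyclic order $x_1,w_2,x_3,w_1,x_2,w_3$ then exhibits a $6$-cycle with bipartition classes $\{x_1,x_2,x_3\}$ and $\{w_1,w_2,w_3\}$, and the three ``diagonal'' non-edges $x_iw_i$ guarantee that the induced subgraph on these six vertices is a $C_6$ together with \emph{only} possibly some edges lying inside the single class $\{w_1,w_2,w_3\}$. Hence $G[\{x_1,x_2,x_3,w_1,w_2,w_3\}]$ is obtained from $C_6$ by adding $0,1,2,$ or $3$ edges to one part, i.e.\ it is isomorphic to $C_6$, $C_6^1$, $C_6^2$, or $C_6^3$ of Figure~\ref{fig:c6-free}, a member of $\mathcal{H}$; this contradicts $\mathcal{H}$-freeness and closes the induction. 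I expect the main obstacle to be exactly the Helly-type passage from three vertices to arbitrarily many --- namely realizing that the right ``second-layer'' vertices to use are not pairwise common neighbours but the common neighbours $w_i$ of the $(k-1)$-subsets $K\setminus\{x_i\}$, since these automatically furnish the diagonal non-edges $x_iw_i$ while leaving the adjacencies inside $\{w_1,w_2,w_3\}$ entirely free. That freedom is precisely what forces $\mathcal{H}$ to contain the four graphs $C_6,C_6^1,C_6^2,C_6^3$ and no fewer; Observation~\ref{obs:structureH} is the structural shadow of this, as the independent asteroidal triple it provides corresponds to the role played here by $\{x_1,x_2,x_3\}$.
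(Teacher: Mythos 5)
Your proposal is correct and follows essentially the same route as the paper's proof: induction on $|K|$, choosing $x_1,x_2,x_3\in K$, applying the hypothesis to the three cliques $K\setminus\{x_i\}$ to get common neighbours $w_i$ with $w_i x_j\in E(G)$ for $j\neq i$, and exhibiting $G[\{x_1,x_2,x_3,w_1,w_2,w_3\}]$ as one of $C_6,C_6^1,C_6^2,C_6^3$ according to the number of edges inside $\{w_1,w_2,w_3\}$. The only (immaterial) difference is organizational: the paper supposes $v_ix_i\notin E(G)$ for all $i$ and concludes $v_ix_i\in E(G)$ for some $i$ directly, whereas you frame the whole step as a contradiction to the assumption that no vertex dominates $K$.
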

\begin{proof}
    We prove this by induction on $\mid K\mid $. Consider the base case where $\mid K\mid ~=2$. Let $K=\{x,y\}$. Note that $K$ is a clique in $G^*$. By the definition of $G^*$, $xy\in E(G^*)$ implies that $d_G(x,y)=2$. i.e. there exists a vertex $v_k$ such that $x,y\in N_G(v_k)$. As this proves the base case, we can, therefore, assume that $\mid K\mid\geq~ 3$. By the induction hypothesis, we can also assume that the condition in the lemma holds for every clique, say $K'$ in $G^*$ with $\mid K'\mid ~<~\mid K\mid$. Let $x_1,x_2,x_3\in K$ (this is possible, since $\mid K\mid~ \geq 3)$. For $i\in \{1,2,3\}$, let $K_i$ denote the clique $K\setminus \{x_i\}$. By the induction hypothesis, there exists a vertex, say $v_i\in V(G)$, such that $K_i\subseteq N_G(v_i)$. Note that $K$ is an independent set in $G$ (as $K$ is a clique in $G^*$), and therefore $v_i\notin K$ for any $i\in \{1,2,3\}$. Also, in particular, we have $v_ix_j\in E(G)$ for each $i,j\in \{1,2,3\}$, where $i\neq j$. Suppose that $v_ix_i\notin E(G)$ for each $i\in \{1,2,3\}$.  This implies that $v_1$, $v_2$, and $v_3$ are all distinct vertices in $G$. Consider the induced subgraph, $H=G[\{x_1,x_2,x_3,v_1,v_2,v_3\}]$. Since $\{x_1,x_2,x_3\}$ is an independent set in $G$, we then have, 
    
\begin{align}{
   \textit{H}=    \left\lbrace \begin{aligned}
 C_6, & \text{ if \hspace{0.1cm}}\mid E(G[\{v_1,v_2,v_3\}])\mid=0 \\
C_6^1, & \text{ if \hspace{0.1cm}} \mid E(G[\{v_1,v_2,v_3\}])\mid=1 \\
 C_6^2, & \text{ if \hspace{0.1cm}} \mid E(G[\{v_1,v_2,v_3\}])\mid=2 \\
 C_6^3, & \text{ if \hspace{0.1cm}} \mid E(G[\{v_1,v_2,v_3\}])\mid =3
\end{aligned} \right .}
\end{align}

This contradicts the fact that $G$ is $\mathcal{H}$-free. Therefore, we can conclude that $v_ix_i\in E(G)$ for some $i\in \{1,2,3\}$. Then, as $K=K_i\cup\{x_i\}$, the vertex $v_i$ in $G$ has the property that $K\subseteq N_G(v_i)$. This proves the lemma.
\end{proof}

\begin{theorem}\label{thm:suff_cdclique}
Let $G$ be an $\mathcal{H}$-free graph and $G^*$ its corresponding auxiliary graph. Then $\XCD (G)= k(G^*)$.
\end{theorem}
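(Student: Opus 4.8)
The plan is to establish the theorem by combining the lower bound already in hand with a matching upper bound. Observation~\ref{obs:cliquecover} gives $\XCD(G) \geq k(G^*)$ for \emph{every} graph $G$, with no freeness assumption, so the only thing left to prove is the reverse inequality $\XCD(G) \leq k(G^*)$. This is precisely where the $\mathcal{H}$-freeness hypothesis enters, through Lemma~\ref{lem:neighborhood}.

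First I would fix a minimum clique cover of $G^*$, say $\{K_1, K_2, \ldots, K_k\}$ with $k = k(G^*)$, a partition of $V(G^*) = V(G)$ into cliques of $G^*$. The first point to record is that each $K_i$ is an independent set in $G$: since $K_i$ is a clique in $G^*$, any two of its vertices satisfy $d_G(u,v) = 2$ by the definition of $G^*$, and in particular they are non-adjacent in $G$. Hence the partition $\{K_1, \ldots, K_k\}$ is already a proper $k$-coloring of $G$.

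The next step is to upgrade this proper coloring into a $cd$-coloring, i.e. to supply a dominating vertex for each color class. This is exactly what Lemma~\ref{lem:neighborhood} delivers: for each clique $K_i$ of $G^*$ there is a vertex $v_i \in V(G)$ with $K_i \subseteq N_G(v_i)$. Since $v_i \notin N_G(v_i)$, we automatically have $v_i \notin K_i$, so each class is dominated by a vertex outside itself, as demanded by the definition of $cd$-coloring. Therefore $\{K_1, \ldots, K_k\}$ is a valid $cd$-coloring of $G$ using $k = k(G^*)$ colors, which yields $\XCD(G) \leq k(G^*)$. Combined with Observation~\ref{obs:cliquecover}, this gives $\XCD(G) = k(G^*)$.

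The bulk of the difficulty has already been absorbed into Lemma~\ref{lem:neighborhood}, whose inductive argument rules out the four forbidden configurations of $\mathcal{H}$; once that lemma is available, the theorem reduces to the dictionary ``cliques of $G^*$'' $\leftrightarrow$ ``dominated independent sets of $G$.'' The only points that warrant care are confirming that a minimum clique cover of $G^*$ can be read directly as a coloring of $G$ (namely that cliques of $G^*$ are independent sets of $G$) and checking that each dominating vertex lies outside its class; both are immediate from the definitions. Consequently I do not anticipate any genuine obstacle beyond correctly invoking Lemma~\ref{lem:neighborhood}, and the proof is essentially a two-line translation between the two parameters.
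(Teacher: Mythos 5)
Your proposal is correct and follows essentially the same route as the paper's proof: both take a minimum clique cover $\{K_1,\ldots,K_t\}$ of $G^*$, observe that each $K_i$ is an independent set in $G$, invoke Lemma~\ref{lem:neighborhood} to obtain a dominating vertex for each class, and combine the resulting upper bound $\XCD(G)\leq k(G^*)$ with Observation~\ref{obs:cliquecover}. Your explicit remark that $v_i\notin K_i$ (since $K_i\subseteq N_G(v_i)$ and $v_i\notin N_G(v_i)$) is a small additional check the paper leaves implicit, but there is no substantive difference.
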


\begin{proof}
By Observation~\ref{obs:cliquecover}, it is enough to show that $\XCD (G)\leq k(G^*)$. Let $k(G^*)=t$, and let $K_1,K_2,\ldots, K_t$ be a minimum sized clique cover of $G^*$. Since $G$ is $\mathcal{H}$-free, for each $j\in \{1,2,\ldots,t\}$, by Lemma~\ref{lem:neighborhood}, we have that there exists a vertex $v_j\in V(G)$ such that $K_j\subseteq N_G(v_j)$. As each set $K_j$ is an independent set in $G$ (since $K_j$ is a clique in $G^*$), we can now infer that each set $K_j$ is a valid color class for a $cd$-coloring of $G$. Since $V(G)$ is the disjoint union of the sets, $K_1,K_2,\ldots, K_t$, we can therefore conclude that $\XCD (G)\leq k(G^*)$. This proves the theorem.
 \end{proof}
Now Theorem~\ref{thm:suff_cdsubclique}, which provides a sufficient condition for a graph $G$ to have $\XCD(G)=\SC(G)$, is  immediate from Proposition~\ref{pro:omega_s} and Theorem~\ref{thm:suff_cdclique}.  The following corollary that provides a sufficient condition for a graph to be $cd$-perfect can also be easily inferred from Theorem~\ref{thm:suff_cdclique}.

 \begin{corollary}\label{corr:suffcdperfect}
     Let $G$ be an $\mathcal{H}$-free graph. If for any induced subgraph $H$ of $G$, $k(H^*)=\alpha(H^*)$ then $G$ is $cd$-perfect. Consequently, if $G$ is $\mathcal{H}$-free and $H^*$ is perfect for any induced subgraph $H$ of $G$, then $G$ is $cd$-perfect.
 \end{corollary}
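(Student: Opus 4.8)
The plan is to obtain $cd$-perfectness by applying Theorem~\ref{thm:suff_cdsubclique} to \emph{every} induced subgraph of $G$ at once, rather than to $G$ alone. The one fact that makes this legitimate is that $\mathcal{H}$-freeness is a hereditary property: if $G$ contains no member of $\mathcal{H}$ as an induced subgraph, then no induced subgraph $H$ of $G$ does either. Consequently every induced subgraph $H$ of $G$ is itself $\mathcal{H}$-free, so all the results established for $\mathcal{H}$-free graphs are available for each such $H$.

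Concretely, I would fix an arbitrary induced subgraph $H$ of $G$ and consider its auxiliary graph $H^*$ (with distances measured inside $H$). Being $\mathcal{H}$-free and satisfying $k(H^*)=\alpha(H^*)$ by hypothesis, $H$ meets exactly the conditions of Theorem~\ref{thm:suff_cdsubclique}, which therefore yields $\XCD(H)=\SC(H)$. If one prefers to unfold this, the same conclusion follows by chaining $\XCD(H)=k(H^*)$ (Theorem~\ref{thm:suff_cdclique}, valid since $H$ is $\mathcal{H}$-free), the hypothesis $k(H^*)=\alpha(H^*)$, and $\SC(H)=\alpha(H^*)$ (Proposition~\ref{pro:omega_s}). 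Since $H$ was an arbitrary induced subgraph of $G$, the equality $\XCD(H)=\SC(H)$ holds for all of them, which is precisely the definition of $G$ being $cd$-perfect.

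For the final (\emph{``Consequently''}) sentence, I would note that the perfectness hypothesis is merely a stronger form of the first one. Indeed, if $H^*$ is perfect for every induced subgraph $H$ of $G$, then applying the paper's definition of perfectness to $H^*$ viewed as an induced subgraph of itself gives $\alpha(H^*)=k(H^*)$. Thus $k(H^*)=\alpha(H^*)$ holds for all induced subgraphs $H$, and the first part of the corollary already delivers $cd$-perfectness.

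I do not anticipate any real obstacle: the argument is a direct composition of results proved earlier in the section. The only points worth stating carefully are that $\mathcal{H}$-freeness passes to induced subgraphs (so that Theorem~\ref{thm:suff_cdsubclique} is applicable to each $H$), and that $H^*$ throughout denotes the auxiliary graph of the induced subgraph $H$, exactly as the hypothesis is phrased.
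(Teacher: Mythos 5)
Your proposal is correct and matches the paper's intended argument: the paper presents this corollary as an immediate consequence of Theorem~\ref{thm:suff_cdsubclique} (equivalently, of Theorem~\ref{thm:suff_cdclique} combined with Proposition~\ref{pro:omega_s}), applied to each induced subgraph, which is exactly what you do. Your explicit remarks that $\mathcal{H}$-freeness is hereditary and that perfectness of $H^*$ yields $k(H^*)=\alpha(H^*)$ (via the paper's definition of perfect graphs applied to $H^*$ itself) supply precisely the two details the paper leaves implicit.
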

 
 It is known in the literature that if $G$ is a co-bipartite graph, then $\XCD(G)=\SC(G)$~\cite{ShaluKiruCDTreeCobipar21}. As a consequence of Theorem~\ref{thm:suff_cdsubclique}, in the following corollary, we now have a simple and shorter proof for the same. 
 \begin{corollary}\label{corr:cobipartite}
    Let $G$ be a co-bipartite graph. Then $G$ is $cd$-perfect. 
\end{corollary}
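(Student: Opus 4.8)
The plan is to derive this directly from Corollary~\ref{corr:suffcdperfect}: it suffices to check that a co-bipartite graph $G$ is $\mathcal{H}$-free and that $H^*$ is perfect for every induced subgraph $H$ of $G$. Two elementary facts about co-bipartite graphs will do the bulk of the work. First, the class is hereditary---deleting vertices from a graph whose vertex set partitions into two cliques again leaves a graph of this form---so every induced subgraph $H$ of $G$ is itself co-bipartite. Second, every co-bipartite graph has independence number at most $2$, since an independent set meets each of the two cliques in at most one vertex.

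For the $\mathcal{H}$-freeness condition, I would appeal to Observation~\ref{obs:structureH}, which guarantees that each $H\in\mathcal{H}$ contains an independent set of size $3$ (namely its asteroidal triple); hence $\alpha(H)\geq 3$ for all $H\in\mathcal{H}$. Were $G$ to contain some member of $\mathcal{H}$ as an induced subgraph, it would inherit an independent set of size $3$, contradicting $\alpha(G)\leq 2$. Thus $G$ is $\mathcal{H}$-free.

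The step I expect to carry the argument is establishing the perfectness of $H^*$. Fixing an induced subgraph $H$ with clique partition $V(H)=X\uplus Y$, I would observe that any two vertices lying inside $X$ (or inside $Y$) are adjacent in $H$, hence at distance $1$, so no such pair can be at distance exactly $2$. Consequently every edge of $H^*$ runs between $X$ and $Y$, which makes $H^*$ bipartite with parts $X$ and $Y$. Since bipartite graphs are perfect, $H^*$ is perfect.

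With both hypotheses of Corollary~\ref{corr:suffcdperfect} verified, the conclusion that $G$ is $cd$-perfect follows at once. The only non-routine observation is that co-bipartiteness forces every distance-$2$ pair to straddle the two cliques, so that $H^*$ collapses to a bipartite graph; everything else is bookkeeping with independence numbers together with the standard fact that bipartite graphs are perfect.
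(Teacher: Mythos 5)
Your proposal is correct and follows essentially the same route as the paper: both establish $\mathcal{H}$-freeness by comparing $\alpha(G)\leq 2$ against the size-$3$ independent sets of Observation~\ref{obs:structureH}, and both reduce perfectness of the auxiliary graph to its bipartiteness (the paper notes $G^*$ is a subgraph of $\bar{G}$, while you spell out the equivalent fact that every distance-$2$ pair must straddle the two cliques). Your invocation of Corollary~\ref{corr:suffcdperfect} instead of the paper's combination of heredity with Theorem~\ref{thm:suff_cdsubclique} is only a cosmetic difference, since that corollary is exactly this combination.
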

\begin{proof}
Let $G=(A,B,E)$ be a co-bipartite graph. Since the class of co-bipartite graphs is closed under taking induced subgraphs, to prove the corollary, it is enough to show that  $\XCD(G)=\SC(G)$. Note that $\alpha(G)\leq 2$ (as any independent set in $G$ can have only at most one vertex from each of the sets $A$ and $B$). But for each graph $H\in \mathcal{H}$, we have an independent set of size 3 (by Observation~\ref{obs:structureH}). This implies that $G$ is $\mathcal{H}$-free. Since $G^*$ is a bipartite graph (as $G^*$ is a subgraph of complement of $G$), and therefore perfect, we then have $\XCD(G)=k(G^*)=\alpha(G^*)=\SC(G)$, by Theorem~\ref{thm:suff_cdsubclique}. 
\end{proof}

 \noindent\textbf{Note:}
Recall the family of graphs $\mathcal{H}=\{C_6,C_6^1,C_6^2,C_6^3\}$. 
Consider a graph $H\in \mathcal{H}$. It is not difficult to see that $\XCD(H)=3=\SC(H)$, if $H\neq C_6$, and $\XCD(H)=4>2=\SC(H)$, if $H=C_6$. This is why the sufficient condition for $cd$-perfectness given in Corollary~\ref{corr:suffcdperfect} is not a necessary condition. But naturally, if we consider the set $\mathcal{H'}=\{C_6^1,C_6^2,C_6^3\}$, we can obtain a necessary and sufficient condition for an $\mathcal{H'}$-free graph to be $cd$-perfect. This easy consequence (of the above observation, together with Theorem~\ref{thm:suff_cdclique}, Corollary~\ref{corr:suffcdperfect}, and Proposition~\ref{pro:omega_s}) is interesting on its own; because the class of $\mathcal{H'}$-free graphs forms a superclass of triangle-free graphs, $3K_1$-free graphs, etc, and therefore, the following corollary provides a characterization for these classes to be $cd$-perfect.

 \begin{corollary}
Let $G$ be an $\mathcal{H'}$-free graph. Then $G$ is $cd$-perfect if and only if $G$ is $C_6$-free and $k(H^*)=\alpha(H^*)$, for each induced subgraph $H$ of $G$. 
 \end{corollary}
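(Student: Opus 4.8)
The plan is to reduce both directions to results already established, after observing that for $\mathcal{H'}$-free graphs, adding $C_6$-freeness is exactly what upgrades the hypothesis to $\mathcal{H}$-freeness. Since $\mathcal{H}=\mathcal{H'}\cup\{C_6\}$, a graph is $\mathcal{H}$-free if and only if it is both $\mathcal{H'}$-free and $C_6$-free; moreover $\mathcal{H}$-freeness is hereditary (closed under taking induced subgraphs), so whenever $G$ is $\mathcal{H}$-free every induced subgraph $H$ of $G$ is $\mathcal{H}$-free as well.

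For the reverse implication I would argue directly. Assume $G$ is $\mathcal{H'}$-free, $C_6$-free, and $k(H^*)=\alpha(H^*)$ for every induced subgraph $H$ of $G$. By the observation above, $G$ is $\mathcal{H}$-free, and then Corollary~\ref{corr:suffcdperfect} applies verbatim to conclude that $G$ is $cd$-perfect. This direction requires nothing new.

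For the forward implication, assume $G$ is $\mathcal{H'}$-free and $cd$-perfect. First I would extract $C_6$-freeness: if $G$ contained an induced $C_6$, then taking $H=C_6$ we would have $\XCD(H)=4>2=\SC(H)$ (the values recorded just before the statement), contradicting $cd$-perfectness; hence $G$ is $C_6$-free. Combined with $\mathcal{H'}$-freeness this makes $G$, and therefore every induced subgraph $H$ of $G$, $\mathcal{H}$-free. Now for each such $H$, Theorem~\ref{thm:suff_cdclique} gives $\XCD(H)=k(H^*)$ and Proposition~\ref{pro:omega_s} gives $\SC(H)=\alpha(H^*)$; since $cd$-perfectness forces $\XCD(H)=\SC(H)$, we obtain $k(H^*)=\alpha(H^*)$, as required.

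I do not expect a genuine obstacle here; the content is essentially bookkeeping on top of Theorem~\ref{thm:suff_cdclique} and Corollary~\ref{corr:suffcdperfect}. The only subtlety worth stating carefully is the role of $C_6$: it is the unique member of $\mathcal{H}$ that fails $\XCD=\SC$ (the equality holds for the other three graphs in $\mathcal{H}$), which is precisely why forbidding $C_6$ rather than all of $\mathcal{H}$ is the exact extra condition needed, and why the equivalence is clean for the superclass of $\mathcal{H'}$-free graphs.
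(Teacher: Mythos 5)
Your proof is correct and follows essentially the same route as the paper, which treats the corollary as an immediate consequence of exactly the ingredients you invoke: the computed values $\XCD(C_6)=4>2=\SC(C_6)$ (versus equality for the other members of $\mathcal{H}$), Theorem~\ref{thm:suff_cdclique}, Proposition~\ref{pro:omega_s}, and Corollary~\ref{corr:suffcdperfect}, glued together by the observation that $\mathcal{H}$-freeness is precisely $\mathcal{H'}$-freeness plus $C_6$-freeness and is hereditary. You have merely written out the bookkeeping the paper leaves implicit, and done so accurately.
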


\subsection{\large{\textbf{A unified approach for algorithmic complexity on some special graph classes}}}
\label{sec:poly}
Here, we see some implications of our generalized framework of relating the parameters $\XCD(G)$ and $\SC(G)$ of $G$, respectively to the parameters $k(G^*)$ and $\alpha(G^*)$ of $G^*$. In particular, we use Theorem~\ref{thm:suff_cdclique} and Proposition~\ref{pro:omega_s} as tools for deriving both positive and negative results concerning the algorithmic complexity of the problems \CDC\ and \SCP\ on some special classes of graphs. Note that the classes of graphs for which these results are applied to evaluate the time complexity of the problems \CDC\ and \SCP, may not be limited to those studied here.

 \subsubsection{\textbf{Polynomial-time algorithms:}}
 In this section, we will see a unified approach for obtaining polynomial-time algorithms (mostly with an improvement) for some special classes of graphs, using the framework of $cd$-perfectness and the auxiliary graph $G^*$.
  
\subsubsection*{\textbf{Chordal bipartite graphs}} 
The following theorem is proved in~\cite{ShaluKiruP5Free22}.
\begin{theorem}[\cite{ShaluKiruP5Free22}] \label{thm:shalu}
    Let $G$ be a $P_6$-free chordal bipartite graph. Then $\XCD(G)=\SC(G)$, and the problem \SCP\ can be solved in $O(n^3)$ time for $P_6$-free chordal bipartite graphs.
\end{theorem}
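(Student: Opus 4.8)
The plan is to route the proof through the $cd$-perfectness framework of Theorem~\ref{thm:suff_cdsubclique} rather than arguing about $\XCD$ and $\SC$ directly. The first step is to verify that a $P_6$-free chordal bipartite graph $G$ is $\mathcal{H}$-free. Indeed, $G$ is bipartite and hence triangle-free, so it cannot contain any of $C_6^1$, $C_6^2$, or $C_6^3$, since each of these carries a triangle; and because $G$ is chordal bipartite it has no induced cycle of length at least $6$, so it is $C_6$-free as well. Thus $G$ avoids every member of $\mathcal{H}$, and by Theorem~\ref{thm:suff_cdsubclique} the equality $\XCD(G)=\SC(G)$ will follow the moment we establish $k(G^*)=\alpha(G^*)$ (using also $\SC(G)=\alpha(G^*)$ from Proposition~\ref{pro:omega_s}).

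The second step is to understand $G^*$. Because $G$ is bipartite with parts $A$ and $B$, any two vertices at distance exactly $2$ lie on the same side, so $G^*$ has no edge crossing the bipartition and therefore splits as the disjoint union $G^*[A]\uplus G^*[B]$. Consequently $\alpha(G^*)=\alpha(G^*[A])+\alpha(G^*[B])$ and $k(G^*)=k(G^*[A])+k(G^*[B])$, and it suffices to prove $k=\alpha$ on each half separately. Here two vertices of $A$ are adjacent in $G^*[A]$ precisely when they share a common neighbor in $B$, i.e. $G^*[A]$ is the common-neighbor (half-square) graph on one side. I would prove that for a chordal bipartite $G$ each such half is a perfect graph; my concrete target would be to show it is in fact \emph{chordal}, exploiting the Helly/totally-balanced structure of the $B$-neighborhoods. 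The mechanism is that a chordless $\ell$-cycle in $G^*[A]$ would lift to a closed walk of length $2\ell$ in $G$ through the shared neighbors, and any chord forced by chordal bipartiteness would in turn produce a common neighbor, hence a forbidden chord back in $G^*[A]$. Once perfectness is in hand, $k(G^*[A])=\alpha(G^*[A])$ and similarly for $B$, giving $k(G^*)=\alpha(G^*)$ and therefore $\XCD(G)=\SC(G)$.

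The third step is the algorithm. The graph $G^*$ is computable in $O(n^3)$ time by testing each pair of vertices for a common neighbor (equivalently, by squaring the adjacency matrix and deleting the edges of $G$). On each half, which is chordal, a maximum independent set—and hence $\alpha(G^*[A])$ and $\alpha(G^*[B])$—can be found in time linear in the size of the half-graph, i.e. $O(n^2)$, so that $\SC(G)=\alpha(G^*)$ is obtained within the $O(n^3)$ bound, which is dominated by the construction of $G^*$.

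I expect the main obstacle to be the structural claim that each half-square $G^*[A]$, $G^*[B]$ of a chordal bipartite graph is perfect (ideally chordal): this is the only place the hypotheses are used in an essential way, and the delicate point is arguing that the lifted long cycle in $G$ can be taken induced so as to genuinely contradict chordal bipartiteness. It is worth noting that this route never invokes $P_6$-freeness, so the same argument should recover the stronger statement that \emph{every} chordal bipartite graph is $cd$-perfect—exactly the content of Theorem~\ref{thm:chordalbip}; the $P_6$-free hypothesis of the cited result thus appears to be an artifact of its original proof technique rather than a genuine requirement.
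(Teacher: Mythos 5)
Your proposal is correct, and it is essentially the route the paper itself takes --- not for this cited statement directly (the paper quotes Theorem~\ref{thm:shalu} from the literature without reproving it), but for its strengthening, Theorem~\ref{thm:chordalbip}. Your step~1 ($\mathcal{H}$-freeness via triangle-freeness plus $C_6$-freeness), your splitting $G^*=G^2[A]\uplus G^2[B]$, and your target claim that each half is chordal are exactly Observation~\ref{obs:bipsquare}, Observation~\ref{obs:chordalbip}, and Corollary~\ref{corr:suffcdperfect} in the paper; the ``delicate point'' you flag --- making the lifted cycle induced --- is resolved in the paper precisely as you would hope: since the cycle $(a_0,\ldots,a_k,a_0)$ is induced in $G^2[A]$, each common neighbor $b_i$ can meet the cycle only in $\{a_i,a_{i+1}\}$ (any further adjacency would create a chord back in $G^2[A]$), the $b_i$ are distinct and pairwise non-adjacent by bipartiteness, and the lift is an induced cycle of length at least $8$, contradicting chordal bipartiteness. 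Your closing observation that $P_6$-freeness is never used, so the argument proves $cd$-perfectness of \emph{all} chordal bipartite graphs, is exactly the paper's point in stating Theorem~\ref{thm:chordalbip}. The one place you genuinely diverge is the algorithm: you build $G^*$ explicitly in $O(n^3)$ and run linear-time maximum independent set on its chordal halves, which meets the $O(n^3)$ bound of the cited statement but no better; the paper instead exploits $\XCD(G)=$ \raisebox{2pt}{$\gamma$}$_t(G)$ for triangle-free graphs (Proposition~\ref{pro:triangle-free}) together with the $O(n^2)$ total domination algorithm for chordal bipartite graphs (Theorem~\ref{thm:totdom}), thereby avoiding the construction of $G^*$ altogether and improving the running time to $O(n^2)$. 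So your route buys a self-contained proof via the auxiliary graph, while the paper's detour through \TD\ buys the sharper complexity.
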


We improve and generalize the result above in Theorem~\ref{thm:chordalbip}.
First, note the following theorem proved in~\cite{DamDomChordBipGr90}.
\begin{theorem}[\cite{DamDomChordBipGr90}] \label{thm:totdom}
    The problem {\sc Total Domination} can be solved in $O(n^2)$ time for chordal bipartite graphs.  
\end{theorem}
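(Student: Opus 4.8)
The plan is to exploit a feature of bipartite graphs that makes \emph{open}-neighbourhood domination much easier than closed-neighbourhood domination: in a bipartite graph, total domination \emph{decouples} into two independent covering problems. First I would record this decoupling. Let $G=(A,B,E)$ be bipartite and let $S$ be a total dominating set, with $S_A=S\cap A$ and $S_B=S\cap B$. Since every neighbour of a vertex of $A$ lies in $B$, a vertex $a\in A$ is totally dominated exactly when it has a neighbour in $S_B$, and symmetrically each $b\in B$ must have a neighbour in $S_A$. Hence $S$ is a total dominating set if and only if $N(S_B)\supseteq A$ and $N(S_A)\supseteq B$, and these two requirements involve disjoint sets of ``covering'' vertices, so they can be optimised separately. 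Consequently $\gamma_t(G)=\rho(A,B)+\rho(B,A)$, where $\rho(X,Y)$ denotes the minimum size of a subset of $X$ whose neighbourhoods cover $Y$. It is precisely this split, available for open neighbourhoods but not for closed ones, that makes \TD\ tractable on a class where ordinary domination is hard.

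Each of the two problems $\rho(A,B)$ and $\rho(B,A)$ is a minimum set-cover (red--blue domination) instance whose incidence structure is the biadjacency matrix $M$ of $G$, respectively its transpose $M^{\top}$. The second step is to invoke the classical matrix characterisation of the class: a bipartite graph is chordal bipartite if and only if $M$ is totally balanced. Indeed, a chordless cycle of length $2k$ with $k\ge 3$ in $G$ corresponds bijectively to an order-$k$ cycle submatrix of $M$, so forbidding induced cycles of length at least $6$ is exactly forbidding all cycle submatrices of order $\ge 3$, which is the definition of total balancedness. Totally balanced matrices admit a doubly lexical ordering of their rows and columns that is $\Gamma$-free, i.e.\ contains no $2\times 2$ submatrix equal to $\left(\begin{smallmatrix}1&1\\1&0\end{smallmatrix}\right)$, and such an ordering can be computed in $O(n^2)$ time.

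The third step is to solve the covering problems on the reordered matrix. On a $\Gamma$-free matrix a minimum set cover is produced by a single greedy sweep in the doubly lexical order, in time linear in the number of ones and hence $O(n^2)$; this is the standard greedy-domination algorithm on totally balanced matrices. Running it on $M$ and on $M^{\top}$ and adding the two optima yields $\gamma_t(G)$, within $O(n^2)$ overall. The part I expect to be the main obstacle is the second step together with the optimality of the greedy sweep: one must verify carefully both the cycle-submatrix characterisation of chordal bipartiteness and that the greedy is in fact optimal on $\Gamma$-free matrices, and that the $\Gamma$-free ordering can itself be obtained within the claimed $O(n^2)$ budget. Once these ingredients are in place, the decoupling of the first step reduces total domination to two totally balanced covering problems immediately, and the time bound follows.
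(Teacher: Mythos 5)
The paper offers no proof of this statement: it is imported verbatim, with the citation \cite{DamDomChordBipGr90}, as a known result of Damaschke, M\"{u}ller, and Kratsch. Your argument is a correct reconstruction of essentially the proof in that cited source --- the open-neighbourhood decoupling $\gamma_t(G)=\rho(A,B)+\rho(B,A)$ valid in any bipartite graph without isolated vertices, the equivalence of chordal bipartiteness with total balancedness of the biadjacency matrix (induced $C_{2k}$, $k\geq 3$, corresponding exactly to order-$k$ cycle submatrices), and the greedy solution of the two covering instances on a $\Gamma$-free doubly lexical reordering computable in $O(n^2)$ time --- so nothing further is needed.
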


We now have the following observations concerning the auxiliary graph of bipartite graphs and chordal bipartite graphs.
 
\begin{observation}\label{obs:bipsquare}
Let $G=(A,B,E)$ be a bipartite graph then $G^*=G^2[A]\cup G^2[B]$. 
\end{observation}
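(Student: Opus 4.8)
The plan is to verify the claimed identity separately at the level of vertex sets and of edge sets. The vertex sets agree immediately: since $\{A,B\}$ partitions $V(G)$, we have $V(G^2[A])\cup V(G^2[B])=A\cup B=V(G)=V(G^*)$, so only the edge sets remain to be matched.

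The crucial structural fact, which I would isolate first, is that in a bipartite graph any two vertices at distance exactly $2$ lie in the same part. Indeed, if $d_G(u,v)=2$, then $u$ and $v$ share a common neighbor $w$; since $uw,vw\in E(G)$ and $G$ is bipartite, $w$ lies in the part opposite to both $u$ and $v$, forcing $u$ and $v$ into the same part. Consequently every edge of $G^*$ has both its endpoints in $A$ or both in $B$, and, having $G$-distance $2\le 2$, is therefore an edge of $G^2[A]$ or of $G^2[B]$. This establishes the inclusion $E(G^*)\subseteq E(G^2[A])\cup E(G^2[B])$.

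For the reverse inclusion, I would take an edge $uv$ of $G^2[A]$ (the case of $G^2[B]$ being symmetric). Then $u,v\in A$ and $d_G(u,v)\le 2$. Since $A$ is an independent set of the bipartite graph $G$, we have $uv\notin E(G)$, so $d_G(u,v)\ne 1$; and $u\ne v$ rules out $d_G(u,v)=0$. Hence $d_G(u,v)=2$, i.e.\ $uv\in E(G^*)$. Combining the two inclusions gives $E(G^*)=E(G^2[A])\cup E(G^2[B])$, which together with the vertex-set identity proves $G^*=G^2[A]\cup G^2[B]$.

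This argument is entirely routine and I do not anticipate any real obstacle; the only point requiring a moment's thought is the ``same part'' observation, which is immediate from bipartiteness. It is also worth noting that $E(G^2[A])\cap E(G^2[B])=\emptyset$ because $A\cap B=\emptyset$, so the union on the right is in fact a disjoint union, matching the intended reading of the notation.
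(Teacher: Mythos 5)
Your proof is correct and follows essentially the same route as the paper's: the paper's one-line argument is exactly the equivalence you establish, namely that $uv\in E(G^*)$ iff $u,v$ lie in the same part and share a common neighbor in the opposite part, which you spell out as two inclusions (using bipartiteness for the forward direction and the independence of each part for the reverse). Your extra remarks on vertex sets and the disjointness of the union are harmless elaborations of the same idea.
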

\begin{proof}
    The observation follows from the fact that for any pair of vertices $u$ and $v$ in $G$, we have $uv\in E(G^*)$ (or $d_G(u,v)=2$) if and only if either $u,v\in A$ and $\exists$ $w\in B$ such that $u,v\in N_G(w)$  or alternatively, $u,v\in B$ and $\exists$ $w\in A$ such that $u,v\in N_G(w)$. 
\end{proof}
\begin{observation} \label{obs:chordalbip}
    Let $G=(A,B,E)$ be a chordal bipartite graph. Then, $G^*$ is chordal. 
\end{observation}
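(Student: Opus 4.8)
The plan is to reduce the statement to the chordality of two smaller graphs and then lift a forbidden cycle back into $G$. By Observation~\ref{obs:bipsquare}, $G^*=G^2[A]\cup G^2[B]$, and since $G$ is bipartite no edge of $G^*$ joins a vertex of $A$ to a vertex of $B$ (two vertices at distance exactly $2$ lie on the same side). Hence $G^*$ is the \emph{disjoint} union of $G^2[A]$ and $G^2[B]$. As a disjoint union of chordal graphs is chordal (every induced cycle lies inside a single connected component), it suffices to prove that $G^2[A]$ is chordal; the argument for $G^2[B]$ is identical by symmetry. Note that in $G^2[A]$ two vertices $u,v\in A$ are adjacent precisely when they have a common neighbour in $B$.

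First I would suppose, for contradiction, that $G^2[A]$ contains a chordless cycle $a_1a_2\cdots a_k$ with $k\geq 4$. For each $i$ (indices taken modulo $k$) the consecutive pair $a_i,a_{i+1}$ has a common neighbour, so I fix some $b_i\in B$ with $a_i,a_{i+1}\in N_G(b_i)$. The idea is that the closed walk $a_1b_1a_2b_2\cdots a_kb_k a_1$ should be an induced cycle of length $2k\geq 8$ in $G$, which is impossible because a chordal bipartite graph has no induced cycle of length $\geq 6$.

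The technical heart, and the step I expect to be the main obstacle, is to verify that this walk is genuinely a chordless cycle, i.e.\ that the $b_i$ are distinct and that $G$ contains no edge $a_ib_j$ beyond the prescribed ones. I would argue each of these from the inducedness of the cycle in $G^2[A]$: whenever two of the chosen vertices of $B$ coincide, or an unwanted edge $a_ib_j$ is present, the vertex of $B$ involved is adjacent in $G$ to two cycle-vertices $a_p,a_q$ that are non-consecutive on the cycle, and this common neighbour would make $a_p,a_q$ adjacent in $G^2[A]$---a chord. More precisely, consecutive $b_i,b_{i+1}$ must differ (otherwise that vertex would be a common neighbour of $a_i$ and $a_{i+2}$, a forbidden diagonal), a repeated $b_i=b_j$ with $i,j$ non-consecutive would create a common neighbour of four cycle-vertices and hence a triangle in the triangle-free cycle, and an extra edge $a_ib_j$ would force $a_i$ to be adjacent in $G^2[A]$ to both $a_j$ and $a_{j+1}$, again contradicting that $a_1\cdots a_k$ is chordless. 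Each of these reductions uses only that $a_1\cdots a_k$ is an \emph{induced} cycle of length at least $4$ (so it is triangle-free and its only edges are the consecutive ones).

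Once the lift is shown to be a chordless $2k$-cycle with $2k\geq 8$, it directly contradicts the assumption that $G$ is chordal bipartite, completing the proof that $G^2[A]$---and by symmetry $G^2[B]$, and therefore $G^*$---is chordal.
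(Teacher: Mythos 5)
Your proof is correct and takes essentially the same route as the paper's: decompose $G^*$ as the disjoint union of $G^2[A]$ and $G^2[B]$ via Observation~\ref{obs:bipsquare}, lift a chordless cycle $a_1\cdots a_k$ ($k\geq 4$) of $G^2[A]$ through chosen common neighbours $b_i\in B$ to an induced cycle of length $2k\geq 8$ in $G$, and contradict chordal bipartiteness. The only difference is that you spell out the verification that the lifted cycle is induced (each $N_G(b_i)$ meets the cycle exactly in $\{a_i,a_{i+1}\}$, hence the $b_i$ are distinct and no $a_ib_j$ chords exist), a step the paper asserts in a single sentence.
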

\begin{proof}
    Suppose not. Since $G^*=G^2[A]\cup G^2[B]$ (by Observation~\ref{obs:bipsquare}), we have that at least one of the subgraphs $G^2[A]$ or $G^2[B]$ is not chordal. Without loss of generality, we can assume that $G^2[A]$ is not chordal (the other case is symmetric).  Let $C_k=(a_0,a_1\ldots,a_k,a_0)$ be an induced cycle in $G^2[A]$, where $k\geq 3$. Since $G$ is bipartite, and $C_k$ is an induced cycle, this is possible only if for each $i\in \{0,1,\ldots,k\}$, there exists a vertex $b_i$ in $G$ such that $N_G(b_i)\cap V(C_k) = \{a_i,a_{i+1}\}$  (mod $k+1$). This further implies that $(a_0,b_0,a_1,b_1,a_2,\ldots,a_k,b_k,a_0)$ is an induced cycle of length at least $2k$ in $G$. Since $k\geq 3$, this contradicts the fact that $G$ is a chordal bipartite graph. Hence, the observation.
\end{proof}

Now we are ready to prove Theorem~\ref{thm:chordalbip}.\\ 

\noindent\textit{Proof of Theorem~\ref{thm:chordalbip}}
    Clearly, $G$ is $\mathcal{H}$-free. By Observation~\ref{obs:chordalbip}, we have that $G^*$ is chordal, and therefore, perfect. Since any induced subgraph $H$ of $G$ is also chordal bipartite, it follows from Corollary~\ref{corr:suffcdperfect} that $G$ is  $cd$-perfect. Further, as $G$ is triangle-free, by Proposition~\ref{pro:triangle-free}, we have that $\XCD(G)=$\raisebox{2pt}{$\gamma$}$_t(G)$. Since 
 $\XCD(G)=\SC(G)$, the latter statement of the theorem is now immediate from Theorem~\ref{thm:totdom}. \qed

\begin{remark}
    Let $\mathcal{C}=\{G=(A,B,E): G$ is an $\mathcal{H}$-free bipartite graph where $G^2[A]$ and $G^2[B]$ are perfect\}. As in the proof of the above theorem, by Observation~\ref{obs:bipsquare} and Corollary~\ref{corr:suffcdperfect}, we can observe that {\sc Total Domination}, \CDC, and \SCP\  are all equivalent problems on $\mathcal{C}$. Moreover, since the problem \CC\ is polynomial-time 
 solvable for perfect graphs~\cite{GrotSchrEllipsoid81}, by Theorem~\ref{thm:suff_cdclique}, we then have that all the problems {\sc Total Domination}, \CDC, and \SCP\ can be solved in polynomial time for graphs in $\mathcal{C}$.
\end{remark}

\subsubsection*{\textbf{Proper interval graphs and $3K_1$-free graphs}}

It is known that for the classes of proper interval graphs and $3K_1$-free graphs, the \CDC\ problem can be solved in $O(n^3)$ time~\cite{ShaluVSCDComplex20} (as they are sub-classes of claw-free graphs). We show that the \CDC\ problem can be solved in $O(n^{2.5})$ time for the above classes of graphs. 

\begin{observation} \label{obs:propertriangle}
Let $G$ be a proper interval graph or a $3K_1$-free graph. Then, $G^*$ is triangle-free.
\end{observation}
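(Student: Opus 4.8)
The plan is to unwind what a triangle in $G^*$ means and then treat the two graph classes separately. By Definition~\ref{def:aux}, $uv \in E(G^*)$ holds exactly when $d_G(u,v) = 2$; hence a triangle in $G^*$ is nothing but a set of three vertices $\{x,y,z\}$ that are \emph{pairwise at distance exactly $2$} in $G$. So it suffices to show that no such triple exists when $G$ is a proper interval graph or a $3K_1$-free graph.

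For the $3K_1$-free case this is immediate: if $d_G(x,y) = d_G(x,z) = d_G(y,z) = 2$, then in particular $x$, $y$, $z$ are pairwise non-adjacent in $G$, i.e. $\{x,y,z\}$ induces a $3K_1$, contradicting $3K_1$-freeness. This disposes of that class in one line.

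The substantive case is proper interval graphs, and here I would invoke the standard \emph{umbrella-free} (indifference) vertex ordering equivalent to the definition used above: a graph is a proper interval graph if and only if its vertices admit a linear order $<$ such that for all $a < b < c$, having $ac \in E(G)$ forces $ab \in E(G)$ and $bc \in E(G)$. Suppose for contradiction that $x$, $y$, $z$ are pairwise at distance exactly $2$, relabelled so that $x < y < z$. I would pick a common neighbour $w$ of $x$ and $z$ (which exists since $d_G(x,z) = 2$) and note $w \neq y$, for otherwise $y$ would be adjacent to both $x$ and $z$. Then I would locate $w$ in the order and apply the umbrella property: if $w < x$ or $w > z$, the edge $wz$ (respectively $xw$) spans $x$ and $z$ and forces $xz \in E(G)$, contradicting $d_G(x,z)=2$; hence $x < w < z$. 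Finally, if $x < w < y$ the edge $wz$ forces $yz \in E(G)$, while if $y < w < z$ the edge $xw$ forces $xy \in E(G)$, both contradicting the assumed distances. Every placement of $w$ thus yields a contradiction, so no such triple exists and $G^*$ is triangle-free.

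The main obstacle I anticipate is organizational rather than conceptual: one must invoke the correct combinatorial characterization of proper interval graphs (the umbrella-free order) and then carry out the case split on the position of the common neighbour $w$ cleanly, while keeping track throughout that the hypothesis is distance \emph{exactly} $2$---so that the three vertices are genuinely non-adjacent and $w$ is genuinely distinct from $y$. No single step is hard, but the bookkeeping of which edge "spans" which pair in each subcase is where an error could creep in.
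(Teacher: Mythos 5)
Your proof is correct, but it takes a genuinely different route from the paper. The paper handles both classes uniformly: it first argues that $G$ is $\mathcal{H}$-free (proper interval graphs are AT-free and each graph in $\mathcal{H}$ contains an asteroidal triple by Observation~\ref{obs:structureH}; $3K_1$-free graphs exclude the independent triples in $\mathcal{H}$), then applies Lemma~\ref{lem:neighborhood} to the triangle $K$ of $G^*$ (an independent set in $G$) to obtain a single vertex $v_c$ with $K\subseteq N_G(v_c)$, so that $K\cup\{v_c\}$ induces a claw, contradicting the claw-freeness of both classes. You instead dispatch the $3K_1$-free case in one line (three pairwise distance-$2$ vertices are pairwise non-adjacent, hence a $3K_1$), and for proper interval graphs you run a direct case analysis on the position of a common neighbour $w$ of the extremal pair in the characteristic vertex ordering; your case split is complete ($w\neq x,y,z$ is justified, and the four placements $w<x$, $x<w<y$, $y<w<z$, $w>z$ each force a forbidden edge among $\{x,y,z\}$), so the argument stands. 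Two remarks. First, a terminological slip: in this paper ``umbrella-free'' (Theorem~\ref{thm:cocomp}) denotes the weaker cocomparability condition ($ac\in E$ forces $ab\in E$ \emph{or} $bc\in E$); the ordering you use, where both edges are forced, is the stronger indifference ordering characterizing proper interval graphs (Roberts; Looges--Olariu) --- your statement of it is correct, only the name clashes with the paper's usage. Second, a structural trade-off: your proof is more elementary and self-contained (it needs neither $\mathcal{H}$-freeness nor Lemma~\ref{lem:neighborhood}), but the paper's proof establishes en route that proper interval graphs and $3K_1$-free graphs are $\mathcal{H}$-free, a fact that the proof of Theorem~\ref{thm:properalpha} later cites verbatim (``as noted in the proof of Observation~\ref{obs:propertriangle}''); substituting your proof would require proving that claim separately there, whereas the paper's claw-based argument buys it for free.
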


\begin{proof}
Let $G$ be a proper interval graph or a $3K_1$-free graph. For the sake of contradiction, assume that $K=\{a,b,c\}\subseteq V(G)=V(G^*)$ is a triangle in $G^*$. First, we claim that $G$ is $\mathcal{H}$-free. Note that for each graph $H\in \mathcal{H}$, we have an independent set of size 3, and therefore, a $3K_1$, and an asteroidal-triple (by Observation~\ref{obs:structureH}). Since $G$ is either a proper interval graph (which is AT-free) or a $3K_1$-free graph, we can therefore conclude that $G$ is an $\mathcal{H}$-free graph. 
Since $K$ is a clique in $G^*$, we have $K$ is an independent set in $G$ (by the definition of $G^*$)  and as $G$ is $\mathcal{H}$-free, by Lemma~\ref{lem:neighborhood}, there exists a vertex $v_c\in V(G)$ such that $K\subseteq N_G(v_c)$. This implies that the graph induced by the set $K\cup \{v_c\}$ is a claw in $G$, a contradiction to the fact that $G$ is a proper interval graph or a $3K_1$-free graph (as they are both sub-classes of claw-free graphs).   
\end{proof}

Before evaluating the complexity of the \CDC\ problem for proper interval graphs and $3K_1$-free graphs, we note a few observations and a lemma. Note that for any graph $G$, we have $V(G^*)=V(G)=V(G^2)$, and $E(G^*)=E(G^2)\setminus E(G)$. Therefore, if $G^2$ can be computed in $O(n^2)$ time, then $G^*$ can also be computed in $O(n^2)$ time (as the subtraction of the adjacency matrices of $G^2$ and $G$ takes only $O(n^2)$ time). The first observation below is due to the result that the squares of proper interval graphs can be computed in $O(n^2)$ time~\cite{SatPMadPLinAlgoSqGr16}. 
\begin{observation}\label{obs:proper}
    For a proper interval graph $G$, the auxiliary graph $G^*$ can be computed in $O(n^2)$ time. 
\end{observation}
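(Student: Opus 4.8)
The plan is to reduce the computation of the auxiliary graph $G^*$ to the computation of the square $G^2$, which is already known to be efficiently computable for proper interval graphs, and then perform a trivial clean-up step. The key identity, already recorded in the paragraph immediately preceding this observation, is that for any graph $G$ we have $V(G^*)=V(G)=V(G^2)$ and $E(G^*)=E(G^2)\setminus E(G)$. Thus the auxiliary graph is obtained from the square simply by deleting the edges that are already present in $G$ (i.e. keeping exactly those pairs at distance exactly $2$ rather than distance at most $2$).

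First I would invoke the cited result of~\cite{SatPMadPLinAlgoSqGr16} that the square of a proper interval graph can be computed in $O(n^2)$ time; this gives us the adjacency matrix of $G^2$ within the claimed budget. Next, having the adjacency matrices of both $G$ and $G^2$ in hand, I would form the adjacency matrix of $G^*$ by an entrywise operation: for each pair $(u,v)$, the entry is $1$ precisely when the entry is $1$ in $G^2$ and $0$ in $G$. Since each of the $n^2$ entries is processed in constant time, this subtraction step costs $O(n^2)$. Combining the two stages, the total running time remains $O(n^2)$, which proves the observation.

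I do not anticipate a genuine obstacle here, since all the substantive work — actually producing the square of a proper interval graph in quadratic time — is delegated to the cited reference; the only thing I need to supply is the observation that passing from $G^2$ to $G^*$ is a single matrix-difference pass. The one point I would be careful to state explicitly is that the bound $O(n^2)$ is unconditional for this class precisely because the square computation avoids the naive $O(nm)$ or $O(n^3)$ all-pairs-distance approach; hence the dominant cost is governed by the quadratic square algorithm rather than by the edge count of $G$ or $G^2$.
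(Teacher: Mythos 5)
Your proposal is correct and matches the paper's argument exactly: the paper likewise cites~\cite{SatPMadPLinAlgoSqGr16} to obtain $G^2$ in $O(n^2)$ time and then uses the identity $E(G^*)=E(G^2)\setminus E(G)$ to recover $G^*$ via an $O(n^2)$ adjacency-matrix subtraction. There is no difference in approach or substance.
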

We now have some additional observations.
\begin{observation}\label{obs:cobipartite}
    For a co-bipartite graph $G$, the auxiliary graph $G^*$ can be computed in $O(n^2)$ time.
\end{observation}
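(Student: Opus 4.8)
The plan is to mirror Observation~\ref{obs:bipsquare}, but for the complementary situation. Since $G$ is co-bipartite, its vertex set partitions as $V(G)=A\uplus B$ where both $A$ and $B$ are cliques of $G$ (they are the two independent sets of the bipartite graph $\bar G$). First I would recover such a partition, for instance by two-colouring $\bar G$; as $\bar G$ may have $\Theta(n^2)$ edges this costs $O(n^2)$, which is within budget. The first structural point to record is that $G^*$ contains no edge inside $A$ or inside $B$: any two distinct vertices of a clique are adjacent in $G$, hence at distance $1$, so they cannot be at distance $2$. Thus $E(G^*)$ is confined to pairs $\{a,b\}$ with $a\in A$, $b\in B$ --- the exact complement of the bipartite picture in Observation~\ref{obs:bipsquare}.

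Next I would characterise the cross edges cheaply. Recall that $ab\in E(G^*)$ exactly when $ab\notin E(G)$ and $a,b$ have a common neighbour in $G$. Fix a non-adjacent pair $a\in A$, $b\in B$. Because $A$ is a clique, $N_G(a)\supseteq A\setminus\{a\}$, and because $ab\notin E(G)$ we have $a\notin N_G(b)$; a short case analysis on whether a common neighbour lies in $A$ or in $B$ then shows
\[
N_G(a)\cap N_G(b)\neq\emptyset \iff \big(N_G(a)\cap B\neq\emptyset\big)\ \text{or}\ \big(N_G(b)\cap A\neq\emptyset\big).
\]
In words, the only obstruction to $a$ and $b$ being at distance $2$ is that \emph{neither} endpoint has a neighbour on the opposite side. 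The decisive gain here is that the right-hand side depends on $a$ and on $b$ \emph{separately}, so the common-neighbour test collapses to two single-vertex properties.

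With this in hand the algorithm is immediate. In an $O(n^2)$ scan of the adjacency structure I would compute, for every vertex, a single Boolean flag recording whether it has at least one neighbour across the partition. Then each of the $O(n^2)$ cross pairs $(a,b)$ can be classified in $O(1)$ time: it is an edge of $G^*$ iff $ab\notin E(G)$ and at least one of the two flags (for $a$ and for $b$) is set. No within-part pairs need to be examined, since $G^*$ has no such edges. The total running time is therefore $O(n^2)$.

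The main obstacle is purely one of running time, not correctness: the naive route of intersecting $N_G(a)$ and $N_G(b)$ for each of the $\Theta(n^2)$ cross pairs costs $\Theta(n)$ per pair and lands at $O(n^3)$. The clique structure of $A$ and $B$ is exactly what reduces each such intersection test to the disjunction of two precomputable per-vertex flags, and verifying this reduction (the displayed equivalence, together with the confinement of $E(G^*)$ to cross pairs) is the only step that needs care.
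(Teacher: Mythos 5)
Your proposal is correct and follows essentially the same route as the paper: the paper's one-line proof also reduces the common-neighbour test for a cross pair $u\in A$, $v\in B$ to the two separable per-vertex conditions $N_B(u)\neq\emptyset$ or $N_A(v)\neq\emptyset$, from which the $O(n^2)$ bound is immediate. Your write-up is in fact more careful than the paper's, which omits the explicit restriction to non-adjacent cross pairs and the observation that $G^*$ has no edges inside the cliques $A$ and $B$ (as stated there, the claimed characterization of $E(G^2)$ overlooks within-part edges), whereas you verify both directions of the equivalence and account for the cost of recovering the co-bipartition.
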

\begin{proof}
    Let $G$ be a co-bipartite graph. Note that for any pair of vertices $u,v\in V(G)$, we have $uv\in E(G^2)$ if and only if $u\in A$, $v\in B$, and either $N_B(u)\neq \emptyset$ or $N_A(v)\neq \emptyset$ or both. This implies that $G^2$ can be constructed in $O(n^2)$ time, and so does $G^*$. \end{proof}
In the following lemma, we observe a structural property of $3K_1$-free graphs, which is crucial in proving Observation~\ref{obs:alpha} and is of independent interest.
\begin{lemma}\label{lem:nocobipatite}
    Let $G$ be a $3K_1$-free graph. If $G$ is not co-bipartite, then $G^2$ is a clique.
\end{lemma}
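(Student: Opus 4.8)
The plan is to argue by contradiction: assume $G$ is $3K_1$-free and not co-bipartite, yet $G^2$ is not a clique. Since $G^2$ fails to be a clique, there is a pair of vertices $u,v$ with $d_G(u,v)\geq 3$ (allowing $d_G(u,v)=\infty$ when $u,v$ lie in different components). The key consequence of $d_G(u,v)\geq 3$ is twofold: $u$ and $v$ are non-adjacent, and they have no common neighbour in $G$.

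First I would set $A=N_G(u)$ and $B=N_G(v)$ and record the basic disjointness facts: $A\cap B=\emptyset$ (no common neighbour), and $u,v\notin A\cup B$. Then I would use $3K_1$-freeness to show that $A\cup B$ exhausts $V(G)\setminus\{u,v\}$: for any vertex $w\notin\{u,v\}$, the set $\{u,v,w\}$ cannot be an independent set of size $3$, and since $uv\notin E(G)$, the vertex $w$ must be adjacent to $u$ or to $v$, i.e.\ $w\in A\cup B$. Hence $V(G)$ is the disjoint union $\{u\}\sqcup\{v\}\sqcup A\sqcup B$.

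Next I would show that $A$ and $B$ are each cliques. If $a_1,a_2\in A$ were non-adjacent, then since $a_1,a_2\notin B$ neither is adjacent to $v$, so $\{v,a_1,a_2\}$ would be an independent set of size $3$, contradicting $3K_1$-freeness; thus $A$ is a clique, and symmetrically so is $B$. Because $u$ is adjacent to every vertex of $A$ and $v$ to every vertex of $B$, the sets $X=\{u\}\cup A$ and $Y=\{v\}\cup B$ are cliques that partition $V(G)$. This exhibits $\overline{G}$ as bipartite, i.e.\ $G$ is co-bipartite, contradicting our assumption and completing the argument.

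The only delicate point is the choice of the two cliques: one must pair $u$ with its own neighbourhood $A$ (and $v$ with $B$), not crosswise, since a vertex of $A$ need not be adjacent to $v$. The clean partition of $V(G)\setminus\{u,v\}$ into $A$ and $B$ is exactly what $d_G(u,v)\geq 3$ buys us---non-adjacency supplies the covering property via $3K_1$-freeness, while the absence of a common neighbour supplies disjointness---so no separate case analysis is needed, and the degenerate cases $A=\emptyset$ or $B=\emptyset$ are handled automatically (a single vertex being trivially a clique).
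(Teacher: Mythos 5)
Your proof is correct and takes essentially the same route as the paper's: both exploit $3K_1$-freeness twice for a non-adjacent pair $u,v$ with no common neighbour---once to show $N_G(u)\cup N_G(v)$ covers $V(G)\setminus\{u,v\}$ (via the independent set $\{u,v,w\}$), and once to show each neighbourhood is a clique (via $\{x,y,v\}$)---yielding the two-clique partition $\{u\}\cup N_G(u)$, $\{v\}\cup N_G(v)$ that contradicts $G$ not being co-bipartite. The only difference is cosmetic: the paper detours through the degree bound $d_G(u)+d_G(v)\geq n-2$ and a pigeonhole case split on strict versus exact equality, whereas you start directly from a pair at distance at least $3$, which streamlines away that case analysis (and also dispenses with the paper's unnecessary side remark that some neighbourhood has at least two vertices).
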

\begin{proof}
    
    Let $u$ and $v$ be any two non-adjacent vertices in $G$. To prove the lemma, it is enough to show that $uv\in E(G^2).$ Let $d_G(u)$ and $d_G(v)$ denote the degrees of the vertices $u$ and $v$ in $G$, respectively.  We first claim that $d_G(u)+d_G(v)\geq n-2$. Suppose not. Let $d_G(u)+d_G(v) < n-2$. This implies that $\mid N_G(u)\cup N_G(v)\cup\{u,v\}\mid <n$. This further implies that there exists a vertex $w\in V(G)$ such that $w$ is non-adjacent to both $u$ and $v$. Therefore, $\{u,v,w\}$ forms an independent set in $G$, which contradicts the fact that $G$ is $3K_1$-free. Therefore, we can assume that $d_G(u)+d_G(v)\geq n-2$.

    Suppose that $d_G(u)+d_G(v)> n-2$. Then, as $\mid V(G)\setminus \{u,v\}\mid \leq n-2$, it should be the case that $N_G(u)\cap N_G(v)\neq \emptyset$. i.e. there exists a vertex $w\in V(G)$ such that $uw,wv\in E(G)$.  This implies that $uv\in E(G^2)$, and we are done. Now consider the case that $d_G(u)+d_G(v)= n-2$. Suppose that $N_G(u)\cap N_G(v)=\emptyset$. Then, as $\mid N_G(u)\cup N_G(v)\cup \{u,v\}\mid =n$, we have that $V(G)$ is a disjoint union of two sets $A=N_G(u)\cup \{u\}$ and $B=N_G(v)\cup \{v\}$. Since $G$ is not a co-bipartite graph, we can assume that at least one of the sets $N_G(u)$ or $N_G(v)$ contains at least two vertices in it.  Suppose that $\mid N_G(u)\mid \geq 2$ (respectively $\mid N_G(v)\mid \geq 2$). Now, if there exist vertices $x,y\in N_G(u)$ (respectively $N_G(v)$) such that $xy\notin E(G)$, then, as $x,y\notin N_G(v)$ (respectively $N_G(u)$), we have that $\{x,y,v\}$ (respectively $\{x,y,u\}$) forms an independent set in $G$. As this contradicts the fact that  $G$ is $3K_1$-free, we have that both the sets $N_G(u)$ and $N_G(v)$ form a clique. But then, this contradicts the fact that $G$ is not a co-bipartite graph (as in this case, $A$ and $B$ clearly form a two-clique partition of $V(G)$). Thus we can conclude that $N_G(u)\cap N_G(v)\neq \emptyset$, and therefore, $uv\in E(G^2)$. This completes the proof of lemma.
    \end{proof}
    \begin{observation} \label{obs:alpha}
        For a $3K_1$-free graph $G$, the auxiliary graph $G^*$ can be computed in $O(n^2)$ time.
    \end{observation}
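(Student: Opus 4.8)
The plan is to branch on whether $G$ is co-bipartite, exploiting the structural dichotomy supplied by Lemma~\ref{lem:nocobipatite}. First I would compute the complement $\overline{G}$, which takes $O(n^2)$ time directly from the adjacency matrix of $G$, and then test whether $\overline{G}$ is bipartite. Since a graph is co-bipartite precisely when its complement is bipartite, and bipartiteness of a graph on $n$ vertices can be decided in $O(n^2)$ time by a single breadth-first search over its adjacency matrix, this recognition step costs only $O(n^2)$ time overall.

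If $G$ turns out to be co-bipartite, then by Observation~\ref{obs:cobipartite} the auxiliary graph $G^*$ can be computed directly in $O(n^2)$ time, and we are done in this case.

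Otherwise $G$ is $3K_1$-free and not co-bipartite, so Lemma~\ref{lem:nocobipatite} applies and guarantees that $G^2$ is a clique; that is, $E(G^2)$ consists of all $\binom{n}{2}$ pairs of vertices of $G$. Recalling that $E(G^*)=E(G^2)\setminus E(G)$, in this case the edge set $E(G^*)$ is exactly the set of all vertex pairs not already joined in $G$, which is $E(\overline{G})$; hence $G^*=\overline{G}$. Since $\overline{G}$ has already been computed during the recognition step in $O(n^2)$ time, the auxiliary graph is available with no further work. Combining the two cases shows that $G^*$ is computed in $O(n^2)$ time in all cases.

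The only genuine obstacle is that one cannot afford to build $G^2$ (and hence $G^*$) by the naive route of running a breadth-first search from every vertex, as that would cost $\Theta(n^3)$ in the worst case. The whole point of the argument is that $3K_1$-freeness forces one of two easy situations: either $G$ is co-bipartite, in which case Observation~\ref{obs:cobipartite} gives a fast shortcut, or $G^2$ degenerates to the complete graph, in which case $G^*$ coincides with the already-computed complement $\overline{G}$. Once Lemma~\ref{lem:nocobipatite} is in hand, the case analysis itself is entirely routine.
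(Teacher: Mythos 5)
Your proposal is correct and follows essentially the same route as the paper: test co-bipartiteness via bipartiteness of $\overline{G}$ in $O(n^2)$ time, invoke Observation~\ref{obs:cobipartite} in the co-bipartite case, and apply Lemma~\ref{lem:nocobipatite} in the other case to conclude $G^2$ is a clique so that $G^*$ is immediate. Your added remark that $G^*=\overline{G}$ in the non-co-bipartite case just makes explicit what the paper leaves implicit.
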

  \vspace{-0.5cm}
    \begin{proof}
        Let $G$ be a $3K_1$-free graph. We can check whether $G$ is co-bipartite or not in $O(n^2)$ time (as it is enough to check whether $\overline{G}$ is bipartite or not). If $G$ is co-bipartite, then by Observation~\ref{obs:cobipartite}, we have that $G^*$ can be computed in $O(n^2)$ time. If $G$ is not co-bipartite, by  Lemma~\ref{lem:nocobipatite}, we have that $G^2$ is a clique, and therefore, the auxiliary graph $G^*$, can be computed in $O(n^2)$ time.
    \end{proof}
 Now we are ready to prove Theorem~\ref{thm:properalpha}.
    \vspace{0.5cm}
    
\noindent\textit{Proof  Theorem~\ref{thm:properalpha}}
Let $G$ be a proper interval graph or a $3K_1$-free graph. As noted in the proof of Observation~\ref{obs:propertriangle}, we have that $G$ is $\mathcal{H}$-free. Therefore, by Theorem~\ref{thm:suff_cdclique}, we have that $\XCD(G)=k(G^*)$, where $G^*$ is the corresponding auxiliary graph. By Observation~\ref{obs:propertriangle}, we have that $G^*$ is triangle-free. This implies that each clique in any clique cover of $G^*$ is either an edge or a single vertex. This further implies that $k(G^*)=n-\mid M\mid$, where $M$ denotes the size of the maximum cardinality matching in $G^*$. Note that if $G$ is a proper interval graph, then the auxiliary graph $G^*$ can be constructed in $O(n^2)$ time by Observation~\ref{obs:proper}, and if $G$ is a  $3K_1$-free graph, then the auxiliary graph $G^*$ can be constructed in $O(n^2)$ time by Observation~\ref{obs:alpha}. Also, the size of the maximum 
 cardinality matching in $G^*$ can be computed in $O(\sqrt{n}m')$ time~\cite{HopKarAlgoMaxMat73}, where $m'=\mid E(G^*)\mid \leq n^2$. Since the overall complexity is $O(n^{2.5})$, this proves the theorem. \qed

\begin{remark}
    As noted before, the time complexity of \CDC\ for the class of proper interval graphs and $3K_1$-free graphs provided in Theorem~\ref{thm:properalpha} is an improvement over the existing algorithms~\cite{ShaluVSCDComplex20} for the problem in the same graph classes. Clearly, the class of $3K_1$-free graphs is not $cd$-perfect, as \SCP\ is \NPC\ for $3K_1$-free graphs~\cite{ShaluSandhyaLowBound17}. 
    Further, we note that the parameters $\XCD$ and $\SC$ are not necessarily equal for proper interval graphs too (see an example in Figure~\ref{fig:proper_interval}). Therefore, the class of proper interval graphs and, in general, interval graphs are not $cd$-perfect. Now, regarding the complexity of \SCP\ problem on the class of proper interval graphs, we prove in Section~\ref{sec:interval} that \SCP\ is polynomial-time solvable even for its superclass, namely, interval graphs. 
\end{remark}

\begin{figure}[ht]
  \centering
    \centering
    \begin{tikzpicture}[myv/.style={circle, draw, inner sep=1.5pt,line width=0.3mm}]
  \node (z) at (0,0) {};

  \node[myv] (a) at (-1,0) {};
  \node[myv] (b) at (0,0) {};
  \node[myv] (c) at (1,0) {};
  \node[myv] (d) at (2,0) {};
  \node[myv] (e) at (3,0) {};
  \node[myv] (f) at (0.5,0.5) {};
  \node[myv] (g) at (1.5,0.5) {};
 
  \draw[line width=0.3mm] (a) -- (b);
  \draw[line width=0.3mm] (b) -- (c);
  \draw[line width=0.3mm] (c) -- (d);
  \draw[line width=0.3mm] (d) -- (e);
  \draw[line width=0.3mm] (b) -- (f);
  \draw[line width=0.3mm] (c) -- (f);
  \draw[line width=0.3mm] (c) -- (g);
  \draw[line width=0.3mm] (d) -- (g);
\end{tikzpicture}
    \caption{An example of a proper interval graph $G$,  with $\XCD(G)=4>3=\omega_s(G)$.} 
    \label{fig:proper_interval}
  \end{figure}
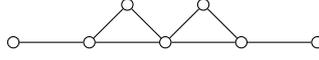

\subsubsection{\textbf{Hardness implications:}}
In contrast to the previous section, here, we see an application of Proposition~\ref{pro:omega_s} and Theorem~\ref{thm:suff_cdclique} in deriving hardness results for \CDC\ and \SCP. 

\subsubsection*{\textbf{$C_6$-free bipartite graphs}}

Here, we prove the hardness of the problems \CDC\ and \SCP\ for $C_6$-free bipartite graphs by proposing a polynomial-time reduction from the problems \CC\  and \ISP\ for diamond-free graphs. We begin by stating one of the main results in this section. 

Note that both {\sc Clique Cover} and {\sc Independent Set} are known to be NP-hard for diamond-free graphs~\cite{KralKraComplexColGr01,PoljakStabSet74}. Recall that, a \textit{diamond} is a graph obtained by deleting an edge from $K_4$ (complete graph on 4 vertices). First, we note the following observation for diamond-free graphs~\cite{ChiarStrongCli21}.

\begin{observation}[\cite{ChiarStrongCli21}]\label{obs:diamondfree}
Any diamond-free graph with $m$ edges can have at most $m$ maximal cliques.
\end{observation}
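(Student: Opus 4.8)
The plan is to reduce counting maximal cliques to counting edges, by exploiting the fact that in a diamond-free graph every edge is contained in a \emph{unique} maximal clique. The main structural claim I would establish first is the following: for any edge $uv\in E(G)$, the vertex set $Q_{uv}=\{u,v\}\cup\bigl(N_G(u)\cap N_G(v)\bigr)$ induces a clique, and it is the unique maximal clique containing $uv$.

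To prove that $Q_{uv}$ is a clique, I would take any two common neighbours $x,y$ of $u$ and $v$. If $x$ and $y$ were non-adjacent, then the four vertices $\{u,v,x,y\}$ would induce a graph on four vertices having exactly the five edges $uv,ux,vx,uy,vy$ --- that is, a diamond --- contradicting the diamond-freeness of $G$. Hence any two common neighbours of $u$ and $v$ are adjacent, so $Q_{uv}$ is indeed a clique. For uniqueness, note that any clique $K$ containing the edge $uv$ satisfies $K\subseteq Q_{uv}$, since every vertex of $K$ other than $u,v$ must be adjacent to both $u$ and $v$ and therefore lies in $N_G(u)\cap N_G(v)$. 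As $Q_{uv}$ is itself a clique containing every clique through $uv$, it is the unique maximal clique containing $uv$.

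With this claim in hand the counting is immediate. Since $G$ has no isolated vertices, every maximal clique has at least two vertices and hence contains at least one edge. I would then define a map $\phi$ from the set of maximal cliques of $G$ into $E(G)$ by sending each maximal clique $K$ to an arbitrarily chosen edge $\phi(K)$ of $K$. This map is injective: if $\phi(K_1)=\phi(K_2)=uv$, then $K_1$ and $K_2$ are both maximal cliques containing the edge $uv$, so the uniqueness established above forces $K_1=K_2$. An injection into $E(G)$ shows that the number of maximal cliques is at most $|E(G)|=m$.

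The step requiring the most care is the uniqueness half of the structural claim --- equivalently, that two distinct maximal cliques cannot share an edge --- which is precisely where diamond-freeness enters, through the forbidden induced configuration on $\{u,v,x,y\}$. A secondary technical caveat is the treatment of single-vertex maximal cliques arising from isolated vertices, which carry no edge; the bound as stated is clean under the paper's standing assumption that $G$ has no isolated vertices, ensuring that every maximal clique contains an edge and thus lies in the domain of $\phi$.
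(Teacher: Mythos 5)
Your proof is correct. Note that the paper itself contains no proof of this observation---it is imported verbatim from~\cite{ChiarStrongCli21}---and your argument is the standard one behind that cited result: for each edge $uv$ of a diamond-free graph, $\{u,v\}\cup\bigl(N_G(u)\cap N_G(v)\bigr)$ is the unique maximal clique containing $uv$ (two non-adjacent common neighbours of $u$ and $v$ would induce a diamond), so assigning to each maximal clique one of its edges is injective and the count is at most $m$. Your caveat about isolated vertices is also genuine rather than pedantic: as literally stated the bound fails for a graph with an isolated vertex (e.g.\ $m=0$ but one maximal clique), so the observation implicitly rests on the paper's standing assumption that graphs have no isolated vertices, exactly as you flag.
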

Let $G$ be a diamond-free graph and let $\mathcal{C}=\{C_1,C_2,\ldots,C_l\}$ denote the collection of all maximal cliques in $G$. By Observation~\ref{obs:diamondfree}, we have that $l\leq~ \mid~E(G)\mid$.  From $G$, we now construct a bipartite graph $B_G=(A,B,E)$ in polynomial-time as follows:
\begin{construction}  \label{cons:c6-freebipartite}
     Define $A=V(G)\cup\{u\}$ and $B=\mathcal{C}=\{C_1,C_2,\ldots,C_l\}$ (each vertex in the partite set $B$ represents a maximal clique in $G$). For a pair of vertices, $a\in A\setminus \{u\}$ and $C_j\in B$ (where $j\in \{1,2,\ldots,l\}$), we make the vertices $a$ and $C_j$ adjacent in $B_G$ if and only if $a\in C_j\subseteq V(G)$. In addition, we make the vertex $u\in A$ adjacent to all the vertices in $B$. i.e. $E(B_G)=\{aC_j:a\in A\setminus \{u\}, C_j\in \mathcal{C}=B$ with $a\in C_j\subseteq V(G)\}\cup \{uC_j:C_j\in \mathcal{C}=B\}$. Clearly, $B_G$ is a bipartite graph.
\end{construction}


 \begin{figure}[!htbp]
 \centering
    \begin{subfigure}[b]{0.45\textwidth}
          \centering
         \usetikzlibrary{arrows}
\usetikzlibrary{decorations.markings}
\usetikzlibrary{shapes.geometric}
\usetikzlibrary{positioning}

\resizebox{\textwidth}{!}{\begin{tikzpicture} [myv1/.style={circle, draw, inner sep=3.5pt, line width = 1 mm, fill=black},myv4/.style={rectangle, draw,dotted,inner sep=0pt,line width = 0.2 mm},mvy5/.style={ellipse, draw, dotted, line width = 0.2mm},mydummy/.style={circle, inner sep=0pt, fill=white}];

\tikzset{
  big arrow/.style={
    decoration={markings,mark=at position 1 with {\arrow[scale=4,#1]{>}}},
    postaction={decorate},
    shorten >=0.4pt},
  big arrow/.default=black}

\node[myv1] [label = left: \text{\huge {$v_1$}}] (v1) at (1, 1) {};
\node[mydummy] [label={}] (v11a) at (1, 1.5) {};
\node[mydummy] [label={}] (v12a) at (1, 2) {};

\node[myv1] [label = right: \text{\huge { $v_2$}}] (v2) at (5, 1) {};
\node[mydummy] [label={}] (v21a) at (5, 1.5) {};
\node[mydummy] [label={}] (v22a) at (5, 2) {};

\node[myv1] [label = left: \text{\huge {$v_3$}}] (v3) at (-1, -1) {};
\node[mydummy] [label={}] (v32b) at (-1, -2) {};
\node[mydummy] [label={}](v32l) at (-2, -1 ) {};

\node[myv1] [label = above: \text{\huge {$v_4$}}] (v4) at (3, -1) {};
\node[mydummy] [label={}] (v41a) at (3, -0.5) {};
\node[mydummy] [label={}] (v41b) at (3, -1.5) {};
\node[mydummy] [label={}] (v42b) at (3, -2) {};
\node[mydummy] [label={}] (v41l) at (2.5, -1) {};
\node[mydummy] [label={}] (v41r) at (3.5, -1) {};

\node[myv1] [label = right: \text{\huge {$v_5$}}] (v5) at (7, -1) {};
\node[mydummy] [label={}] (v52b) at (7, -2) {};
\node[mydummy] [label={}] (v52r) at (8,-1) {};

\node[myv1] [label = left: \text{\huge {$v_6$}}] (v6) at (-3, -3) {};
\node[mydummy] [label={}] (v62l) at (-4,-3) {};

\node[myv1] [label = above: \text{\huge {$v_7$}}] (v7) at (1, -3) {};
\node[mydummy] [label={}] (v71l) at (0.5, -3) {};

\node[myv1] [label = above: \text{\huge {$v_8$}}] (v8) at (5, -3) {};
\node[mydummy] [label={}] (v82r) at (6,-3) {};

\node[myv1] [label = right: \text{\huge {$v_9$}}] (v9) at (9, -3) {};
\node[mydummy] [label={}] (v92r) at (10, -3) {};

\draw (v1) -- (v3) {};
\draw (v1) -- (v4) {};
\draw (v2) -- (v4) {};
\draw (v2) -- (v5) {}; 
\draw (v3) -- (v4) {};
\draw (v3) -- (v6) {};
\draw (v4) -- (v5) {};
\draw (v4) -- (v7) {};
\draw (v4) -- (v8) {};    
\draw (v5) -- (v9) {};
\draw (v7) -- (v8) {};

\node[myv4][fit=(v11a) (v1) (v3) (v32l) (v4) (v41r),  inner xsep=1.5ex, inner ysep=1.5ex, label=above: \text{\huge {$C_1$}}] {}; 

\node[myv4][fit=(v2) (v4) (v5) (v52r) (v41b),  inner xsep=1.5ex, inner ysep=1.5ex, label=above: \text{\huge { $C_2$}}] {}; 

\node[myv4][fit=(v3) (v6) (v62l) ,  inner xsep=1.5ex, inner ysep=1.5ex, label=left: \text{\huge {$C_3$}}] {}; 

\node[myv4][fit=(v4) (v7) (v8) (v41a),  inner xsep=1.5ex, inner ysep=1.5ex, label=below: \text{\huge {$C_4$}}] {}; 

\node[myv4][fit=(v5) (v9) (v92r),  inner xsep=1.5ex, inner ysep=1.5ex, label=right: \text{\huge {$C_5$}}] {}; 

\end {tikzpicture}} 
          \caption{A diamond-free graph $G$}
          \label{fig:diamond-free}
     \end{subfigure}
      \begin{subfigure}[b]{0.45\textwidth}
          \centering
         \usetikzlibrary{arrows}
\usetikzlibrary{decorations.markings}
\usetikzlibrary{shapes.geometric}
\usetikzlibrary{positioning}

\resizebox{0.40\textwidth}{!}{\begin{tikzpicture} [myv1/.style={circle, draw, inner sep=3.5pt, line width = 1 mm, fill=black},myv4/.style={rectangle, draw,dotted,inner sep=0pt,line width = 0.2 mm},mvy5/.style={ellipse, draw, dotted, line width = 0.2mm,inner sep=5pt},mydummy/.style={circle, inner sep=0pt, fill=white}];

\tikzset{
  big arrow/.style={
    decoration={markings,mark=at position 1 with {\arrow[scale=4,#1]{>}}},
    postaction={decorate},
    shorten >=0.4pt},
  big arrow/.default=black}

\node[myv1] [label = left: \text{\huge{ $v_1$}}] (v10) at (17, 6) {};
\node[myv1] [label = left: \text{ \huge{$v_2$}}]  (v11) at (17, 4.75) {};
\node[myv1] [label = left: \text{\huge{ $v_3$}}] (v12) at (17, 3.25) {};
\node[myv1] [label = left: \text{\huge{ $v_4$}}] (v13) at (17, 2) {};
\node[myv1] [label = left: \text{\huge{$v_5$}}] (v14) at (17, 0.75) {};
\node[myv1] [label = left: \text{\huge{ $v_6$}}] (v15) at (17, -0.5) {};
\node[myv1] [label = left: \text{\huge{ $v_7$}}] (v16) at (17, -1.75) {};
\node[myv1] [label = left: \text{\huge{ $v_8$}}] (v17) at (17, -3) {};
\node[myv1] [label = left: \text{\huge{$v_9$}}] (v18) at (17, -4.25) {};
\node[myv1] [label = left: \text{\huge{ $u$}}] (u) at (17, -6) {};

\node[myv1] [label = right: \text{\huge{ $C_1$}}] (k6) at (23, 4) {};
\node[myv1] [label = right: \text{\huge{ $C_2$}}] (k7) at (23, 2) {};
\node[myv1] [label = right: \text{\huge{ $C_3$}}] (k8) at (23, 0) {};
\node[myv1] [label = right: \text{\huge{ $C_4$}}] (k9) at (23, -2) {};
\node[myv1] [label = right: \text{\huge{ $C_5$}}] (k10) at (23, -4) {};

\node [mvy5] [label= {}, fit=(v10)(v11)(v12)(v13)(v14)(v15)(v16)(v17)(v18)(u), inner ysep=1ex, inner xsep=8ex] {};

\node [mvy5] [label={}, fit=(k6)(k7)(k8)(k9)(k10), inner ysep=4ex, inner xsep=8ex] {};

\draw (v10) -- (k6) {};

\draw (v11) -- (k7) {};

\draw (v12) -- (k6) {};
\draw (v12) -- (k8) {};

\draw (v13) -- (k6) {};
\draw (v13) -- (k7) {};
\draw (v13) -- (k9) {};

\draw (v14) -- (k7) {};
\draw (v14) -- (k10) {};

\draw (v15) -- (k8) {};

\draw (v16) -- (k9) {};
\draw (v17) -- (k9) {};

\draw (v18) -- (k10) {};

\draw (u) -- (k6) {};
\draw (u) -- (k7) {};
\draw (u) -- (k8) {};
\draw (u) -- (k9) {};
\draw (u) -- (k10) {};

\end {tikzpicture}}  
          \caption{Corresponding bipartite graph $B_G$}
          \label{fig:c6-free bipartite}
     \end{subfigure}
    
     \caption{ An illustration of  Construction~\ref{cons:c6-freebipartite}.}
    \label{fig:diamond-free to bipartite}
\end{figure}

  An example of  Construction~\ref{cons:c6-freebipartite} is shown in Figure~\ref{fig:diamond-free to bipartite}. We then have the following lemmas, which are crucial for the reduction. 

\begin{lemma}\label{lem:isomprphic}
Let $G$ be a diamond-free graph and $B_G=(A,B,E)$ the corresponding bipartite graph (as in Construction~\ref{cons:c6-freebipartite}). Then, $B_G^2[A\setminus \{u\}]\cong G$ and $B_G^2[B]$ is a clique.
\end{lemma}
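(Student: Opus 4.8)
The plan is to prove both claims by directly unwinding the definition of the square graph $B_G^2$ together with Construction~\ref{cons:c6-freebipartite}. Recall that, as vertex sets, $A\setminus\{u\}=V(G)$, so the natural candidate for the isomorphism $B_G^2[A\setminus\{u\}]\cong G$ is the identity map $\phi$ sending each $a\in A\setminus\{u\}$ to the corresponding vertex of $G$. Since the two graphs share the same vertex set under $\phi$, it suffices to verify that $\phi$ preserves adjacency and non-adjacency, i.e. that for distinct $a,a'\in A\setminus\{u\}$ we have $aa'\in E(G)$ if and only if $aa'\in E(B_G^2)$.

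For the forward direction I would invoke the standard fact that every edge of a graph lies in some maximal clique: if $aa'\in E(G)$, then the clique $\{a,a'\}$ extends to a maximal clique $C_j\in\mathcal{C}=B$, so by Construction~\ref{cons:c6-freebipartite} both $a$ and $a'$ are adjacent to $C_j$ in $B_G$. Hence $d_{B_G}(a,a')=2$ and $aa'\in E(B_G^2)$. For the converse, suppose $aa'\in E(B_G^2)$, i.e. $d_{B_G}(a,a')\le 2$. As $B_G$ is bipartite with $a,a'$ both lying in the independent set $A$, they are non-adjacent in $B_G$, so in fact $d_{B_G}(a,a')=2$ and they have a common neighbour $w$. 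The one point to argue with care is that $w\neq u$: indeed $u\in A$ is non-adjacent to the $A$-vertices $a,a'$, so $w=C_j$ for some maximal clique $C_j\in B$ with $a,a'\in C_j$. Since $C_j$ is a clique of $G$, this yields $aa'\in E(G)$, establishing the equivalence and hence the isomorphism.

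For the second claim I would exploit the universal vertex $u$ that the construction attaches to all of $B$. Any two distinct $C_i,C_j\in B$ are both adjacent to $u$ in $B_G$, so $u$ is a common neighbour and $d_{B_G}(C_i,C_j)=2$ (again they are non-adjacent in $B_G$ because $B$ is independent). Thus $C_iC_j\in E(B_G^2)$ for every such pair, which is precisely the assertion that $B_G^2[B]$ is a clique.

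The argument presents no real obstacle; the only place demanding attention is the backward direction of the isomorphism, where one must rule out $u$ as the common neighbour of $a$ and $a'$ (using independence of $A$ in the bipartite graph $B_G$) so that the common neighbour is forced to be a maximal clique containing both vertices, which then translates back into an edge of $G$. It is worth noting that diamond-freeness of $G$ plays no role in this structural statement itself; it is needed elsewhere, via Observation~\ref{obs:diamondfree}, only to bound $\mid B\mid=l$ and thereby keep $B_G$ of polynomial size.
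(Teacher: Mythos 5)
Your proposal is correct and follows essentially the same route as the paper's proof: both unwind the definition of $B_G^2$, use bipartiteness of $B_G$ to force any common neighbour of two $A$-vertices into $B$ (so that $xy\in E(B_G^2[A\setminus\{u\}])$ holds exactly when $x,y$ lie in a common maximal clique of $G$, i.e.\ $xy\in E(G)$), and use the universal vertex $u$ to see that $B_G^2[B]$ is a clique. Your explicit handling of the common-neighbour-$\neq u$ point and your side remark that diamond-freeness is needed only (via Observation~\ref{obs:diamondfree}) to keep $B_G$ polynomial in size are both accurate refinements of the paper's terser argument.
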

\begin{proof}
Let $H=B_G^2[A\setminus \{u\}]$. Clearly, by the definition of $B_G$, every vertex in $V(H)$ corresponds to a vertex in $V(G)$. Let $x,y\in V(H)$. Then, as $B_G$ is a bipartite graph, and $x,y\in V(H)\subseteq A\setminus \{u\}$, we have $xy\in E(H)\iff \exists C_j\in B$ for some $j\in \{1,2,\ldots,l\}$ such that $x,y\in N_{B_G}(C_j)$ (i.e. both the vertices $x$ and $y$ belong to a same maximal clique $C_j$ in $G$) $\iff xy\in E(G)$. This proves that $B_G^2[A\setminus \{u\}]\cong G$. Further, it is easy to see that $B_G^2[B]$ is a clique, since $B\subseteq N_{B_G}(u)$.
\end{proof}

\begin{lemma}\label{lem:diamondC_6}
    Let $G$ be a diamond-free graph and $B_G=(A,B,E)$ be the corresponding bipartite graph obtained from Construction~\ref{cons:c6-freebipartite}. Then, $B_G$ is a $C_6$-free bipartite graph.
\end{lemma}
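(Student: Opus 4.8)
The plan is to argue by contradiction: assume $B_G$ contains an induced six-cycle and extract a diamond in $G$, contradicting the hypothesis. Since $B_G$ is bipartite with parts $A$ and $B$, any induced $C_6$ alternates between $A$ and $B$, so I would write it as $a_1 - C_{j_1} - a_2 - C_{j_2} - a_3 - C_{j_3} - a_1$ with $a_1,a_2,a_3 \in A$ and three distinct cliques $C_{j_1},C_{j_2},C_{j_3} \in B$.

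First I would rule out the apex vertex $u$. If some $a_i$ equalled $u$, say $a_1=u$, then its two cyclic neighbours are $C_{j_1}$ and $C_{j_3}$, while $C_{j_2}$ is the remaining $B$-vertex of the cycle and is a non-neighbour of $a_1$. But by Construction~\ref{cons:c6-freebipartite} the vertex $u$ is adjacent to \emph{every} vertex of $B$, so $uC_{j_2} \in E(B_G)$ is a chord, contradicting that the cycle is induced. Hence $a_1,a_2,a_3$ are genuine vertices $v_1,v_2,v_3$ of $G$.

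Next I would read off the memberships forced by the cycle: the edges give $v_1,v_2 \in C_{j_1}$, $v_2,v_3 \in C_{j_2}$, and $v_1,v_3 \in C_{j_3}$, while the induced (chordless) condition gives $v_3 \notin C_{j_1}$. Because any two vertices lying in a common maximal clique of $G$ are adjacent, these memberships make $\{v_1,v_2,v_3\}$ a triangle in $G$. Now $C_{j_1}$ is a maximal clique containing $v_1,v_2$ but not $v_3$, while $v_3$ is adjacent to both $v_1$ and $v_2$; by maximality of $C_{j_1}$ there is a vertex $z \in C_{j_1}$ with $v_3 \not\sim z$ (otherwise $C_{j_1}\cup\{v_3\}$ would be a larger clique). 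Then $\{v_1,v_2,v_3,z\}$ induces exactly the edges $v_1v_2, v_2v_3, v_1v_3, zv_1, zv_2$ and misses $zv_3$, i.e. a diamond, which is the desired contradiction.

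The step requiring the most care is the existence and genuineness of $z$: I must check $z \notin \{v_1,v_2,v_3\}$ (immediate, since $z$ is adjacent to $v_1,v_2$ but not $v_3$, whereas the $v_i$ are mutually adjacent and $v_3 \notin C_{j_1}$) and that such a $z$ exists at all (forced by maximality of $C_{j_1}$). Everything else is bookkeeping on the bipartite adjacency encoded by Construction~\ref{cons:c6-freebipartite}; notably I would need only the definition of $B_G$ together with diamond-freeness of $G$, and not the maximal-clique bound of Observation~\ref{obs:diamondfree}.
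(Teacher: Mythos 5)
Your proof is correct and follows essentially the same route as the paper's: rule out the apex $u$ because it is adjacent to all of $B$ while every cycle vertex has a non-neighbour among the cycle's $B$-vertices, read off the triangle $\{v_1,v_2,v_3\}$ in $G$, and use maximality of $C_{j_1}$ (which contains $v_1,v_2$ but not $v_3$) to extract a fourth vertex $z$ with $zv_3\notin E(G)$, yielding a diamond. The only cosmetic difference is that you obtain the triangle directly from co-membership in maximal cliques, whereas the paper routes this step through the isomorphism $B_G^2[A\setminus\{u\}]\cong G$ of Lemma~\ref{lem:isomprphic}; you are also right that Observation~\ref{obs:diamondfree} is not needed here.
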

\begin{proof}
    Suppose not. Let $S=(a_1,C_1,a_2,C_2,a_3,C_3,a_1)$ be an induced $C_6$ in $B_G$, where $\{a_1,a_2,a_3\}\subseteq A$ and $\{C_1,C_2,C_3\}\subseteq B$.  Clearly, $\{a_1,a_2,a_3\}$ induces a triangle in $B_G^2[A]$. Since each of the vertex $a_i$ has a non-neighbor in $V(S)\cap B$, we have $u\neq a_i$ for any $i\in \{1,2,3\}$. This implies that $\{a_1,a_2,a_3\}$ induces a triangle in $B_G^2[A\setminus \{u\}]$, and thus a triangle in $G$ by Lemma~\ref{lem:isomprphic}. Now, consider the vertex $C_1$ in the cycle $S$. We have $N_{B_G}(C_1)\cap V(S) =\{a_1,a_2\}$. Clearly, $\{a_1,a_2\}$ is not a maximal clique in $G$, since $\{a_1,a_2,a_3\}$ induces a triangle in $G$. As $a_3\notin N_{B_G}(C_1)$, we have by the definition of $B_G$ that $a_3\notin C_1$ in $G$. Since $C_1$ is a maximal clique in $G$ containing the set $\{a_1,a_2\}$, but not the vertex $a_3$, there exists a vertex $a_4\in V(G)\subseteq A$ such that $a_4\in C_1$ in $G$ and $a_3a_4\notin E(G)$. This implies that $\{a_1,a_2,a_3,a_4\}$ induces a diamond in $G$, a contradiction to the fact that $G$ is diamond-free.
\end{proof}
We are now ready to prove Theorem~\ref{thm:C_6free hard}.\\

\noindent\textit{Proof of Theorem~\ref{thm:C_6free hard}}.
    Let $G$ be a diamond-free graph, and let $B_G$ be the corresponding bipartite graph obtained from Construction~\ref{cons:c6-freebipartite}. Since $B_G$ is $C_6$-free (by Lemma~\ref{lem:diamondC_6}) and triangle-free, we can therefore conclude that $B_G$ is $\mathcal{H}$-free. Therefore, by Theorem~\ref{thm:suff_cdclique}, we have that $\XCD(B_G)=k(B_G^*)$. Since $B_G^*= B_G^2[A]\uplus B_G^2[B]$, we then have $\XCD(B_G)=k(B_G^*) = k(B_G^2[A])+k(B_G^2[B])$ = $k(B_G^2[A\setminus \{u\}])+k(B_G^2[B])$ (again, the last equality is due to the fact that, in the graph $B_G^2[A]$, the vertex $u$ is adjacent to very vertex in $A\setminus \{u\}$). This implies that $\XCD(B_G)=k(B_G^*) = k(G)+1$ (by Lemma~\ref{lem:isomprphic}). Recall that \CC\ is NP-hard for diamond-free graphs. This implies that \CDC\ is NP-hard for $C_6$-free bipartite graphs.

    By Proposition~\ref{pro:omega_s} applied to the $C_6$-bipartite graph $B_G$, we have that $\SC(B_G)=\alpha(B_G^*)$, where $B_G^*=B_G^2- B_G$ = $B_G^2[A]\uplus B_G^2[B]$. Therefore,  we have $\SC(B_G)=\alpha(B_G^*) = \alpha(B_G^2[A])+\alpha(B_G^2[B])$ = $\alpha(B_G^2[A\setminus \{u\}])+\alpha(B_G^2[B])$ (the last equality is due to the fact that, in the graph $B_G^2[A]$, the vertex $u$ is adjacent to every vertex in $A\setminus \{u\}$). This implies that $\SC(B_G)=\alpha(B_G^*) = \alpha(G)+1$ (by Lemma~\ref{lem:isomprphic}). 
    Recall that \ISP\ is NP-hard for diamond-free graphs. This implies that \SCP\ is NP-hard for $C_6$-free bipartite graphs. 
    Hence, the theorem.\qed

\subsection{\large{\textbf{Some necessary conditions for $cd$-perfectness}}}

Let $C_n$ denote an induced cycle of length $n$ and $\bar{C_n}$ its complement. By the Strong Perfect Graph theorem, we have that $C_{n}$ or $\bar{C_{n}}$ is perfect if and only if $n=2k$ for some $k\geq 2$. In an upcoming theorem, we give an \textit{almost similar} necessary condition for $cd$-perfect graphs. The following observation for the cycles is noted in~\cite{ShaluSandhyaLowBound17}.
\begin{observation}[\cite{ShaluSandhyaLowBound17}]\label{obs:holes}
 For $n\geq 4$, we have  $\XCD (C_n)=\SC(C_n)$ if and only if $n=4k$ for some integer $k\geq 1$.
\end{observation}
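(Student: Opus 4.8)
The plan is to evaluate both sides of the claimed equality by reducing them to classical closed-form quantities and then comparing residues modulo $4$. Since $C_n$ is triangle-free for every $n\ge 4$, Proposition~\ref{pro:triangle-free} yields $\XCD(C_n)=\gamma_t(C_n)$, and the total domination number of a cycle is well known:
$$\gamma_t(C_n)=\begin{cases} n/2, & n\equiv 0\pmod 4,\\ \lfloor n/2\rfloor+1, & \text{otherwise.}\end{cases}$$
On the other side, Proposition~\ref{pro:omega_s} gives $\SC(C_n)=\alpha(C_n^*)$, so the whole problem reduces to understanding the auxiliary graph $C_n^*$ and computing its independence number.

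The next step is to pin down the structure of $C_n^*$. Writing $V(C_n)=\{0,1,\dots,n-1\}$ cyclically, two vertices lie at distance exactly $2$ precisely when they differ by $\pm 2 \pmod n$, so $C_n^*$ is the graph on $\mathbb{Z}_n$ in which $i$ is joined to $i\pm 2$. When $n$ is odd, $\gcd(2,n)=1$, so a single orbit of the step-by-two map covers all vertices and $C_n^*$ is a single $n$-cycle; hence $\alpha(C_n^*)=\alpha(C_n)=\lfloor n/2\rfloor$. When $n$ is even, the even-indexed and odd-indexed vertices form two separate components, each a cycle on $n/2$ vertices (which degenerates to a single edge $K_2$ when $n=4$); therefore $\alpha(C_n^*)=2\lfloor n/4\rfloor$. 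Thus I would record $\SC(C_n)=\lfloor n/2\rfloor$ for odd $n$ and $\SC(C_n)=2\lfloor n/4\rfloor$ for even $n$.

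Finally I would compare the two closed forms by the residue of $n$ modulo $4$. If $n\equiv 0\pmod 4$, then $\XCD(C_n)=n/2$ and $\SC(C_n)=2\lfloor n/4\rfloor=n/2$, so equality holds. If $n$ is odd, then $\XCD(C_n)=\lfloor n/2\rfloor+1$ while $\SC(C_n)=\lfloor n/2\rfloor$, a strict gap of $1$. If $n\equiv 2\pmod 4$, then $\XCD(C_n)=n/2+1$ while $\SC(C_n)=2\lfloor n/4\rfloor=n/2-1$, a strict gap of $2$. Hence $\XCD(C_n)=\SC(C_n)$ exactly when $n\equiv 0\pmod 4$, i.e. $n=4k$, which is the claim. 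I expect the only delicate point to be the structural analysis of $C_n^*$ — in particular verifying the two-component splitting for even $n$ and checking that the formula $\alpha(C_n^*)=2\lfloor n/4\rfloor$ remains correct in the degenerate case $n=4$, where each component is merely an edge rather than a genuine cycle; the remaining floor-function arithmetic across the four residue classes is routine.
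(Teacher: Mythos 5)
Your proposal is correct: all three ingredient facts check out, and the residue analysis is accurate. I verified the structure of $C_n^*$ (a single $n$-cycle when $n$ is odd; two disjoint cycles on $n/2$ vertices when $n$ is even, degenerating to $2K_2$ at $n=4$), the resulting values $\SC(C_n)=\lfloor n/2\rfloor$ for odd $n$ and $2\lfloor n/4\rfloor$ for even $n$, and the standard formula $\gamma_t(C_n)=n/2$ for $n\equiv 0\pmod 4$ and $\lfloor n/2\rfloor+1$ otherwise; your test cases agree with the paper's own data point $\XCD(C_6)=4>2=\SC(C_6)$, with gap $2$ as you predict for $n\equiv 2\pmod 4$.

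One point of comparison: the paper does not prove this observation at all --- it is imported verbatim from the cited reference --- so there is no in-paper proof to match. The closest internal analogue is the proof of the companion observation for complements of cycles (Observation~\ref{obs:antiholes}), which proceeds through the paper's own machinery: verify $\mathcal{H}$-freeness, apply Theorem~\ref{thm:suff_cdclique} to get $\XCD=k$ of the auxiliary graph, and apply Proposition~\ref{pro:omega_s} to get $\SC=\alpha$ of it. That route would also work here, but with a wrinkle your approach avoids: $C_6$ itself belongs to $\mathcal{H}$, so Theorem~\ref{thm:suff_cdclique} is inapplicable at $n=6$ and that case would need separate handling (indeed $\XCD(C_6)=4>2=k(C_6^*)$, so the equality $\XCD=k(G^*)$ genuinely fails there). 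By routing the left-hand side through Proposition~\ref{pro:triangle-free} and the known closed form for $\gamma_t(C_n)$ instead, you treat all $n\geq 4$ uniformly, at the cost of invoking an external formula rather than staying inside the paper's auxiliary-graph framework. Both are legitimate; yours is arguably the cleaner self-contained argument for this specific family.
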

We now have a similar observation for the complements of cycles.
\begin{observation}\label{obs:antiholes}
For $n\geq 5$, we have  $\XCD (\bar{C_n})=\SC(\bar{C_n})$ if and only if $n=2k$ for some integer $k\geq 2$.
\end{observation}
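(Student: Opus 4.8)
The plan is to route everything through the auxiliary graph $\bar{C_n}^*$, using Proposition~\ref{pro:omega_s} for $\SC$ and Theorem~\ref{thm:suff_cdclique} for $\XCD$, so that the problem reduces to identifying $\bar{C_n}^*$ and computing two of its classical parameters.

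First I would determine $\bar{C_n}^*$ exactly. In $\bar{C_n}$ each vertex $i$ is non-adjacent only to its two cyclic neighbours $i-1$ and $i+1$, so the non-edges of $\bar{C_n}$ are precisely the consecutive pairs $\{i,i+1\}$ of $C_n$. Since adjacent vertices lie at distance $1$, any pair at distance exactly $2$ in $\bar{C_n}$ must be such a consecutive pair; conversely, a consecutive pair $\{i,i+1\}$ has a common neighbour in $\bar{C_n}$ precisely when some vertex avoids $\{i-1,i,i+1,i+2\}$, which holds as soon as $n\geq 5$. Hence every consecutive pair is at distance exactly $2$, and $E(\bar{C_n}^*)$ is exactly the set of consecutive pairs; that is, $\bar{C_n}^*=C_n$. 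This is precisely the step where the hypothesis $n\geq 5$ is needed: for $n=4$ the two endpoints of a non-edge lie in different components of $\bar{C_4}=2K_2$, so no common neighbour exists.

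Next I would observe that $\bar{C_n}$ is $\mathcal{H}$-free. Every $H\in\mathcal{H}$ contains an independent set of size $3$ by Observation~\ref{obs:structureH}, whereas an independent set of $\bar{C_n}$ is a clique of $C_n$ and therefore has at most $2$ vertices; so $\bar{C_n}$ can contain no induced $H\in\mathcal{H}$. With $\mathcal{H}$-freeness in hand, Theorem~\ref{thm:suff_cdclique} gives $\XCD(\bar{C_n})=k(\bar{C_n}^*)=k(C_n)$, and Proposition~\ref{pro:omega_s} gives $\SC(\bar{C_n})=\alpha(\bar{C_n}^*)=\alpha(C_n)$.

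It then remains to compute the two cycle invariants. We have $\alpha(C_n)=\lfloor n/2\rfloor$, and since $C_n$ is triangle-free its cliques are single vertices and edges, so a minimum clique cover is obtained from a maximum matching, giving $k(C_n)=n-\lfloor n/2\rfloor=\lceil n/2\rceil$. Therefore $\XCD(\bar{C_n})=\lceil n/2\rceil$ and $\SC(\bar{C_n})=\lfloor n/2\rfloor$, and these agree if and only if $n$ is even, which is exactly the claimed equivalence (the odd boundary case $n=5$ being excluded, consistently with $\bar{C_5}=C_5$ having $\XCD=3>2=\SC$). The only genuinely delicate point is the identification $\bar{C_n}^*=C_n$, and in particular the common-neighbour count that pins down the restriction $n\geq 5$; once that is settled, the conclusion is a routine application of the machinery already developed (Theorem~\ref{thm:suff_cdclique}, Proposition~\ref{pro:omega_s}, and Observation~\ref{obs:structureH}).
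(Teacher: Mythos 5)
Your proposal is correct and follows essentially the same route as the paper: it establishes $\bar{C_n}^*=C_n$, deduces $\mathcal{H}$-freeness from $\alpha(\bar{C_n})\leq 2$ via Observation~\ref{obs:structureH}, and applies Theorem~\ref{thm:suff_cdclique} and Proposition~\ref{pro:omega_s} to reduce everything to $k(C_n)=\lceil n/2\rceil$ and $\alpha(C_n)=\lfloor n/2\rfloor$. The only difference is presentational: where the paper asserts in one line that $\bar{C_n}^2$ is a clique, you verify the identification $\bar{C_n}^*=C_n$ by an explicit common-neighbour count, which also makes the role of the hypothesis $n\geq 5$ transparent.
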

\begin{proof}
First, note that for each $n\geq 5$,  $\bar{C_n}$ is $\mathcal{H}$-free (since $\alpha(\bar{C_n})\leq 2$, for each $n\geq 5$, but $\alpha(H)=3$ for each $H\in \mathcal{H}$). Also, for each $n\geq 5$, it is easy to see that $\bar{C_n}^*=C_n$ (since $\bar{C_n}^2$ is a clique on $n$ vertices). Therefore, by Theorem~\ref{thm:suff_cdclique} and Proposition~\ref{pro:omega_s}, we have, $\XCD(\bar{C_n})=k(\bar{C_n}^*)=k(C_n)=\lceil \frac{n}{2}\rceil$ and $\SC(\bar{C_n})=\alpha(\bar{C_n}^*)=\alpha(C_n)=\lfloor \frac{n}{2}\rfloor$. Therefore, we can conclude that $\XCD (\bar{C_n})=\SC(\bar{C_n})$ if and only if $n=2k$ for some integer $k\geq 2$.
\end{proof}
As noted earlier, in the following theorem, we summarize some necessary conditions for a graph $G$ to be $cd$-perfect. 
The proof of the theorem is immediate from Observations~\ref{obs:holes} and~\ref{obs:antiholes}.

\begin{theorem}\label{thm:necessary}
If a graph $G$ is $cd$-perfect, then $G$ is $C_n$-free for each $n\geq 4$ with $n\neq 4k$, and $\bar{C_n}$-free for each $n\geq 5$ with $n\neq 2k$, where $k$ is a positive integer.
\end{theorem}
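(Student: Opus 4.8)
The plan is to argue directly from the definition of $cd$-perfectness, treating Observations~\ref{obs:holes} and~\ref{obs:antiholes} as the two forbidden-subgraph certificates. Recall that $G$ is $cd$-perfect exactly when \emph{every} induced subgraph $H$ satisfies $\XCD(H)=\SC(H)$. Hence, to forbid a class of induced subgraphs, it suffices to exhibit that each graph in that class individually breaks this equality; the whole proof is then a contrapositive packaging of the two observations.

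First I would handle the holes. Suppose, toward a contradiction, that $G$ contains an induced cycle $C_n$ with $n\geq 4$ and $n\neq 4k$ for every positive integer $k$. Then $H=C_n$ is an induced subgraph of $G$, and since $n$ is not a multiple of $4$, Observation~\ref{obs:holes} gives $\XCD(C_n)\neq\SC(C_n)$. This contradicts the assumption that $G$ is $cd$-perfect, so $G$ must be $C_n$-free for all such $n$.

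The antihole case is entirely symmetric. If $G$ contained an induced $\bar{C_n}$ with $n\geq 5$ and $n\neq 2k$ (i.e.\ $n$ odd), then taking $H=\bar{C_n}$, Observation~\ref{obs:antiholes} yields $\XCD(\bar{C_n})\neq\SC(\bar{C_n})$, again contradicting $cd$-perfectness; hence $G$ is $\bar{C_n}$-free for every such $n$.

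The argument has no real obstacle once the two observations are available: all the substance lives in computing $\XCD$ and $\SC$ for holes and antiholes, which those observations already carry out (the antihole computation in turn resting on $\bar{C_n}^*=C_n$ together with Theorem~\ref{thm:suff_cdclique} and Proposition~\ref{pro:omega_s}). The only point I would double-check is that the two ranges dovetail correctly with the stated condition—holes excluded for all $n\geq 4$ that are not multiples of $4$, antiholes excluded only for odd $n\geq 5$—so that the resulting list of forbidden induced subgraphs matches the claimed necessary condition exactly.
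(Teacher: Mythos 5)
Your proposal is correct and matches the paper's proof exactly: the paper also derives Theorem~\ref{thm:necessary} as an immediate consequence of Observations~\ref{obs:holes} and~\ref{obs:antiholes}, using precisely the contrapositive reading of $cd$-perfectness on the induced holes and antiholes. You have merely written out the two-line contradiction argument that the paper leaves implicit, and your range check (holes for $n\geq 4$, $n\neq 4k$; antiholes for odd $n\geq 5$) agrees with the statement.
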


\section{Separated-cluster problem in Interval graphs}
\label{sec:interval}In this section, we settle an open problem proposed by Shalu et al.~\cite{ShaluSandhyaLowBound17}, by proving that \SCP\ is polynomial-time solvable for the class of interval graphs (Theorem~\ref{thm:sepiterval}). We achieve this by showing a polynomial-time reduction of \SCP\ problem on interval graphs to the maximum weighted independent set problem on cocomparability graphs (which can be solved in time linear to the size of the input graph~\cite{KohLalMaxWtIndSetAlg16}). An undirected graph is called \textit{comparability graph}, if it is transitively orientable, i.e., its edges can be directed such that if $a\rightarrow b$ and $b\rightarrow c$ are directed edges, then $a\rightarrow c$ is a directed edge. \textit{Cocomparability graphs} are complements of comparability graphs. Throughout the section, we use an alternative definition of \textit{separated-cluster}. i.e. for a graph $G$, we say that a family of disjoint cliques in $G$, forms a \textit{separated-cluster}  $\mathcal{S}=\{S_1,S_2,\ldots,S_k\}$, if any pair of cliques in $\mathcal{S}$ are mutually non-adjacent and no two vertices belonging to distinct cliques in $\mathcal{S}$ have a common neighbor in $G$. Formally, for $i,j\in \{1,2,\ldots,k\}$ with $i\neq j$, if $x\in S_i$ and $y\in S_j$, then $xy\notin E(G)$, and there does not exist a vertex $z\in V(G)\setminus (S_i\cup S_j)$ such that $x,y\in N_G(z)$.  \\

\noindent\textbf{Sketch of the reduction: }
Let $G$ be an interval graph with an interval representation, $\{I_v\}_{v\in V(G)}$, where $n=~ \mid V(G) \mid $. 
Without loss of generality, we can assume that \textit{all the end-points of the intervals in $\{I_v\}_{v\in V(G)}$ are distinct}. 
For a vertex $v\in V(G)$, for the sake of convenience, we denote by $l(v)$ and $r(v)$, the \textit{left} and \textit{right end-points} of the interval $I_v$, respectively. Our basic idea for the reduction is as follows. Recall that in \SCP\ problem, we need to find a collection of cliques in $G$ satisfying certain conditions. First, note that the cliques belonging to a separated-cluster are not necessarily maximal in $G$. Therefore, our primary goal is to find a polynomial (in $n$) sized collection of cliques in $G$ that contains all the cliques that may possibly belong to {some} maximum cardinality separated-cluster in $G$. To achieve this, we heavily use the interval representation of $G$, and extend the collection of maximal cliques in $G$ to a \textit{``special"} collection of cliques denoted as $\hat{\mathcal{C}}$. Each member of $\hat{\mathcal{C}}$ is obtained by \textit{``pruning"} a maximal clique in $G$ with respect to a fixed interval representation of $G$. Then, from the input graph $G$, we construct a weighted conflict graph $G_c$, having the collection of cliques $\hat{\mathcal{C}}$ as the vertex set, and the vertices are assigned a weight same as the cardinalities of the corresponding sets (when they are considered as cliques in $G$). The graph $G_c$ is referred to as a conflict graph because, each vertex in $G_c$ represents a clique in $G$, and the edges between two vertices are added to $G_c$ if the corresponding cliques can not appear together in any separated-cluster.  We then prove that the weighted conflict graph $G_c$ is a \textit{cocomparability} graph and \textit{a maximum cardinality separated-cluster in the interval graph $G$ is equivalent to the maximum cardinality independent set in $G_c$}. 

\begin{figure}
  \centering
    \centering
    \usetikzlibrary{arrows}
\usetikzlibrary{decorations.markings}
\usetikzlibrary{shapes.geometric}
\usetikzlibrary{positioning}
\usetikzlibrary{decorations.text}
\usetikzlibrary{decorations.pathmorphing}

\resizebox{0.80\textwidth}{!}{\begin{tikzpicture}[myv1/.style={circle, draw, inner sep=3.5pt, line width = 1 mm, fill=black},myv4/.style={rectangle, draw,dotted,inner sep=0pt, line width = 0.2mm},mvy5/.style={ellipse, draw, line width = 1mm},mydummy/.style={circle, inner sep=0pt, fill=white}]

\tikzset{
  big arrow/.style={
    decoration={markings,mark=at position 1 with {\arrow[scale=4,#1]{>}}},
    postaction={decorate},
    shorten >=0.4pt},
  big arrow/.default=black}

\node[myv1] [label = left: \text{\Large $v_1$}] (v1) at (0, 0) {};
\node[mydummy] [label=] (v11l) at (-1, 0) {};
\node[mydummy] [label=] (v11b) at (0, -0.5) {};

\node[myv1] [label = right: \text{\Large $v_2$}] (v2) at (2, 2) {};
\node[mydummy] [label=] (v21a) at (2, 2.5) {};
\node[mydummy] [label=] (v21l) at (1.5, 2) {};

\node[myv1] [label = below: \text{\Large $v_3$}] (v3) at (4, 0) {};
\node[mydummy] [label=] (v31b) at (4, -0.5) {};
\node[mydummy] [label=] (v32b) at (4, -1) {};
\node[mydummy] [label=] (v33b) at (4, -1.5) {};
\node[mydummy] [label=] (v31a) at (4, 0.5) {};
\node[mydummy] [label=] (v31l) at (3.5, 0) {};
\node[mydummy] [label=] (v31r) at (4.5, 0) {};

\node[myv1] [label = right: \text{\Large $v_4$}] (v4) at (6, 2) {};
\node[mydummy] [label=] (v42a) at (6, 0.5) {};

\node[myv1] [label = below: \text{\Large $v_5$}] (v5) at (8, 0) {};
\node[mydummy] [label=] (v51b) at (8, -0.5) {};

\node[myv1] [label = below: \text{\Large $v_6$}] (v6) at (10, 0) {};
\node[mydummy] [label=] (v62b) at (10, -1) {};

\node[myv1] [label = above: \text{\Large $v_7$}] (v7) at (10, 2) {};
\node[mydummy] [label=] (v71a) at (10, 2.5) {};
\node[mydummy] [label=] (v72a) at (10, 3) {};

\node[myv1] [label = above: \text{\Large $v_8$}] (v8) at (12, 2) {};

\node[myv1] [label = below: \text{\Large $v_9$}] (v9) at (12, 0) {};
\node[mydummy] [label=] (v92b) at (12, -0.5) {};

\node[myv1] [label = below: \text{\Large $v_{10}$}] (v10) at (14, 0) {};
\node[mydummy] [label=] (v101a) at (14, 0.5) {};
\node[mydummy] [label=] (v102b) at (14, -1) {};

\draw [label=](v1) -- (v3);
\draw [label=](v2) -- (v3);
\draw [label=](v3) -- (v4);
\draw [label=](v3) -- (v5); 
\draw [label=](v4) -- (v5);
\draw [label=](v5) -- (v6);
\draw [label=](v6) -- (v7);
\draw [label=](v6) -- (v8);
\draw [label=](v6) -- (v9);
\draw [label=](v7) -- (v8);
\draw [label=](v7) -- (v9);
\draw [label=](v8) -- (v9);
\draw [label=](v9) -- (v10);

\node[myv4][fit= (v11l) (v1) (v3) (v31b) (v31r),  inner xsep=1.5ex, inner ysep=1.5ex, label=left:\text{\Large $C_1$}] {}; 
\node[myv4][fit=(v21a) (v3) (v33b) (v21l),  inner xsep=1.5ex, inner ysep=1.5ex, label=below: \text{\Large $C_2$}] {}; 
\node[myv4][fit=(v3) (v4) (v5) (v32b) (v31l) (v42a),  inner xsep=1.5ex, inner ysep=1.5ex, label= below: \text{\Large $C_3$}] {};
\node[myv4][fit= (v5) (v6)(v51b),  inner xsep=1.5ex, inner ysep=1.5ex, label= above: \text{\Large $C_4$}] {};
\node[myv4][fit= (v6) (v62b) (v7) (v71a) (v8) (v9),  inner xsep=1.5ex, inner ysep=1.5ex, label= above: \text{\Large $C_5$}] {};
\node[myv4][fit= (v92b) (v10),  inner xsep=1.5ex, inner ysep=1.5ex, label= above: \text{\Large $C_6$}] {};

\end{tikzpicture}}
    \caption{An interval graph $G$}
    \label{fig:separated_cluster_interval_graph}
  \end{figure}
\begin{figure}
 \centering
  \centering
\resizebox{0.80\textwidth}{!}{\begin{tikzpicture}[myv1/.style={circle, draw, line width = 1 mm, fill=black},[myv2/.style={circle, draw, line width = 1 mm, fill=none}, myv4/.style={rectangle, draw,dotted,inner sep=0pt, line width = 0.05mm},mvy5/.style={ellipse, draw, line width = 1mm},[mydummy/.style={circle, inner sep=0pt, fill=white}], [myv6/.style = {>={Bracket[width=6mm,line width=1pt,length=1.5mm]},<->}]

\tikzset{
  big arrow/.style={
    decoration={markings,mark=at position 1 with {\arrow[scale=4,#1]{>}}},
    postaction={decorate},
    shorten >=0.4pt},
  big arrow/.default=black}

\node [label =]  (v1L) at (0, 0) {};
\node [label =]  (v1R) at (2, 0) {};

\node [label =] (v2L) at (3, 0) {};
\node [label =] (v2R) at (5, 0) {};

\node [label =] (v3L) at (1, -1.5) {};
\node [label =] (v3R) at (8, -1.5) {};

\node [label =] (v4L) at (7, 0) {};
\node [label =] (v4R) at (9, 0) {};

\node [label =] (v5L) at (6, -4.5) {};
\node [label =] (v5R) at (11, -4.5) {};

\node [label =] (v6L) at (10, -3) {};
\node [label =] (v6R) at (15, -3) {};

\node [label =] (v7L) at (13, -1.5) {};
\node [label =] (v7R) at (17, -1.5) {};

\node [label =] (v8L) at (14, 0) {};
\node [label =] (v8R) at (16, 0) {};

\node [label =] (v9L) at (12, -4.5) {};
\node [label =] (v9R) at (19, -4.5) {};

\node [label =] (v10L) at (18, -3) {};
\node [label =] (v10R) at (21, -3) {};

\draw [>={Bracket[width=6mm,line width=1pt,length=1.5mm]},<->] (v1L) -- (v1R) node[midway, above] {\huge{$I_{v_1}$}};

\draw [>={Bracket[width=6mm,line width=1pt,length=1.5mm]},<->] (v2L) -- (v2R) node[midway, above] {\huge{$I_{v_2}$}};

\draw [>={Bracket[width=6mm,line width=1pt,length=1.5mm]},<->] (v3L) -- (v3R) node[midway, above] {\huge{$I_{v_3}$}};

\draw [>={Bracket[width=6mm,line width=1pt,length=1.5mm]},<->] (v4L) -- (v4R) node[midway, above] {\huge{$I_{v_4}$}};

\draw [>={Bracket[width=6mm,line width=1pt,length=1.5mm]},<->] (v5L) -- (v5R) node[midway, above] {\huge{$I_{v_5}$}};

\draw [>={Bracket[width=6mm,line width=1pt,length=1.5mm]},<->] (v6L) -- (v6R) node[midway, above] {\huge{$I_{v_6}$}};

\draw [>={Bracket[width=6mm,line width=1pt,length=1.5mm]},<->] (v7L) -- (v7R) node[midway, above] {\huge{$I_{v_7}$}};

\draw [>={Bracket[width=6mm,line width=1pt,length=1.5mm]},<->] (v8L) -- (v8R) node[midway, above] {\huge{$I_{v_8}$}};

\draw [>={Bracket[width=6mm,line width=1pt,length=1.5mm]},<->] (v9L) -- (v9R) node[midway, above] {\huge{$I_{v_9}$}};

\draw [>={Bracket[width=6mm,line width=1pt,length=1.5mm]},<->] (v10L) -- (v10R) node[midway, above] {\huge{$I_{v_{10}}$}};

\end {tikzpicture}}
    \caption{Interval Representation of $G$ shown in Figure~\ref{fig:separated_cluster_interval_graph}}
    \label{fig:Interval representation of an interval graph1}
  \end{figure}

Note that most of the definitions in this section are based on a fixed interval representation, $\{I_v\}_{v\in V(G)}$ of $G$. Let  $\{C_1,C_2,\ldots, C_x\}$ be the collection of all \textit{maximal cliques} in $G$ (refer Figure~\ref{fig:separated_cluster_interval_graph}) and let $t_i=~ \mid C_i \mid $, for $i\in \{1,2,\ldots,x\}$. Note that,  as $G$ is an interval graph, we have  $x\leq n$~\cite{GolumAlgoGrTh04}.  Now to define the desired collection of cliques, we first introduce the following. We enumerate the vertices belonging to $C_i$ in two different ways (namely, in the increasing order of the left and right end-points of their intervals). i.e. $C_i^{l\_list}=(v^1,v^2,\ldots, v^{t_i})$ is such that $l(v^1)< l(v^2)<\cdots < l(v^{t_i})$, and $C_i^{r\_list}=(u^1,u^2,\ldots, u^{t_i})$ is such that $r(u^1)< r(u^2)<\cdots < r(u^{t_i})$. For an example, refer Table~\ref{tab:lists} that corresponds to the graph in Figure~\ref{fig:separated_cluster_interval_graph} and \ref{fig:Interval representation of an interval graph1}. 

\begin{table}[ht]
    \centering
    \tiny{
    \begin{adjustbox}{width=0.9\textwidth}
\begin {tabular} {|l|l|l|}
\hline
  \multicolumn{1}{|c|}{$C_i$} & \multicolumn{1}{c|}{$C_i^{l\_list}$} & \multicolumn{1}{c|}{$C_i^{r\_list}$} \\
\hline
 \makecell[tl]{$C_1 = \{ v_1, v_3\} $\\ $\lvert C_1 \rvert = 2$} &  $C_1^{l\_list} = (v_1, v_3) $ &$C_1^{r\_list} = (v_1, v_3) $ \\
\hline
 \makecell[tl]{$C_2 = \{ v_2, v_3\} $ \\$\lvert C_2 \rvert = 2$} &  $C_2^{l\_list} = (v_3, v_2)$ & $C_2^{r\_list} = (v_2, v_3)$ \\
\hline
 \makecell[tl]{ $C_3 = \{ v_3, v_4, v_5\} $ \\$\lvert C_3 \rvert = 3$}     &   $C_3^{l\_list} = (v_3, v_5, v_4)$ & $C_3^{r\_list} = (v_3, v_4, v_5)$ \\
\hline
 \makecell[tl]{ $C_4 = \{ v_5, v_6\} $ \\$\lvert C_4 \rvert = 2$} & $C_4^{l\_list} = ( v_5, v_6)$ &$C_4^{r\_list} = (v_5, v_6)$ \\
\hline
 \makecell[tl]{ $C_5 = \{ v_6, v_7, v_8, v_9\} $ \\ $\lvert C_5 \rvert = 4$}   &   $C_5^{l\_list} = (v_6, v_9, v_7, v_8)$ & $C_5^{r\_list} = (v_6, v_8, v_7, v_9)$ \\
\hline
 \makecell[tl]{ $C_6 = \{ v_9, v_{10}\} $ \\$\lvert C_6 \rvert = 2$} &  $C_6^{l\_list} = (v_9, v_{10})$&$C_6^{r\_list} = (v_9, v_{10})$ \\
\hline
\end {tabular}

\end{adjustbox}}
 \caption{The enumerations $C_i^{l\_list}$ and $C_i^{r\_list}$ of the maximal cliques $C_i$ of $G$ shown in Figure~\ref{fig:separated_cluster_interval_graph}}
    \label{tab:lists}
\end{table}
    
    Let $i\in \{1,2,\ldots,x\}$ and $p,q\in \{1,2,\ldots, \mid C_i \mid \}$. 
    We now define $C_i^l(v^p)$ (respectively $C_i^r(u^q)$) as the set of vertices in $C_i$ whose left (respectively, right) end-points of the intervals are greater (respectively less) than or equal to the left (respectively right) end-points interval of  $v^p$ (respectively $u^q$). i.e., $C_i^l(v^p)=\{u\in C_i: l(u) \geq l(v^p)\}$ and $C_i^r(u^q)=\{u\in C_i:  r(u) \leq r(u^q)\}$ (refer Table~\ref{tab:collection}). Clearly, $C_i=C_i^l(v^1)\supseteq C_i^l(v^2)\supseteq \ldots \supseteq C_i^l(v^{t_i})=\{v^1\}$ and $\{u^1\}=C_i^r(u^1)\subseteq C_i^r(u^2)\subseteq \ldots \subseteq C_i^r(u^{t_i})=C_i$. We then define the cliques $C_i^{lr}(v^p,u^q)$ as the set of vertices of $C_i$ whose corresponding intervals are completely contained in the interval $[l(v^p), r(u^q)]$, i.e., $C_i^{lr}(v^p,u^q) = C_i^l(v^p)\cap C_i^r(u^q)$.

\begin{definition}[The collection of cliques $\hat{\mathcal{C}}$]
   For $\{i\in \{1,2,\ldots,x\}\}$,  let $\hat{\mathcal{C}_i}=\{C_i^{lr}(v^p,u^q):p,q\in \{1,2,\ldots, \mid C_i \mid \}\}$.  Then the collection of cliques $\hat{\mathcal{C}}=\bigcup_{i\in[x]}\hat{\mathcal{C}_i}$, for $i\in \{1,2,\ldots,x\}$. 
\end{definition}
\begin{table}[!htb]
   \centering
   {
   \resizebox{0.75\textwidth}{!}{
\begin{adjustbox}{width=\textwidth}
\begin {tabular} {|l|l|l|l|}
\hline
\multicolumn{1}{|c|}{$C_i$} & \multicolumn{1}{c|}{$C_i^l(v^p)$} & \multicolumn{1}{c|}{$C_i^r(u^q)$} & \multicolumn{1}{c|}{$C_i^{lr}(v^p,u^q)$} \\
\hline 
$C_1$ &
\makecell[tl]{
$C_1^{l\_list} = (v_1, v_3)$\\\\
$ C_1^l(v^1) = \{v_1, v_3\} $\\
$C_1^l{(v^2)} = \{v_3\}$
} &
\makecell[tl]{
$C_1^{r\_list} = (v_1, v_3)$\\\\
$ C_1^r(v^1) = \{v_1\} $\\
$C_1^r(v^2) = \{v_1, v_3\}$
} &
\makecell[tl]{\\
$C_1^{lr}(v^1,v^1) = \{v_1\} $ \\
$C_1^{lr}(v^1,v^2)=\{v_1, v_3\}$\\
$C_1^{lr}(v^2,v^2) = \{v_3\}$ 
} \\
\hline

$C_2$ & 
\makecell[tl]{
$C_2^{l\_list} = (v_3, v_2)$\\\\
$ C_2^l(v^1) = \{v_3, v_2\}$\\
$C_2^l(v^2) = \{v_2\}$
} &
\makecell[tl]{
$C_2^{r\_list} = (v_2, v_3)$\\\\
$C_2^r(v^1) = \{v_2\}$\\
$C_2^r(v^2) = \{v_2, v_3\}$
} &
\makecell[tl]{\\
$C_2^{lr}(v^1,v^1) = \{v_2\}$ \\
$C_2^{lr}(v^1,v^2) = \{v_2, v_3\}$\\
$C_2^{lr}(v^2,v^1) = \{v_2\}$\\
$C_2^{lr}(v^2,v^2) = \{v_2\}$ 
} \\
\hline

$C_3$ & 
\makecell[tl]{
$C_3^{l\_list} = (v_3, v_5, v_4)$\\\\
$C_3^l(v^1) = \{v_3, v_5, v_4\} $\\
$C_3^l(v^2) = \{v_5, v_4\}$\\
$C_3^l(v^3) = \{v_4\}$
} &
\makecell[tl]{
$C_3^{r\_list} = (v_3, v_4, v_5)$\\\\
$C_3^r(v^1) = \{v_3\}$\\
$C_3^r(v^2) = \{v_3, v_4\}$\\
$C_3^r(v^3) = \{v_3, v_4, v_5\}$
} &
\makecell[tl]{\\
$C_3^{lr}(v^1,v^1) = \{v_3\}$ \\
$C_3^{lr}(v^1,v^2) = \{v_3, v_4\}$\\
$C_3^{lr}(v^1,v^3) = \{v_3, v_4, v_5\}$\\
$C_3^{lr}(v^2,v^2) = \{v_4\}$ \\
$C_3^{lr}(v^2,v^3) = \{v_4, v_5\}$ \\
$C_3^{lr}(v^3,v^2) = \{v_4\}$ \\
$C_3^{lr}(v^3,v^3) =\{v_4\}$ 
} \\
\hline
$C_4$ &
\makecell[tl]{
$C_4^{l\_list} = (v_5, v_6)$\\\\
$C_4^l(v^1) = \{v_5, v_6\} $\\
$C_4^l(v^2) = \{v_6\}$
} &
\makecell[tl]{
$C_4^{r\_list} = (v_5, v_6)$\\\\
$ C_4^r(v_r^1) = \{v_5\} $\\
$C_4^r(v_r^2) = \{v_5, v_6\}$
} &
\makecell[tl]{\\
$C_4^{lr}(v^1,v^1) = \{v_5\} $ \\
$C_4^{lr}(v^1,v^2)= \{v_5, v_6\}$\\
$C_4^{lr}(v^2,v^2) = \{v_6\}$ 
} \\
\hline
$C_5$ & 
\makecell[tl]{
$C_5^{l\_list} = (v_6, v_9, v_7, v_8)$\\\\
$C_5^l(v^1) = \{v_6, v_9, v_7, v_8\} $\\
$C_5^l(v^2) = \{v_9, v_7, v_8\}$\\
$C_5^l(v^3) = \{v_7, v_8\}$\\
$C_5^l(v^4) = \{v_8\}$
} &
\makecell[tl]{
$C_5^{r\_list} = (v_6, v_8, v_7, v_9)$\\\\
$ C_5^r(v^1) = \{v_6\} $\\
$C_5^r(v^2) = \{v_6, v_8\}$\\
$C_5^r(v^3) = \{v_6, v_8, v_7\}$\\
$C_5^r(v^4) = \{v_6, v_8, v_7, v_9\}$
} &
\makecell[tl]{\\
$C_5^{lr}(v^1,v^1) = \{v_6\}$ \\
$C_5^{lr}(v^1,v^2) = \{v_6, v_8\}$\\
$C_5^{lr}(v^1,v^3) = \{v_6,v_7, v_8\}$\\
$C_5^{lr}(v^1,v^4) = \{v_6, v_7, v_8, v_9\}$\\
$C_5^{lr}(v^2,v^2) = \{v_8\}$ \\
$C_5^{lr}(v^2,v^3) = \{v_7, v_8\}$ \\
$C_5^{lr}(v^2,v^4) = \{v_7, v_8, v_9\}$ \\
$C_5^{lr}(v^3,v^2) = \{v_8\}$ \\
$C_5^{lr}(v^3,v^3) = \{v_7, v_8\}$ \\
$C_5^{lr}(v^3,v^4) = \{v_7, v_8\}$ \\
$C_5^{lr}(v^4,v^2) = \{v_8\}$ \\
$C_5^{lr}(v^4,v^3) = \{v_8\}$ \\
$C_5^{lr}(v^4,v^4) = \{v_8\}$ 
} \\
\hline
$C_6$ &
\makecell[tl]{
$C_6^{l\_list} = (v_9, v_{10})$\\\\
$C_6^l(v^1) = \{v_9, v_{10}\} $\\
$C_6^l(v^2) = \{v_{10}\}$
} &
\makecell[tl]{
$C_6^{r\_list} = (v_9, v_{10})$\\\\
$C_6^r(v^1) = \{v_9\} $\\
$C_6^r(v^2) = \{v_9, v_{10}\}$
} &
\makecell[tl]{\\
$C_6^{lr}(v^1,v^1) = \{v_9\}$ \\
$C_6^{lr}(v^1,v^2)= \{v_9, v_{10}\}$\\
$C_6^{lr}(v^2,v^2) = \{v_{10}\}$ 
} \\
\hline

\end {tabular}
\end{adjustbox}}

 \caption{The collection of cliques $C_i^l$, $C_i^r$, and $C_i^{lr}$ corresponding to the maximal cliques $C_i$ of the graph $G$ shown in Figure~\ref{fig:separated_cluster_interval_graph}}
    \label{tab:collection}
}
\end{table} 

To illustrate the above definitions, consider the interval graph $G$ in Figure~\ref{fig:separated_cluster_interval_graph} and its interval representation given in Figure~\ref{fig:Interval representation of an interval graph1}. It is easy to see that the collection of maximal cliques in $G$ is $\{C_1, C_2, \ldots, C_6\}$. For each $C_i$, the enumerations of $C_i$, namely, $C_i^{l\_list}$ and $C_i^{r\_list}$ are listed in Table~\ref{tab:lists}. Further, for each $C_i$, where $i\in \{1,2,\ldots,6\}$, in Table~\ref{tab:collection}, we summarize the collection of cliques $C_i^l(v_p)$, $C_i^r(u^q)$, and $C_i^{lr}(v^p,u^q)$. Note that, in the third column of Table~\ref{tab:collection}, we did not include entries corresponding to those pairs $p,q\in \{1,2,\ldots,\lvert C_i \rvert \}$ for which $C_i^{lr}(v^p,u^q)=\emptyset$. Now, for each $i\in \{1,2,\ldots,6\}$, we have $\hat{\mathcal{C}_i}=\{C_i^{lr}(v^p,u^q):p,q\in \{1,2,\ldots, \mid C_i \mid \}\}$; i.e., $\hat{\mathcal{C}}=\bigcup_{i\in[x]} \hat{\mathcal{C}_i}:i\in \{1,2,\ldots,6\}$.\\

We note the following.

\begin{remark} \label{rem:clique_complexity}
Since $C_i \in \hat{\mathcal{C}_i}$ for each $i \in \{1, 2, \ldots, x \}$ (as $C_i = C_i^{lr}(v^1, u^{\lvert C_i \rvert}))$, the collection of cliques  $\hat{\mathcal{C}}$ contains all the maximal cliques in $G$. Also, as the number of ordered pairs of the form $(p,q)$ is at most $n^2$ (as $p,q\in \{1,2,\ldots, \mid C_i \mid \}$, $ \mid C_i \mid \leq n$), we then have that $ \mid \hat{\mathcal{C}} \mid \leq n^3$. 

\end{remark}

\noindent\textbf{Properties of separated-cluster in interval graphs:}
Let $\mathcal{F}$ be a family of disjoint cliques in an interval graph $G$. With respect to an interval representation $\{I_v\}_{v\in V(G)}$ of $G$, we can define a natural ordering $(S_1,S_2,\ldots,S_k)$ of the cliques in $\mathcal{F}$ as follows: \textit{ for any pair $i,j\in \{1,2,\ldots,k\}$, we define $S_i<S_j$ if and only if the Helly region  (the common intersection region of intervals belonging to the same clique) of the clique $S_i$, denoted as $H_{S_i}$, lies entirely to the left of the Helly region of the clique $S_j$}. We will assume this ordering throughout this section. For a clique $S_i\in \mathcal{S}$, where $i\in \{1,2,\ldots,k\}$, let $C_i$ be a maximal clique in $G$ such that $S_i\subseteq C_i$ (possibly, $C_i=S_i$). 

Let $\mathcal{S}=\{S_1,S_2,\ldots,S_k\}$ be a separated-cluster in $G$. Let $C_i$ be a maximal clique in $G$ such that $S_i\subseteq C_i$. Let $j\in \{1,2,\ldots,k\}$, with $j\neq i$. 
We then have the following observations.

\begin{observation}\label{obs:propmaxclique}

(i) $S_j\cap C_i=\emptyset$\\ (ii) no vertex in $C_i$ is adjacent to any vertex in $S_j$.
\end{observation}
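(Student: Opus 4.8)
The plan is to prove both parts directly from the \emph{alternative combinatorial definition} of separated-cluster stated at the beginning of this section, namely that any two distinct cliques of $\mathcal{S}$ are mutually non-adjacent and that no two vertices from distinct cliques of $\mathcal{S}$ share a common neighbor. The key observation is that neither part actually requires the interval representation nor the maximality of $C_i$; all I will use is that $C_i$ is a clique with $S_i\subseteq C_i$ and that the members of $\mathcal{S}$ are pairwise disjoint. Both statements are naturally argued by contradiction.

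For part (i), I would suppose $S_j\cap C_i\neq\emptyset$ and pick a vertex $y\in S_j\cap C_i$ together with any vertex $x\in S_i$. Since $S_i$ and $S_j$ are disjoint cliques of $\mathcal{S}$, we have $x\neq y$, and since both $x$ and $y$ lie in the clique $C_i$, this forces $xy\in E(G)$. On the other hand, $x\in S_i$ and $y\in S_j$ belong to distinct cliques of the separated-cluster, so the mutual non-adjacency condition gives $xy\notin E(G)$, a contradiction. Hence $S_j\cap C_i=\emptyset$.

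For part (ii), I would suppose some vertex $w\in C_i$ is adjacent to some vertex $y\in S_j$. By part (i), $w\notin S_j$, so the location of $w$ splits into two cases. If $w\in S_i$, then $w$ and $y$ lie in distinct cliques of $\mathcal{S}$, so mutual non-adjacency yields $wy\notin E(G)$, contradicting the assumed adjacency. Otherwise $w\in C_i\setminus S_i$, so $w\notin S_i\cup S_j$; choosing any $x\in S_i\subseteq C_i$ we have $x\neq w$ (as $w\notin S_i$) and hence $xw\in E(G)$, i.e.\ $x\in N_G(w)$. Together with $y\in N_G(w)$, this exhibits $w$ as a common neighbor of $x\in S_i$ and $y\in S_j$ lying outside $S_i\cup S_j$, contradicting the no-common-neighbor condition. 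Thus no vertex of $C_i$ is adjacent to any vertex of $S_j$.

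The only point requiring care is the bookkeeping of where $w$ sits (inside $S_i$, inside $S_j$, or outside both) and ensuring that the vertices I compare are genuinely distinct, so that joint membership in the clique $C_i$ produces an actual edge; the disjointness of the cliques in $\mathcal{S}$ together with part (i) supplies exactly these distinctness facts. There is no substantive obstacle here, and it is worth remarking that the interval structure is unused, in contrast with the surrounding analysis.
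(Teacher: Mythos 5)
Your proof is correct and follows essentially the same route as the paper's: part (i) derives a contradiction from a vertex of $S_j\cap C_i$ being adjacent (within the clique $C_i$) to vertices of $S_i$, and part (ii) uses part (i) plus the definition of separated-cluster to exhibit the offending vertex as a forbidden common neighbor of $S_i$ and $S_j$. Your explicit case split on whether $w\in S_i$ merely spells out what the paper's phrase ``by the definition of $\mathcal{S}$, we have $z\notin S_i$'' leaves implicit, so there is no substantive difference.
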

\begin{proof}
    To prove (i), for the sake of contradiction assume that  $S_j\cap C_i\neq \emptyset$, and let $z\in S_j\cap C_i$. 
    Then $z$ is adjacent to all the vertices in $S_i$, as $C_i$ is a clique and $S_i\subseteq C_i$, which contradicts the fact that $\mathcal{S}$ is a separated cluster. 
     (ii) Suppose that there exists a vertex, say $z\in C_i$, such that $z$ is adjacent to some vertex in $S_j$. Clearly, by the definition of $\mathcal{S}$, we have $z\notin S_i$ and by (i), $S_j\cap C_i=\emptyset$. 
    This implies that $z$ is a common neighbor of some vertex in $S_i$ and some vertex in $S_j$, where $i\neq j$. This again contradicts the definition of the separated cluster $\mathcal{S}$.
\end{proof}

A vertex $x\in C_i\setminus S_i$ is said to have a \textit{left-conflict} with respect to $\mathcal{S}$, 
if there exists $S_j\in \mathcal{S}$, with $j<i$, and $\exists z\in V(G)\setminus(S_i\cup S_j)$ such that $z$ is a common neighbor of both $x$ and some vertex $y\in S_j$, i.e., $z\in N(x)\cap N(y)$. Similarly, a vertex $x\in C_i\setminus S_i$ is said to have a \textit{right-conflict} with respect to $\mathcal{S}$, 
if there exists $S_j \in \mathcal{S}$ with $j>i$ and $\exists z\in V(G)\setminus(S_i\cup S_j)$ such that $z$ is a common neighbor of both $x$ and some vertex $y\in S_j$, i.e., $z\in N(x)\cap N(y)$. 

 The following observation is due to the definitions of \textit{left-conflict} (respectively \textit{right-conflict}) and $C_i^l(v^p)$ (respectively $C_i^r(u^q)$).
\begin{observation}\label{obs:conflict}
     Let $v$ be a vertex in $C_i\setminus S_i$ which has a \textit{left-conflict}  with respect to $\mathcal{S}$, where $v=v^p$, $p\in \{1,2,\ldots, \mid C_i \mid \}$ with respect to the ordering $C_i^{l\_list}$ of $C_i$. Then every vertex in $C_i\setminus C_i^l(v^p)$  has a \textit{left-conflict} (i.e. $S_i\subseteq C_i^{l}(v^{p+1})$). Similarly, let $v\in C_i\setminus S_i$ be a vertex which has a  \textit{right-conflict} with respect to $\mathcal{S}$, where $v=u^q$, $q\in \{1,2,\ldots, \mid C_i \mid \}$ with respect to the ordering  $C_i^{r\_list}$ of $C_i$. Then every vertex in  $C_i\setminus C_i^r(u^q)$ has a \textit{right-conflict} (i.e. $S_i\subseteq C_i^{r}(u^{q-1})$).
     
\end{observation}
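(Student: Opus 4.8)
The plan is to argue entirely in terms of the fixed interval representation and to prove only the \emph{left-conflict} statement; the \emph{right-conflict} statement then follows verbatim after reflecting the real line (swapping the roles of left and right endpoints, and of $<$ and $>$ in the clique ordering). So suppose $v = v^p \in C_i \setminus S_i$ has a left-conflict. By definition there are $S_j \in \mathcal{S}$ with $j < i$, a vertex $y \in S_j$, and a common neighbour $z \in V(G) \setminus (S_i \cup S_j)$ of $v^p$ and $y$. I would fix these witnesses once and for all, and show that the \emph{same} pair $(S_j, z)$ also witnesses a left-conflict for every $v^{p'} \in C_i$ with $l(v^{p'}) < l(v^p)$, i.e.\ for every vertex of $C_i \setminus C_i^l(v^p)$.

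The first step is to pin down the geometry. Since $C_i$ is a clique, its intervals share a common point (Helly's property in one dimension); fix such a point $R_i$, and note $R_i \in H_{S_i}$ because $H_{C_i} \subseteq H_{S_i}$. As $\mathcal{S}$ is a separated-cluster, Observation~\ref{obs:propmaxclique}(ii) tells us that no vertex of $C_i$ is adjacent to any vertex of $S_j$; in particular $I_{v^p} \cap I_y = \emptyset$. Combining this disjointness with the ordering $S_j < S_i$ --- which gives a point $\rho_j \in H_{S_j} \subseteq I_y$ lying strictly to the left of $R_i \in I_{v^p}$ --- forces $I_y$ to lie entirely to the left of $I_{v^p}$, that is $r(y) < l(v^p)$. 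Now $z$ meets both $I_y$ and $I_{v^p}$, so it must span the gap between them: $l(z) \le r(y) < l(v^p) \le r(z)$.

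The propagation step is then a short interval calculation. Take any $v^{p'} \in C_i$ with $l(v^{p'}) < l(v^p)$. Since $v^{p'} \in C_i$ we have $R_i \in I_{v^{p'}}$, hence $l(z) \le r(y) < l(v^p) \le R_i \le r(v^{p'})$ and $l(v^{p'}) < l(v^p) \le r(z)$; together these two chains show $I_z \cap I_{v^{p'}} \neq \emptyset$, so $z$ is adjacent to $v^{p'}$. Thus $z$ is a common neighbour of $v^{p'}$ and $y \in S_j$, and $v^{p'}$ has a left-conflict. Finally, such a $v^{p'}$ cannot belong to $S_i$: otherwise $z$ would be a common neighbour of a vertex of $S_i$ and a vertex of $S_j$ with $i \neq j$, contradicting the definition of a separated-cluster. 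Since this holds for every $v^{p'}$ with $l(v^{p'}) < l(v^p)$, and $v^p$ itself lies in $C_i \setminus S_i$, we get $v^1, \dots, v^p \notin S_i$, which is exactly the reformulation $S_i \subseteq C_i^l(v^{p+1})$.

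The only genuinely delicate point --- and the step I would write out most carefully --- is establishing $r(y) < l(v^p)$, i.e.\ that $I_y$ sits wholly to the left of $I_{v^p}$. This is precisely where both hypotheses are used at once: the separated-cluster condition (through Observation~\ref{obs:propmaxclique}) supplies the disjointness $I_{v^p} \cap I_y = \emptyset$, and the Helly-region ordering of $\mathcal{S}$ fixes which of the two disjoint intervals is on the left. If either ingredient were dropped, $z$ might attach to $I_{v^p}$ only on its right and fail to reach the intervals $I_{v^{p'}}$ that extend further left, so the whole argument would collapse. A secondary care point is to use a common point $R_i$ of the entire clique $C_i$ (not merely of $S_i$), which is legitimate by the one-dimensional Helly property and is exactly what lets the bound $l(v^p) \le R_i \le r(v^{p'})$ go through for the vertices $v^{p'} \in C_i \setminus S_i$.
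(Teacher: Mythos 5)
Your proof is correct, and it matches the paper's reasoning: the paper states this observation without a written proof, asserting that it is immediate from the definitions of left/right-conflict and of $C_i^l(v^p)$, $C_i^r(u^q)$. Your argument --- fixing the witnesses $(S_j, y, z)$, combining Observation~\ref{obs:propmaxclique}(ii) with the Helly-region ordering to force $r(y) < l(v^p)$, observing that $z$ must span the resulting gap and therefore meets every interval of $C_i$ with a smaller left endpoint, and ruling out $v^{p'} \in S_i$ via the separated-cluster condition to obtain $S_i \subseteq C_i^l(v^{p+1})$ --- is exactly the intended justification, written out in full.
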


We also note the following.
\begin{observation} \label{obs:noconflict}
If there exists a vertex $x\in C_i\setminus S_i$ such that $x$ has neither a left-conflict nor a right-conflict with respect to $\mathcal{S}$, then $\mathcal{S'}=\{S_1,S_2,\ldots, S_i',\ldots,S_k\}$ is also a separated-cluster in $G$, where $S_i'=S_i\cup\{x\}$.
\end{observation}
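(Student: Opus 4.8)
The plan is to verify directly that $\mathcal{S'}=\{S_1,\ldots,S_i',\ldots,S_k\}$ meets both defining requirements of a separated-cluster: (a) it is a family of pairwise disjoint, mutually non-adjacent cliques, and (b) no two vertices lying in distinct members share a common neighbor outside those two members. Since every clique other than $S_i$ is left untouched, I only need to track the effect of enlarging $S_i$ to $S_i'=S_i\cup\{x\}$. First, $S_i'$ is a clique because $x\in C_i$ and $S_i\subseteq C_i$, and $C_i$ is a clique in $G$. Disjointness is immediate too: by Observation~\ref{obs:propmaxclique}(i) we have $S_j\cap C_i=\emptyset$ for every $j\neq i$, and since $x\in C_i$ this forces $x\notin S_j$, so $S_i'$ stays disjoint from every other clique.

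For mutual non-adjacency, the only new pairs to check involve $x$. By Observation~\ref{obs:propmaxclique}(ii), no vertex of $C_i$ is adjacent to any vertex of $S_j$ with $j\neq i$; as $x\in C_i$, the vertex $x$ has no neighbor in any $S_j$, so $S_i'$ remains non-adjacent to every other clique.

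The substance of the argument is the common-neighbor condition, where the key is careful bookkeeping of the forbidden-witness sets $V(G)\setminus(S_i\cup S_j)$ versus $V(G)\setminus(S_i'\cup S_j)$. I would split into three kinds of pairs. Pairs $S_p,S_q$ with $p,q\neq i$ are unaffected, since both cliques and the set $S_p\cup S_q$ are identical in $\mathcal{S}$ and $\mathcal{S'}$, so the condition carries over verbatim from the hypothesis that $\mathcal{S}$ is a separated-cluster. For pairs consisting of an \emph{old} vertex $a\in S_i$ and some $y\in S_j$, the witness set shrinks, $V(G)\setminus(S_i'\cup S_j)\subseteq V(G)\setminus(S_i\cup S_j)$, so the absence of a common neighbor in $\mathcal{S}$ immediately yields its absence in $\mathcal{S'}$. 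The genuinely new constraints are those between the added vertex $x$ and a vertex $y\in S_j$: I must exclude any $z\in V(G)\setminus(S_i'\cup S_j)$ with $z\in N(x)\cap N(y)$. This is exactly where the hypothesis is used: ``$x$ has no left-conflict'' (handling $j<i$) together with ``$x$ has no right-conflict'' (handling $j>i$) says precisely that no $z\in V(G)\setminus(S_i\cup S_j)$ is a common neighbor of $x$ and any $y\in S_j$, and the inclusion $V(G)\setminus(S_i'\cup S_j)\subseteq V(G)\setminus(S_i\cup S_j)$ rules out such witnesses a fortiori. The main obstacle is conceptual rather than computational: one must notice that relabelling $x$ into $S_i'$ only \emph{removes} $x$ from the pool of potential common neighbors for the remaining pairs, and that the validity of $\mathcal{S}$ already guarantees $x$ was never itself a forbidden witness for any pair, so no hidden violation is introduced. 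Combining the three cases then shows $\mathcal{S'}$ is a separated-cluster.
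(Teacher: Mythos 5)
Your proof is correct and takes essentially the same route as the paper: disjointness and non-adjacency of $S_i'$ follow from Observation~\ref{obs:propmaxclique}, and any common-neighbor witness $z\in V(G)\setminus(S_i\cup S_j)$ for $x$ and some $y\in S_j$ would give $x$ a left-conflict (if $j<i$) or a right-conflict (if $j>i$), contradicting the choice of $x$. Your additional bookkeeping---the three-way case split over pairs and the inclusion $V(G)\setminus(S_i'\cup S_j)\subseteq V(G)\setminus(S_i\cup S_j)$---merely makes explicit what the paper's proof leaves implicit.
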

\vspace{-0.5cm}
\begin{proof}
    To prove this, it is enough to show that $\mathcal{S'}$ satisfies the conditions in the definition of separated-clusters. Let $j\in \{1,2,\ldots,k\}$ with $j\neq i$. Since $\mathcal{S}$ is a separated-cluster in $G$, by Observation~\ref{obs:propmaxclique}, we have that $C_i\cap S_j=\emptyset$ and no vertex in $C_i$ can be adjacent to any vertex in $S_j$. This implies that $S_i'\cap S_j=\emptyset$  and $S_i'$ is non-adjacent to $S_j$. Suppose that there exists a vertex $y\in S_j$ and $z\in V(G)\setminus (S_i\cup S_j)$ such that $x,y\in N_G(z)$. Since $x\in C_i$, we then have that $x$ has a \textit{left-conflict} with respect to $\mathcal{S}$ if $S_j<S_i$, and $x$ has a \textit{right-conflict} with respect to $\mathcal{S}$ if $S_i<S_j$. In either case, we have a contradiction to the choice of $x$. Now since $\mathcal{S}$ is a separated-cluster, we can therefore conclude that $\mathcal{S'}=\{S_1,S_2,\ldots, S_i',\ldots,S_k\}$ is also a separated-cluster in $G$. 
\end{proof}

In the following lemma, we observe a crucial property of a maximum cardinality separated-cluster.

\begin{lemma} \label{lem:maxsep}
    Let $\mathcal{S}=\{S_1,S_2,\ldots,S_k\}$ be a maximum cardinality
    separated-cluster 
    in $G$. Then for each $S_i$, for $i\in \{1,2,\ldots,k\}$, we have 
    $S_i\in \hat{\mathcal{C}}$.
\end{lemma}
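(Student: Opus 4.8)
The plan is to fix an index $i$, keep the maximal clique $C_i \supseteq S_i$ that has already been chosen, and show that $S_i$ coincides with $C_i^{lr}(v^p,u^q)$ for a suitable pair of indices, so that $S_i \in \hat{\mathcal{C}_i} \subseteq \hat{\mathcal{C}}$. Write $t=\mid C_i\mid$ and recall the two enumerations $C_i^{l\_list}=(v^1,\ldots,v^{t})$ and $C_i^{r\_list}=(u^1,\ldots,u^{t})$. The first step is to exploit that $\mathcal{S}$ has \emph{maximum} cardinality: if some $x\in C_i\setminus S_i$ had neither a left-conflict nor a right-conflict with respect to $\mathcal{S}$, then Observation~\ref{obs:noconflict} would let me replace $S_i$ by $S_i\cup\{x\}$ and obtain a separated-cluster with strictly more vertices, contradicting maximality. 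Hence every vertex of $C_i\setminus S_i$ carries at least one conflict.

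Next I would use the monotonicity recorded in Observation~\ref{obs:conflict} to pin down the shape of the two conflict sets. Let $L$ and $R$ denote, respectively, the sets of left-conflict and right-conflict vertices of $C_i\setminus S_i$. Applying Observation~\ref{obs:conflict} to the left-conflict vertex of largest $l$-index shows that $L$ is a prefix $\{v^1,\ldots,v^{p^*}\}$ of $C_i^{l\_list}$, where $p^*$ is that largest index (and $p^*=0$ if $L=\emptyset$); symmetrically $R$ is a suffix $\{u^{q^*},\ldots,u^{t}\}$ of $C_i^{r\_list}$, where $q^*$ is the smallest $r$-index of a right-conflict vertex (and $q^*=t+1$ if $R=\emptyset$). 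Equivalently $C_i\setminus L = C_i^{l}(v^{p^*+1})$ and $C_i\setminus R = C_i^{r}(u^{q^*-1})$.

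Then I would combine the two facts. By definition a conflict is a property of a vertex in $C_i\setminus S_i$, so $L\cup R\subseteq C_i\setminus S_i$, while the maximality step gives the reverse inclusion $C_i\setminus S_i\subseteq L\cup R$. Thus $C_i\setminus S_i = L\cup R$, and taking complements inside $C_i$, $S_i=(C_i\setminus L)\cap(C_i\setminus R)=C_i^{l}(v^{p^*+1})\cap C_i^{r}(u^{q^*-1})=C_i^{lr}(v^{p^*+1},u^{q^*-1})$. A short check confirms the indices are legal: the parenthetical consequences in Observation~\ref{obs:conflict} give $S_i\subseteq C_i^{l}(v^{p^*+1})$ and $S_i\subseteq C_i^{r}(u^{q^*-1})$, so since $S_i\neq\emptyset$ we must have $p^*+1\le t$ and $q^*-1\ge 1$; the extreme cases $p^*=0$ and $q^*=t+1$ simply give $C_i^{l}(v^1)=C_i$ and $C_i^{r}(u^{t})=C_i$, so in every case $p^*+1,q^*-1\in\{1,\ldots,t\}$. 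Hence $S_i\in\hat{\mathcal{C}_i}\subseteq\hat{\mathcal{C}}$, as required.

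The conceptual heart — and essentially the only place where something is argued rather than bookkept — is the first step, converting maximum cardinality into the statement that $C_i\setminus S_i$ is \emph{covered} by conflicts; everything afterwards is the observation that ``left endpoint too small'' and ``right endpoint too large'' are the only two obstructions, which Observation~\ref{obs:conflict} already packages into a prefix/suffix structure. The points demanding care are the edge cases $L=\emptyset$ or $R=\emptyset$ (where $S_i$ reaches all the way to one end of $C_i$), and noting that the argument is insensitive to \emph{which} maximal clique $C_i\supseteq S_i$ was fixed, since it uses nothing about $C_i$ beyond $S_i\subseteq C_i$ together with the maximality of $C_i$ that underlies Observations~\ref{obs:propmaxclique} and~\ref{obs:conflict}.
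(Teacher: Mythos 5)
Your proof is correct and takes essentially the same route as the paper's: both convert maximum cardinality into ``every vertex of $C_i\setminus S_i$ has a left- or right-conflict'' via Observation~\ref{obs:noconflict}, then apply Observation~\ref{obs:conflict} to the extremal conflict vertices $v^{p^*}$ and $u^{q^*}$ to identify $S_i$ with $C_i^{lr}(v^{p^*+1},u^{q^*-1})$, with the same treatment of the empty-conflict edge cases. The only cosmetic difference is that the paper proves the two inclusions $S_i\subseteq C_i^{lr}(v^{p+1},u^{q-1})$ and $C_i^{lr}(v^{p+1},u^{q-1})\subseteq S_i$ separately (invoking Observation~\ref{obs:noconflict} a second time for the latter), whereas you obtain equality in one stroke from the complement identity $S_i=C_i\setminus(L\cup R)$; your explicit check that the indices $p^*+1$ and $q^*-1$ are legal is, if anything, slightly more careful than the paper's.
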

\vspace{-0.5cm}

\begin{proof}

    For $i\in \{1,2,\ldots,k\}$, let $S_i\in \mathcal{S}$. If $S_i$ is a maximal clique in $G$, then $S_i\in \hat{\mathcal{C}}$ (since $\hat{\mathcal{C}}$ contains all the maximal cliques in $G$). Suppose that $S_i$ is not a maximal clique in $G$. Then, there exists a maximal clique, say $C_i$ in $G$, such that $S_i\subseteq C_i$. Clearly, $C_i\setminus S_i \neq \emptyset$. Since $\mathcal{S}$ is a maximum cardinality separated-cluster in $G$, by Observation~\ref{obs:noconflict}, we have that every vertex in $C_i\setminus S_i$ has either a \textit{left-conflict} or a \textit{right-conflict} or both. 

   If $C_i\setminus S_i$ has at least one vertex with a left-conflict (respectively,  right-conflict), then let $v^p\in C_i\setminus S_i$ (respectively, $u^q \in C_i\setminus S_i$) be the vertex having maximum left-end point (respectively, minimum right-end point) with respect to $C_i^{l\_list}$ (respectively, $C_i^{r\_list}$)  such that $v^p$ has a left-conflict (respectively, $u^q$ has a right-conflict). Then, by Observation~\ref{obs:conflict}, we have $S_i \subseteq C_i^l(v^{p+1})$ (respectively, $S_i \subseteq C_i^r(u^{q-1})$). 
   And, if  $C_i\setminus S_i$ does not have a vertex having a left-conflict (respectively, right-conflict), then we consider $v^{p+1}$ as $v^1$ (respectively, $u^{q-1}$ as $u^{ \mid C_i \mid }$).  
   Therefore, in any case, we have, $S_i\subseteq  C_i^l(v^{p+1})\cap C_i^r(u^{q-1})=C_i^{lr}(v^{p+1},u^{q-1})$, for some $p,q\in \{0,1,\ldots, \mid C_i \mid +1\}$.

    Now to prove the reverse inequality, first note that by the definitions of $v^p$, $u^q$, and $C_i^{lr}$, no vertex in $C_i^{lr}(v^{p+1},u^{q-1})\subseteq C_i$ can have a \textit{left-conflict} or a \textit{right-conflict} with respect to  $\mathcal{S}$. Therefore, by Observation~\ref{obs:noconflict}, and the fact that $\mathcal{S}$ is a maximum cardinality separated-cluster, we can conclude that $C_i^{lr}(v^{p+1},u^{q-1}) \subseteq S_i$. Thus,  $S_i=C_i^{lr}(v^{p+1},u^{q-1})$ for some $p,q\in \{0,1,\ldots, \mid C_i \mid +1\}$.
Since $S_i = C_i^{lr}(v^{p+1},u^{q-1})\in \hat{\mathcal{C}}$, the statement of the lemma follows.
\end{proof}

We use the following construction to reduce the \SCP\ problem on interval graphs to the maximum weighted independent set problem on cocomparability graphs.

\begin{construction}[Weighted conflict graph $G_c$] \label{def:conflict}
    Given an interval graph $G$ having maximal cliques, say $\{C_1,C_2,\ldots,C_l\}$, the weighted conflict graph $G_c$ is defined as follows: $V(G_c)=\hat{\mathcal{C}}=\{C_i^{lr}(v^p, u^q): i\in \{1,2,\ldots,l\}, p,q\in \{1,2,\ldots, \mid C_i \mid \}\}$ (each vertex in $G_c$ represents a clique in $G$, which is also a member of $\hat{\mathcal{C}}$).  For the sake of convenience, let $V(G_c)=\hat{\mathcal{C}}=\{K_1,K_2,\ldots,K_t\}$, where $t= \mid \hat{\mathcal{C}} \mid \leq n^3$. Now, for each vertex, say $K_j\in V(G_c)$, we define the weight function $w(K_j)= \mid K_j \mid $ (i.e., the cardinality of the corresponding clique $K_j$ in $G$). For any pair of vertices, say $K_i,K_j\in V(G_c)$ with $i\neq j$, we make $K_i$ and $K_j$ adjacent in $G_c$ if and only if at least one of the following conditions is true in $G$.
    \begin{enumerate}[label=(\alph*)]
        \item \label{cond:1} $K_i\cap K_j\neq \emptyset$, or
        \item \label{cond:2} there exist vertices, $x\in K_i$ and $y\in K_j$ such that $xy\in E(G)$, or
        \item \label{cond:3} there exist vertices, $x\in K_i$, $y\in K_j$, and $z\in V(G)\setminus (K_i\cup K_j)$ such that $x,y\in N_G(z)$.
    \end{enumerate}
\end{construction}

We claim that  $G_c$ is a cocomparability graph. To prove this,  we use the following vertex ordering characterization of cocomparability graphs  due to Damaschke~\cite{DamForbidOS90}.

\begin{theorem}[~\cite{DamForbidOS90}]\label{thm:cocomp}
    An undirected graph $H$ is a cocomparability graph if and only if there is an ordering $<$ of $V(H)$ such that for any three vertices $i<j<k$, if $ik\in E(H)$, then either $ij\in E(H)$ or $jk\in E(H)$.
\end{theorem}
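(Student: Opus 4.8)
The plan is to reduce the statement to the defining property of cocomparability graphs, namely that $H$ is a cocomparability graph if and only if its complement $\overline{H}$ admits a transitive orientation. The whole argument hinges on a single observation: the forbidden pattern in the ordering --- three vertices $i<j<k$ with $ik\in E(H)$ but $ij\notin E(H)$ and $jk\notin E(H)$ --- becomes, after complementation, exactly a failure of transitivity. First I would fix an ordering $<$ of $V(H)$ and orient each edge $ab\in E(\overline{H})$ from its smaller endpoint to its larger endpoint (so $a\to b$ when $a<b$); call this orientation $F_{<}$. The key rewriting is that the hypothesis ``for all $i<j<k$, if $ik\in E(H)$ then $ij\in E(H)$ or $jk\in E(H)$'' is, by taking contrapositives and passing to complements, the very same sentence as ``for all $i<j<k$, if $ij\in E(\overline{H})$ and $jk\in E(\overline{H})$ then $ik\in E(\overline{H})$'' --- precisely transitivity of $F_{<}$.

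For the $(\Leftarrow)$ direction I would assume such an ordering $<$ exists and verify directly that $F_{<}$ is a transitive orientation of $\overline{H}$. Given arcs $a\to b$ and $b\to c$ in $F_{<}$, by construction $a<b<c$ and $ab,bc\in E(\overline{H})$, so $ab,bc\notin E(H)$; if $ac\in E(H)$ held, the hypothesis applied to $a<b<c$ would force $ab\in E(H)$ or $bc\in E(H)$, a contradiction. Hence $ac\in E(\overline{H})$, and since $a<c$ it is oriented $a\to c$, giving transitivity. Thus $\overline{H}$ is a comparability graph and $H$ is a cocomparability graph.

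For the $(\Rightarrow)$ direction I would start from a transitive orientation $F$ of $\overline{H}$, which is a strict partial order on $V(H)$, and take any linear extension $<$ of $F$ (a topological sort). Because $<$ extends $F$, every arc of $F$ runs from a $<$-smaller to a $<$-larger vertex. I then check the ordering property: given $i<j<k$ with $ik\in E(H)$, suppose toward a contradiction that $ij\in E(\overline{H})$ and $jk\in E(\overline{H})$. Consistency of $<$ with $F$ orients these edges as $i\to j$ and $j\to k$, whence transitivity of $F$ yields $i\to k$, so $ik\in E(\overline{H})$, contradicting $ik\in E(H)$. Therefore $ij\in E(H)$ or $jk\in E(H)$, as required.

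The main obstacle is conceptual rather than computational: one must spot the exact correspondence between the umbrella-free ordering condition on $H$ and transitivity of the complement-orientation, and one must invoke the fact that a transitive orientation is a partial order, every linear extension of which orients all its arcs consistently. Once that dictionary is in place, both implications are short. The one subtlety worth stating carefully is, in the $(\Rightarrow)$ direction, that a linear extension of $F$ is consistent with all arcs of $F$ (so that the hypothesized non-edges of $H$ become the arcs $i\to j$ and $j\to k$ of $\overline{H}$); this is exactly what ``linear extension of the partial order $F$'' means, and it is the point where the transitivity of $F$ can be applied.
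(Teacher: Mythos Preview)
Your proof is correct and is essentially the standard argument for Damaschke's characterization. Note, however, that the paper does not prove this theorem at all: it is quoted from~\cite{DamForbidOS90} as a black box and used only to verify that the conflict graph $G_c$ is a cocomparability graph. So there is no ``paper's own proof'' to compare against; you have supplied a valid proof where the paper simply cites the literature.
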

The vertex ordering specified in Theorem~\ref{thm:cocomp} is called \textit{umbrella-free} ordering~\cite{DamForbidOS90} which essentially says that \textit{cocomparability graphs are exactly those graphs whose vertex set admits an umbrella-free} ordering.

Let $G$ be an interval graph with an interval representation, $\{I_v\}_{v\in V(G)}$, where $n= \mid V(G) \mid $ and $G_c$ be the corresponding weighted conflict graph obtained by Construction~\ref{def:conflict}.  Recall that $V(G_c)=\{K_1,K_2,\ldots,K_t\}$, where $t=~\mid\hat{\mathcal{C}} \mid~ \leq ~n^3$. Let $j\in \{1,2,\ldots,t\}$. Since each of the vertices $K_j$ in $V(G_c)$ represent a clique in $G$, they have a \textit{Helly region} (the common intersection region of intervals belonging to the same clique), say $H_{K_j}$ in the interval representation, $\{I_v\}_{v\in V(G)}$.  Let $r(H_{K_j})$ denote the right end-point of the \textit{Helly region} to the clique corresponding to $K_j$.  We then define an ordering $<$ of the vertices in $G_c$ as follows: \textit{for any two vertices $K_i,K_j\in V(G_c)$, where $i,j\in \{1,2,\ldots,t\}$, we say that $K_i<K_j$ if and only if $r(H_{K_i})\leq r(H_{K_j})$} (break the ties arbitrarily, i.e. for any pair $i,j\in \{1,2,\ldots,t\}$,  if $r(H_{K_i})= r(H_{K_j})$ then we can have either $K_i<K_j$ or $K_j<K_i$). \\

We then have the following lemma. 

\begin{lemma}
    For any interval graph $G$, the weighted conflict graph $G_c$ is a cocomparability graph.
\end{lemma}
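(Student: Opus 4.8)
The plan is to verify the \emph{umbrella-free} ordering condition of Theorem~\ref{thm:cocomp} for the vertex ordering $<$ of $G_c$ already defined via the right end-points $r(H_{K_j})$ of the Helly regions. Concretely, I would establish the contrapositive of the umbrella condition: whenever $K_i<K_j<K_k$ and neither $K_iK_j$ nor $K_jK_k$ is an edge of $G_c$, then $K_iK_k$ is also a non-edge. Since adjacency of $G_c$ encodes exactly the three ``conflict'' conditions \ref{cond:1}--\ref{cond:3} of Construction~\ref{def:conflict}, the heart of the argument is to re-read these conditions through the interval representation.

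First I would prove a geometric characterization of non-adjacency. For a clique $K\in\hat{\mathcal{C}}$ write $r_{\max}(K)=\max_{v\in K}r(v)$ and $l_{\min}(K)=\min_{v\in K}l(v)$; since every interval of $K$ contains the Helly region $H_K$, the set $\bigcup_{v\in K}I_v$ is the single interval $[\,l_{\min}(K),r_{\max}(K)\,]$. Conditions \ref{cond:1} and \ref{cond:2} together say precisely that $\bigcup_{v\in K_a}I_v$ meets $\bigcup_{v\in K_b}I_v$, while condition \ref{cond:3} (a common neighbour $z\notin K_a\cup K_b$) says precisely that some interval $I_z$ meets both unions. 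Using that the ordering places $K_a$ to the left (from $K_a<K_b$ together with disjointness of the Helly regions one gets that $H_{K_a}$ lies entirely left of $H_{K_b}$, forcing $r_{\max}(K_a)<l_{\min}(K_b)$), I would conclude: \emph{$K_a$ and $K_b$ are non-adjacent in $G_c$ if and only if $r_{\max}(K_a)<l_{\min}(K_b)$ and no vertex $z$ satisfies $I_z\supseteq[\,r_{\max}(K_a),\,l_{\min}(K_b)\,]$.} The containment $I_z\supseteq[\,r_{\max}(K_a),\,l_{\min}(K_b)\,]$ automatically forces $z\notin K_a\cup K_b$ (otherwise $r(z)\le r_{\max}(K_a)<l_{\min}(K_b)\le r(z)$ or $l(z)\ge l_{\min}(K_b)>r_{\max}(K_a)\ge l(z)$), which is exactly what makes condition \ref{cond:3} line up cleanly.

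With this characterization the umbrella check reduces to a short chain. From $K_iK_j\notin E(G_c)$ I get $r_{\max}(K_i)<l_{\min}(K_j)$ and no bridging interval over $[\,r_{\max}(K_i),\,l_{\min}(K_j)\,]$; from $K_jK_k\notin E(G_c)$ I get $r_{\max}(K_j)<l_{\min}(K_k)$. Since $l_{\min}(K_j)\le r_{\max}(K_j)$ for any nonempty clique, these give $r_{\max}(K_i)<l_{\min}(K_j)\le r_{\max}(K_j)<l_{\min}(K_k)$, hence $r_{\max}(K_i)<l_{\min}(K_k)$, which is the first half of the non-adjacency of $K_i,K_k$. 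For the second half, suppose some $z$ had $I_z\supseteq[\,r_{\max}(K_i),\,l_{\min}(K_k)\,]$. Then $l(z)\le r_{\max}(K_i)$ and $r(z)\ge l_{\min}(K_k)>r_{\max}(K_j)\ge l_{\min}(K_j)$, so $I_z\supseteq[\,r_{\max}(K_i),\,l_{\min}(K_j)\,]$, a bridging interval contradicting $K_iK_j\notin E(G_c)$. Thus no such $z$ exists and $K_iK_k\notin E(G_c)$, completing the umbrella-free check and hence, by Theorem~\ref{thm:cocomp}, the proof.

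The step I expect to be the main obstacle is the geometric characterization of non-adjacency, in particular pinning down that the three separate conflict conditions collapse exactly to ``unions disjoint'' plus ``no bridging interval,'' and that the Helly-region ordering reliably orients $K_a$ to the left of $K_b$. Once non-adjacency is phrased purely in terms of the two quantities $r_{\max}(\cdot)$ and $l_{\min}(\cdot)$ and a single bridging interval, the umbrella inequality is essentially forced; the remaining care is to dispatch degenerate cases (equal Helly-region end-points, which under the distinct-endpoints assumption can only arise from a shared vertex and hence already make the pair adjacent via \ref{cond:1}).
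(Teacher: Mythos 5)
Your proof is correct, and it follows the same outer skeleton as the paper's (the ordering of $V(G_c)$ by right end-points of Helly regions, verified to be umbrella-free and then fed into Theorem~\ref{thm:cocomp}; your contrapositive formulation versus the paper's proof by contradiction is immaterial). Where you genuinely diverge is in the internal decomposition. The paper argues by a three-way case analysis on \emph{which} of Conditions~\ref{cond:1}--\ref{cond:3} of Construction~\ref{def:conflict} produces the edge $K_iK_l$, handling each case with ad hoc interval inequalities; its Case-3 in particular ends rather tersely, asserting that the covering of $[r(H_{K_i}),r(H_{K_l})]$ by $I_x\cup I_z\cup I_y$ forces one of the edges $K_iK_j$, $K_jK_l$ without spelling out which condition fires. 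You instead prove a single clean characterization: for $K_a<K_b$, non-adjacency in $G_c$ holds if and only if $r_{\max}(K_a)<l_{\min}(K_b)$ and no interval $I_z$ contains the gap $[\,r_{\max}(K_a),l_{\min}(K_b)\,]$. The two directions check out --- the union of a clique's intervals is a single interval since all contain the Helly region, Conditions~\ref{cond:1} and~\ref{cond:2} together are exactly intersection of these union-intervals, and your observation that a gap-covering interval automatically has $z\notin K_a\cup K_b$ is precisely what aligns the bridging condition with Condition~\ref{cond:3}. With this lemma the umbrella check collapses to the inequality chain $r_{\max}(K_i)<l_{\min}(K_j)\leq r_{\max}(K_j)<l_{\min}(K_k)$ plus the monotonicity of gap-covering, and your tie-handling remark is sound (equal Helly right end-points force a shared vertex under the distinct-endpoints assumption, hence adjacency, so ties never occur between non-adjacent pairs). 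What each approach buys: the paper's case analysis is local and requires no auxiliary statement, while your characterization lemma is a reusable description of $E(G_c)$ in purely interval-geometric terms, makes the umbrella step mechanical, and in effect makes rigorous the step the paper leaves sketchy in its Case-3.
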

\begin{proof}
    To prove the lemma, it is enough to show that the ordering $<$ of $V(G_c)$ (defined in the paragraph above) is an umbrella-free ordering. Suppose not. Then there exist cliques, say, $K_i,K_j,K_l$ in $V(G_c)$ such that $K_i<K_j<K_l$, $K_iK_l\in E(G_c)$, but $K_iK_j\notin E(G_c)$ and $K_jK_l\notin E(G_c)$. Since $K_i<K_j<K_l$, we have by the definition of $<$ that $r(H_{K_i})\leq r(H_{K_j})\leq r(H_{K_l})$ (where $H_{K_i}$, $H_{K_j}$, and $H_{K_l}$ denote the \textit{Helly regions} of the cliques, $K_i$, $K_j$, and $K_l$ respectively). Note that $K_iK_l\in E(G_c)$. Therefore, by the definition of $G_c$, at least one of the conditions in Construction~\ref{def:conflict} has to be true. 

    \noindent\textbf{Case-1:} \textit{The edge $K_iK_l\in E(G_c)$ is due to Condition~\ref{cond:1}}. 

    \noindent This implies that $K_i\cap K_l\neq \emptyset$. Let $x\in K_i\cap K_l$. Since $r(H_i)\leq r(H_j)\leq r(H_l)$, we then have that the interval representing the vertex $x$ intersects with the \textit{Helly region}, $H_j$ of the clique $K_j$. Therefore, by Condition~\ref{cond:1} or~\ref{cond:2} in Construction~\ref{def:conflict} (depending on the fact that the vertex $x$ belongs to $K_j$ or not), we have that both the edges, $K_iK_j, K_jK_l\in E(G_c)$. This is a contradiction.

    \noindent\textbf{Case-2:} \textit{The edge $K_iK_l\in E(G_c)$ is due to Condition~\ref{cond:2}}.

    \noindent This implies that there exist vertices $x\in K_i$ and $y\in K_l$ such that $xy\in E(G)$. Note that for each vertex $u\in K_i$, we have $r(u)< l(H_j)\leq r(H_j)$ (Since by assumption, $K_iK_j\notin E(G_c)$ which will be violated due to  Conditions~\ref{cond:1} and~\ref{cond:2} of Construction~\ref{def:conflict}). 
    Similarly, we have $r(H_j)<l(v)$ for each vertex $v\in K_l$ (as $r(H_j)\leq r(H_l)$, $K_jK_l\notin E(G_c)$, and by Conditions~\ref{cond:1} and~\ref{cond:2} of Construction~\ref{def:conflict}). Therefore, as $x\in K_i$ and $y\in K_l$, in particular, we have $r(x)<r(H_j)<l(y)$. This implies that $I_x\cap I_y=\emptyset$. Therefore, $xy\notin E(G)$, a contradiction.

    \noindent\textbf{Case-3:} \textit{The edge $K_iK_l\in E(G_c)$ is due to Condition~\ref{cond:3}}.
    
   \noindent This implies that there exist vertices $x\in K_i$, $y\in K_l$, and $z\in V(G)\setminus (K_i\cup K_l)$ such that $x,y\in N_G(z)$. Note that $xy\notin E(G)$, as we already have a contradiction for this in Case-2. Now since $r(H_i)\leq r(H_j)\leq r(H_l)$ and $xz,yz\in E(G)$, 
   this further implies that the $l(z)\leq r(x) < l(y)\leq r(z)$. Since the interval of $x$ starts before $r(H_i)$ and interval of $y$ ends after $r(H_l)$, the interval from $r(H_i)$ to $r(H_l)$ is covered by the intervals of $x,y$ and $z$.
 
 In either cases, by Condition~\ref{cond:1} or~\ref{cond:2} or~\ref{cond:3} in Construction~\ref{def:conflict} we have at least one of the edges, $K_iK_j, K_jK_l\in E(G_c)$. This is again a contradiction.

   Since we obtain a contradiction in all the possible cases, we can therefore conclude that the ordering $<$ of $V(G_c)$ (defined in the paragraph above) is an umbrella-free ordering. This implies that $G_c$ is a cocomparability graph by Theorem~\ref{thm:cocomp}. Hence the lemma.
\end{proof}

\noindent\textbf{Reduction:} Here, we show a polynomial-time reduction of \SCP\ problem on interval graphs to the maximum weighted independent set problem on cocomparability graphs. We first note the following theorem due to K\"ohler and Mouatadid~\cite{KohLalMaxWtIndSetAlg16}.
\begin{theorem}[\cite{KohLalMaxWtIndSetAlg16}] \label{thm:indptcocomp}
    Let $H$ be a cocomparability graph with weight function\\ $w:~V(G)~\rightarrow~R^+$. Then an independent set of maximum possible weight in $H$ can be computed in $O( \mid V(H)+ \mid E(H) \mid )$ time.
\end{theorem}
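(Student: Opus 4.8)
The plan is to prove the theorem via the classical correspondence between independent sets of a cocomparability graph and chains of an associated partial order, and then to turn the resulting maximum-weight-chain computation into a heaviest-path dynamic program that can be carried out in time linear in the size of $H$. Throughout I would work with the umbrella-free ordering $<$ of $V(H)$ guaranteed by Theorem~\ref{thm:cocomp} (in the application to $G_c$ this ordering is already available explicitly; in general it can be produced in linear time by the known LBFS/LDFS-based cocomparability-ordering routines). Write $V(H)=\{v_1,v_2,\ldots,v_n\}$ with $v_1<v_2<\cdots<v_n$. Conceptually, independent sets of $H$ are exactly the cliques of its complement comparability graph, hence the \emph{chains} of the partial order defined by a transitive orientation; the umbrella-free order is the combinatorial device that makes these chains easy to enumerate.

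First I would prove the key combinatorial lemma that makes the ordering useful: a subset $S=\{v_{i_1},\ldots,v_{i_m}\}$ with $i_1<\cdots<i_m$ is an independent set of $H$ if and only if every pair of \emph{consecutive} elements $v_{i_t},v_{i_{t+1}}$ is non-adjacent in $H$. The ``only if'' direction is trivial; for ``if'', suppose some non-consecutive pair satisfies $v_{i_s}v_{i_t}\in E(H)$ with $i_s<i_t$. Applying the umbrella-free condition to $i_s<i_{s+1}<i_t$ together with $v_{i_s}v_{i_{s+1}}\notin E(H)$ forces $v_{i_{s+1}}v_{i_t}\in E(H)$; iterating this down the chain eventually yields $v_{i_{t-1}}v_{i_t}\in E(H)$, contradicting consecutivity. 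Hence independence is witnessed entirely by consecutive non-edges.

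With this lemma in hand, the problem reduces to a heaviest-weight path computation. I would introduce the implicit directed acyclic graph $D$ on $\{v_1,\ldots,v_n\}$ with an arc $v_i\to v_j$ whenever $i<j$ and $v_iv_j\notin E(H)$, together with a source and a sink; by the lemma, maximum-weight independent sets of $H$ correspond exactly to heaviest vertex-weighted source-to-sink paths in $D$. I would then set up the one-dimensional dynamic program $W_j=w(v_j)+\max\bigl(0,\ \max\{W_i: i<j,\ v_iv_j\notin E(H)\}\bigr)$, scanning vertices in the order $<$; the answer is $\max_j W_j$ and the optimal set is recovered by back-pointers.

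The main obstacle is efficiency: $D$ may have $\Theta(n^2)$ arcs (its arcs are the non-edges of $H$), so evaluating each inner maximum naively costs $\Theta(n^2)$, whereas the target is $O(|V(H)|+|E(H)|)$, linear in the \emph{edges} of $H$. The crux is therefore to compute $\max\{W_i:i<j,\ v_iv_j\notin E(H)\}$ while touching only the genuine adjacencies of $v_j$. The idea is to maintain a running structure supporting ``global prefix maximum of $W$, corrected for the forbidden (adjacent) predecessors of $v_j$'': one keeps the overall maximum of the $W$-values seen so far and adjusts it using only the left-neighbours of $v_j$ in $H$, exploiting the fact that the umbrella-free property makes the left-neighbourhoods nest monotonically, so the bookkeeping of which earlier values are currently excluded changes by only $O(\deg_H(v_j))$ updates at step $j$. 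Amortising these corrections over the whole scan charges $O(1)$ work to each endpoint of each edge of $H$, giving the claimed $O(|V(H)|+|E(H)|)$ running time. Verifying this amortisation rigorously---that the nesting induced by the umbrella-free order really does let each edge be inspected a bounded number of times, and that the maintained maximum is always the correct one---is the delicate part, and is precisely where the construction of K\"ohler and Mouatadid must be followed carefully.
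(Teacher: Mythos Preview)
The paper does not prove this theorem at all: it is quoted verbatim from K\"ohler and Mouatadid~\cite{KohLalMaxWtIndSetAlg16} as a black-box tool, introduced with ``We first note the following theorem due to K\"ohler and Mouatadid'', and then used without further justification in the reduction for interval graphs. So there is no ``paper's own proof'' to compare against; any proof you supply is necessarily going beyond what the paper does.

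Regarding your sketch itself: the combinatorial lemma (that under an umbrella-free ordering independence is determined by consecutive pairs) is correct and is indeed the standard way to turn maximum-weight independent set in a cocomparability graph into a longest-weighted-path problem in a DAG. However, your efficiency argument contains a questionable step. You assert that ``the umbrella-free property makes the left-neighbourhoods nest monotonically''; this is not true for cocomparability graphs in general (it is essentially the defining property of interval or unit-interval orderings, not umbrella-free orderings). Consequently the amortisation you describe---maintaining a global prefix maximum and correcting it using only $O(\deg_H(v_j))$ work at step $j$---cannot be justified by a nesting argument. The actual K\"ohler--Mouatadid algorithm does not work this way: it relies on a careful use of LDFS (lexicographic depth-first search) to obtain an ordering with stronger structural guarantees than a bare umbrella-free ordering, and the linear-time bound comes from properties of that LDFS ordering rather than from any neighbourhood-nesting. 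Your final sentence acknowledges that the hard part must follow their construction, which is fair, but the specific mechanism you outline for achieving linearity is not the right one.
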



We then have the following main theorem.

\begin{theorem}\label{thm:reduction}
    Let $G$ be an interval graph and $G_c$ be its weighted conflict graph. Then $\mathcal{S}=\{S_1,S_2,\ldots,S_k\}$ is a maximum cardinality separated-cluster in $G$ if and only if $\mathcal{S}=\{S_1,S_2,\ldots,S_k\}$ is a maximum weighted independent set in $G_c$.
\end{theorem}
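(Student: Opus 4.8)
The plan is to set up a single weight-preserving correspondence between separated-clusters all of whose cliques lie in $\hat{\mathcal{C}}$ and independent sets of $G_c$, and then to use Lemma~\ref{lem:maxsep} to lift this to a correspondence between the optimal objects on each side. The first step is to prove the basic equivalence: for any subfamily $\mathcal{F}=\{K_1,\ldots,K_m\}\subseteq \hat{\mathcal{C}}=V(G_c)$, the family $\mathcal{F}$, viewed as a family of cliques of $G$, is a separated-cluster in $G$ if and only if $\mathcal{F}$ is an independent set in $G_c$. This is a direct matching of definitions against edges: the three requirements for $K_i$ and $K_j$ to coexist in a separated-cluster — namely $K_i\cap K_j=\emptyset$, no edge of $G$ joining $K_i$ to $K_j$, and no vertex of $G$ adjacent to a vertex of each — are exactly the negations of Conditions~\ref{cond:1},~\ref{cond:2}, and~\ref{cond:3} of Construction~\ref{def:conflict}. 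Hence $K_iK_j\notin E(G_c)$ for all $i\neq j$ precisely when $\mathcal{F}$ satisfies the separated-cluster conditions. This correspondence is weight-preserving, since the cardinality of the separated-cluster (its number of vertices) is $\sum_j |K_j|$, which by the definition $w(K_j)=|K_j|$ equals the weight of $\mathcal{F}$ as an independent set of $G_c$.

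With the equivalence established I would prove the two directions. For the forward direction, suppose $\mathcal{S}=\{S_1,\ldots,S_k\}$ is a maximum cardinality separated-cluster. By Lemma~\ref{lem:maxsep}, each $S_i\in\hat{\mathcal{C}}$, so $\mathcal{S}\subseteq V(G_c)$ and, by the equivalence, $\mathcal{S}$ is an independent set of $G_c$ whose weight equals its cardinality. If some independent set $\mathcal{T}$ of $G_c$ had strictly larger weight, then by the equivalence $\mathcal{T}$ would be a separated-cluster of strictly larger cardinality, contradicting the maximality of $\mathcal{S}$; hence $\mathcal{S}$ is a maximum weighted independent set. For the backward direction, suppose $\mathcal{S}$ is a maximum weighted independent set of $G_c$; each $S_i$ is then already a clique in $\hat{\mathcal{C}}$, so by the equivalence $\mathcal{S}$ is a separated-cluster whose cardinality equals its weight. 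Any separated-cluster of strictly larger cardinality could be taken to be a maximum one, whose cliques all lie in $\hat{\mathcal{C}}$ by Lemma~\ref{lem:maxsep}; by the equivalence it would be an independent set of $G_c$ of strictly larger weight, contradicting the maximality of $\mathcal{S}$. Thus $\mathcal{S}$ is a maximum cardinality separated-cluster.

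The only genuinely delicate point — and the reason the construction is arranged as it is — is that an arbitrary separated-cluster may use cliques that are not pruned members of $\hat{\mathcal{C}}$ and hence are not vertices of $G_c$; for such families the basic equivalence above simply does not apply. The role of Lemma~\ref{lem:maxsep} is precisely to close this gap by guaranteeing that every \emph{maximum} cardinality separated-cluster is realized entirely within $\hat{\mathcal{C}}$, so that the comparisons between the optimum over all separated-clusters in $G$ and the optimum over independent sets of $G_c$ are legitimate in both directions. I expect this invocation of Lemma~\ref{lem:maxsep} to be the crux of the argument; the verification of the basic equivalence itself is a routine unwinding of the definition of separated-cluster against the three edge conditions of Construction~\ref{def:conflict}.
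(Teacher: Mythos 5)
Your proposal is correct and follows essentially the same route as the paper's proof: match the pairwise separated-cluster conditions against the negations of Conditions~\ref{cond:1}--\ref{cond:3} of Construction~\ref{def:conflict}, invoke Lemma~\ref{lem:maxsep} to place every maximum cardinality separated-cluster inside $\hat{\mathcal{C}}=V(G_c)$, and transfer optimality via the weight function $w(K_j)=\lvert K_j\rvert$. Your write-up is in fact slightly more careful than the paper's at the one delicate point you flag --- in the backward direction one must pass to a \emph{maximum} separated-cluster before Lemma~\ref{lem:maxsep} applies, since an arbitrary larger separated-cluster need not lie in $\hat{\mathcal{C}}$ --- a step the paper's terse final sentence leaves implicit.
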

\vspace{-0.5cm}
\begin{proof}
First, note that if $\mathcal{S}=\{S_1,S_2,\ldots,S_k\}$ is a maximum cardinality separated-cluster in $G$, then by Lemma~\ref{lem:maxsep}, we have $\mathcal{S}=\{S_1,S_2,\ldots,S_k\}\subseteq \hat{\mathcal{C}}=V(G_c)$. Moreover by the definition of separated-clusters, the conditions for $S_i,S_j\in \mathcal{S}$ is consistent with the conditions~\ref{cond:1}, \ref{cond:2}, and~\ref{cond:3} of the weighted conflict graph $G_c$. Therefore, $\mathcal{S}\subseteq V(G_c)$ is an independent set in $G_c$.
On the other hand, if  $\mathcal{S}=\{S_1,S_2,\ldots,S_k\}$ is a independent set in $G_c$ then clearly,  for $S_i,S_j\in \mathcal{S}$ the conditions~\ref{cond:1}, \ref{cond:2} and~\ref{cond:3}  are satisfied. Therefore, $\mathcal{S}$ is a separated-cluster in $G$. Since weights of the vertices in $G_c$ (i.e. cliques in $G$) are exactly the cardinality of their corresponding sets, we can therefore conclude that $\mathcal{S}\subseteq V(G_c)=\hat{\mathcal{C}}$ is a maximum weighted independent set in $G_c$ if and only if $\mathcal{S}$ is a maximum cardinality separated-cluster in $G$. This proves the theorem.
\end{proof}

Now Theorem~\ref{thm:sepiterval} follows from Theorems~\ref{thm:indptcocomp} and~\ref{thm:reduction}.
\section{Conclusion}
In this paper, we contribute to the literature connecting the parameters, $\XCD(G)$, \raisebox{2pt}{$\gamma$}$_t(G)$, and $\SC(G)$ of a graph $G$.  Here, we introduce the notion of $cd$-perfectness, and also provide a sufficient condition for a graph to be $cd$-perfect. Even though we also have some necessary conditions for the same, the problem of \textit{characterizing $cd$-perfect graphs} remains open for general graphs. Further, we also prove that \SCP\ is polynomial-time solvable for interval graphs. But, as interval graphs are not necessarily $cd$-perfect, \textit{the algorithmic complexity of \CDC\ on interval graphs} is still open.
\bibliographystyle{elsarticle-num}

\RaggedRight
\bibliography{main}

\begin{thebibliography}{10}
\expandafter\ifx\csname url\endcsname\relax
  \def\url#1{\texttt{#1}}\fi
\expandafter\ifx\csname urlprefix\endcsname\relax\def\urlprefix{URL }\fi
\expandafter\ifx\csname href\endcsname\relax
  \def\href#1#2{#2} \def\path#1{#1}\fi

\bibitem{ChudnovskyStrong06}
M.~Chudnovsky, N.~Robertson, P.~Seymour, R.~Thomas, \href{https://doi.org/10.4007/annals.2006.164.51}{The strong perfect graph theorem}, Annals of mathematics (2006) 51--229.
\newline\urlprefix\url{https://doi.org/10.4007/annals.2006.164.51}

\bibitem{MerouaneHCK15}
H.~B. Merouane, M.~Haddad, M.~Chellali, H.~Kheddouci, \href{https://doi.org/10.1007/s00373-014-1407-3}{Dominated colorings of graphs}, Graphs Comb. 31~(3) (2015) 713--727.
\newline\urlprefix\url{https://doi.org/10.1007/s00373-014-1407-3}

\bibitem{ShaluVSCDComplex20}
M.~A. Shalu, S.~Vijayakumar, S.~T. P., \href{https://doi.org/10.1016/j.dam.2018.03.004}{On the complexity of cd-coloring of graphs}, Discret. Appl. Math. 280 (2020) 171--185.
\newline\urlprefix\url{https://doi.org/10.1016/j.dam.2018.03.004}

\bibitem{ShaluKiruCDTreeCobipar21}
M.~A. Shalu, V.~K. Kirubakaran, \href{https://doi.org/10.1007/978-3-030-67899-9\_16}{On cd-coloring of trees and co-bipartite graphs}, in: A.~Mudgal, C.~R. Subramanian (Eds.), Algorithms and Discrete Applied Mathematics - 7th International Conference, {CALDAM} 2021, Rupnagar, India, February 11-13, 2021, Proceedings, Vol. 12601 of Lecture Notes in Computer Science, Springer, 2021, pp. 209--221.
\newline\urlprefix\url{https://doi.org/10.1007/978-3-030-67899-9\_16}

\bibitem{BanKasRamDomClusGr23}
A.~Banik, P.~N. Kasthurirangan, V.~Raman, \href{https://doi.org/10.1007/978-3-031-38906-1\_8}{Dominator coloring and {CD} coloring in almost \\cluster graphs}, in: P.~Morin, S.~Suri (Eds.), Algorithms and Data Structures - 18th International Symposium, {WADS} 2023, Montreal, QC, Canada, July 31 - August 2, 2023, Proceedings, Vol. 14079 of Lecture Notes in Computer Science, Springer, 2023, pp. 106--119.
\newline\urlprefix\url{https://doi.org/10.1007/978-3-031-38906-1\_8}

\bibitem{KritRaiST21}
R.~Krithika, A.~Rai, S.~Saurabh, P.~Tale, \href{https://doi.org/10.1016/j.dam.2020.12.015}{Parameterized and exact algorithms for class domination coloring}, Discret. Appl. Math. 291 (2021) 286--299.
\newline\urlprefix\url{https://doi.org/10.1016/j.dam.2020.12.015}

\bibitem{ChenTheDC14}
Y.~H. Chen, \href{https://doi.org/10.1007/978-3-319-09153-2_10}{The dominated coloring problem and its application}, in: B.~Murgante, S.~Misra, A.~M. A.~C. Rocha, C.~Torre, J.~G. Rocha, M.~I. Falc{\~a}o, D.~Taniar, B.~O. Apduhan, O.~Gervasi (Eds.), Computational Science and Its Applications -- ICCSA 2014, Springer International Publishing, Cham, 2014, pp. 132--145.
\newline\urlprefix\url{https://doi.org/10.1007/978-3-319-09153-2_10}

\bibitem{KlavTavaDomHerProd21}
S.~Klav{\v{z}}ar, M.~Tavakoli, \href{https://doi.org/10.1016/j.amc.2020.125647}{Dominated and dominator colorings over (edge) corona and hierarchical products}, Applied Mathematics and Computation 390 (2021) 125647.
\newline\urlprefix\url{https://doi.org/10.1016/j.amc.2020.125647}

\bibitem{HoppManTotDom19}
C.~Hoppen, G.~Mansan, \href{https://doi.org/10.1016/j.entcs.2019.08.046}{Total domination in regular graphs}, in: G.~Coutinho, Y.~Kohayakawa, V.~F. dos Santos, S.~Urrutia (Eds.), Proceedings of the tenth Latin and American Algorithms, Graphs and Optimization Symposium, {LAGOS} 2019, Belo Horizonte, Brazil, June 2-7, 2019, Vol. 346 of Electronic Notes in Theoretical Computer Science, 2019, pp. 523--533.
\newline\urlprefix\url{https://doi.org/10.1016/j.entcs.2019.08.046}

\bibitem{HennYeoTotDom13}
M.~A. Henning, A.~Yeo, \href{https://doi.org/10.1007/978-1-4614-6525-6}{Total domination in graphs}, Springer-Verlag New York Inc., 2013.
\newline\urlprefix\url{https://doi.org/10.1007/978-1-4614-6525-6}

\bibitem{ZhuApproxMinTotDom09}
J.~Zhu, \href{https://doi.org/10.1145/1655925.1655948}{Approximation for minimum total dominating set}, in: S.~Sohn, L.~Chen, S.~Hwang, K.~Cho, S.~Kawata, K.~Um, F.~I.~S. Ko, K.~Kwack, J.~H. Lee, G.~Kou, K.~Nakamura, A.~C.~M. Fong, P.~C.~M. Ma (Eds.), Proceedings of the 2nd International Conference on Interaction Sciences: Information Technology, Culture and Human {(ICIS} 2009), Seoul, Korea, 24-26 November 2009, Vol. 403 of {ACM} International Conference Proceeding Series, 2009, pp. 119--124.
\newline\urlprefix\url{https://doi.org/10.1145/1655925.1655948}

\bibitem{ShaluSandhyaLowBound17}
M.~A. Shalu, S.~Vijayakumar, T.~P. Sandhya, \href{https://doi.org/10.1007/978-3-319-53007-9\_30}{A lower bound of the cd-chromatic number and its complexity}, in: D.~R. Gaur, N.~S. Narayanaswamy (Eds.), Algorithms and Discrete Applied Mathematics - Third International Conference, {CALDAM} 2017, Sancoale, Goa, India, February 16-18, 2017, Proceedings, Vol. 10156 of Lecture Notes in Computer Science, Springer, 2017, pp. 344--355.
\newline\urlprefix\url{https://doi.org/10.1007/978-3-319-53007-9\_30}

\bibitem{ShamSharCluGrMod04}
R.~Shamir, R.~Sharan, D.~Tsur, \href{https://doi.org/10.1016/j.dam.2004.01.007}{Cluster graph modification problems}, Discret. Appl. Math. 144~(1-2) (2004) 173--182.
\newline\urlprefix\url{https://doi.org/10.1016/j.dam.2004.01.007}

\bibitem{ChellaliVolkDomChr04}
M.~Chellali, L.~Volkmann, \href{https://doi.org/10.1016/S0012-365X(03)00093-1}{Relations between the lower domination parameters and the chromatic number of a graph}, Discret. Math. 274~(1-3) (2004) 1--8.
\newline\urlprefix\url{https://doi.org/10.1016/S0012-365X(03)00093-1}

\bibitem{ChellaliMaffDom12}
M.~Chellali, F.~Maffray, \href{https://doi.org/10.1007/s00373-010-1012-z}{Dominator colorings in some classes of graphs}, Graphs Comb. 28~(1) (2012) 97--107.
\newline\urlprefix\url{https://doi.org/10.1007/s00373-010-1012-z}

\bibitem{GarJohIntract79}
M.~R. Garey, D.~S. Johnson, \href{https://doi.org/10.1137/1024022}{Computers and Intractability: {A} Guide to the Theory of NP-Completeness}, W. H. Freeman, 1979.
\newline\urlprefix\url{https://doi.org/10.1137/1024022}

\bibitem{MycielColoriage55}
J.~Mycielski, \href{https://doi.org/10.4064/cm-3-2-161-162}{Sur le coloriage des graphs}, in: Colloquium Mathematicae, Vol.~3, 1955, pp. 161--162.
\newline\urlprefix\url{https://doi.org/10.4064/cm-3-2-161-162}

\bibitem{ShaluKiruP5Free22}
M.~A. Shalu, V.~K. Kirubakaran, \href{https://doi.org/10.1007/978-3-030-95018-7\_11}{On cd-coloring of {P}\({}_{\mbox{5}}\),{K}\({}_{\mbox{4}}\)-free chordal graphs}, in: N.~Balachandran, R.~Inkulu (Eds.), Algorithms and Discrete Applied Mathematics - 8th International Conference, {CALDAM} 2022, Puducherry, India, February 10-12, 2022, Proceedings, Vol. 13179 of Lecture Notes in Computer Science, Springer, 2022, pp. 127--139.
\newline\urlprefix\url{https://doi.org/10.1007/978-3-030-95018-7\_11}

\bibitem{CyganOthParamAlgo}
M.~Cygan, F.~V. Fomin, L.~Kowalik, D.~Lokshtanov, D.~Marx, M.~Pilipczuk, M.~Pilipczuk, S.~Saurabh, \href{https://doi.org/10.1007/978-3-319-21275-3}{Parameterized Algorithms}, Springer International Publishing AG, 2015.
\newline\urlprefix\url{https://doi.org/10.1007/978-3-319-21275-3}

\bibitem{DamDomChordBipGr90}
P.~Damaschke, H.~M{\"u}ller, D.~Kratsch, \href{https://doi.org/10.1016/0020-0190(90)90147-p}{Domination in convex and chordal bipartite graphs}, Information Processing Letters 36~(5) (1990) 231--236.
\newline\urlprefix\url{https://doi.org/10.1016/0020-0190(90)90147-p}

\bibitem{GrotSchrEllipsoid81}
M.~Gr{\"o}tschel, L.~Lov{\'a}sz, A.~Schrijver, \href{https://doi.org/10.1007/bf02579273}{The ellipsoid method and its consequences in combinatorial optimization}, Combinatorica 1 (1981) 169--197.
\newline\urlprefix\url{https://doi.org/10.1007/bf02579273}

\bibitem{SatPMadPLinAlgoSqGr16}
S.~Paul, M.~Pal, A.~Pal, \href{https://doi.org/10.1016/j.akcej.2016.02.007}{A linear time algorithm to compute square of interval graphs and their colouring}, {AKCE} Int. J. Graphs Comb. 13~(1) (2016) 54--64.
\newline\urlprefix\url{https://doi.org/10.1016/j.akcej.2016.02.007}

\bibitem{HopKarAlgoMaxMat73}
J.~E. Hopcroft, R.~M. Karp, \href{https://doi.org/10.1137/0202019}{An n\({}^{\mbox{5/2}}\) algorithm for maximum matchings in bipartite graphs}, {SIAM} J. Comput. 2~(4) (1973) 225--231.
\newline\urlprefix\url{https://doi.org/10.1137/0202019}

\bibitem{KralKraComplexColGr01}
D.~Kr\'{a}l, J.~Kratochv\'{\i}l, Z.~Tuza, G.~J. Woeginger, \href{https://dl.acm.org/doi/10.5555/647682.732477}{Complexity of coloring graphs without forbidden induced subgraphs}, in: Proceedings of the 27th International Workshop on Graph-Theoretic Concepts in Computer Science, WG '01, Springer-Verlag, Berlin, Heidelberg, 2001, p. 254–262.
\newline\urlprefix\url{https://dl.acm.org/doi/10.5555/647682.732477}

\bibitem{PoljakStabSet74}
S.~Poljak, \href{http://eudml.org/doc/16622}{A note on stable sets and colorings of graphs}, Commentationes Mathematicae Universitatis Carolinae 015~(2) (1974) 307--309.
\newline\urlprefix\url{http://eudml.org/doc/16622}

\bibitem{ChiarStrongCli21}
N.~Chiarelli, B.~Mart{\'\i}nez-Barona, M.~Milani{\v{c}}, J.~Monnot, P.~Mur{\v{s}}i{\v{c}}, \href{https://doi.org/10.1016/j.tcs.2020.12.001}{Strong cliques in diamond-free graphs}, Theoretical Computer Science 858 (2021) 49--63.
\newline\urlprefix\url{https://doi.org/10.1016/j.tcs.2020.12.001}

\bibitem{KohLalMaxWtIndSetAlg16}
E.~K{\"{o}}hler, L.~Mouatadid, \href{https://doi.org/10.1016/j.ipl.2015.12.001}{A linear time algorithm to compute a maximum weighted independent set on cocomparability graphs}, Inf. Process. Lett. 116~(6) (2016) 391--395.
\newline\urlprefix\url{https://doi.org/10.1016/j.ipl.2015.12.001}

\bibitem{GolumAlgoGrTh04}
M.~Golumbic, \href{https://books.google.co.in/books?id=8xo-VrWo5_QC}{Algorithmic Graph Theory and Perfect Graphs}, ISSN, Elsevier Science, 2004.
\newline\urlprefix\url{https://books.google.co.in/books?id=8xo-VrWo5_QC}

\bibitem{DamForbidOS90}
P.~Damaschke, \href{https://doi.org/10.1007/978-3-642-46908-4_25}{Forbidden ordered subgraphs}, in: Topics in Combinatorics and Graph Theory: Essays in Honour of Gerhard Ringel, Physica-Verlag HD, 1990, pp. 219--229.
\newline\urlprefix\url{https://doi.org/10.1007/978-3-642-46908-4_25}

\end{thebibliography}

\end{document}